\documentclass[11pt]{article} 

\usepackage[utf8]{inputenc}
\usepackage[T1]{fontenc}
\usepackage[pdftex,left=1in,top=1in,bottom=1in,right=1in]{geometry}

\usepackage{amsmath, amssymb, dsfont, mathrsfs,amsthm}

\usepackage{graphicx}
\usepackage{caption}
\usepackage{bm}
\usepackage{xcolor}
\usepackage[colorlinks=true,citecolor=blue,linkcolor=blue]{hyperref}
\usepackage{tikz}
\usetikzlibrary{calc,positioning,shapes,backgrounds, decorations.pathreplacing}
\usepackage{pgfplots}
\usepackage{subcaption}
\pgfplotsset{compat=1.18}

\usepackage{thm-restate}
\hypersetup{
    colorlinks,
    linkcolor={red!50!black},
    citecolor={blue!50!black},
    urlcolor={blue!80!black}
}
\usepackage[nameinlink, noabbrev, capitalize]{cleveref}
\AddToHook{cmd/appendix/before}{\crefalias{section}{appendix}}

\newcommand{\C}{\mathbb{C}}
\newcommand{\N}{\mathbb{N}}

\newcommand{\R}{\mathbb{R}}
\newcommand{\E}{\mathds{E}}
\newcommand{\cA}{\mathcal{A}}

\newcommand{\cC}{\mathcal{C}}

\newcommand{\cS}{\mathcal{S}}
\newcommand{\cT}{\mathcal{T}}

\newcommand{\cX}{\mathcal{X}}
\newcommand{\cY}{\mathcal{Y}}

\newcommand{\supp}{\operatorname{supp}}

\newcommand{\eps}{\varepsilon}

\newtheorem{theorem}{Theorem}[section]

\newtheorem{claim}[theorem]{Claim}
\newtheorem{lemma}[theorem]{Lemma}
\newtheorem{conjecture}[theorem]{Conjecture}
\newtheorem{corollary}[theorem]{Corollary}

\newtheorem{definition}[theorem]{Definition}
\newtheorem{remark}[theorem]{Remark}

\newcommand{\Ch}{\mathsf{Ch}}
\newcommand{\Geom}{\operatorname{Geom}}
\newcommand{\Capa}{\operatorname{Cap}}
\newcommand{\CPUC}{\operatorname{CPUC}}

\newcommand{\bits}{\{0,1\}}

\newcommand{\Mod}[1]{\ \mathrm{mod}\ #1}

\newcommand{\cN}{\mathcal{N}}

\newcommand{\Bin}{\mathsf{Bin}}
\newcommand{\Ber}{\mathsf{Ber}}

\newcommand{\KL}{D}

\newcommand{\calX}{\mathcal{X}}
\newcommand{\calY}{\mathcal{Y}}

\let\originalleft\left
\let\originalright\right
\renewcommand{\left}{\mathopen{}\mathclose\bgroup\originalleft}
\renewcommand{\right}{\aftergroup\egroup\originalright}

\newcommand{\capBAS}{C(p)}
\newcommand{\capBASfinite}{C(p, \tau)}

\newcommand{\ANSC}{\mathrm{ANSC}}
\newcommand{\ANMC}{\mathrm{ANMC}}

\newcommand{\dd}{\textnormal{d}}

\makeatletter
\def\blfootnote{\xdef\@thefnmark{}\@footnotetext}
\makeatother

\title{Capacity of Additive-Noise Sticky Channels\thanks{This is the extended version of a paper accepted to appear at ITW 2026.}
}

\allowdisplaybreaks

\author{Cécile Bouette\thanks{Equal contribution.} \thanks{Instituto de Telecomunicações and Departamento de Matemática, Instituto Superior Técnico, Universidade de Lisboa. \texttt{\{cecile.bouette,samuelpearson,jribeiro\}@tecnico.ulisboa.pt}.} \and Samuel Pearson\footnotemark[1] \footnotemark[2]  \and Roni Con\thanks{School of Electrical \& Computer Engineering, Tel Aviv
University. \texttt{ronicon@tauex.tau.ac.il}.} \and João Ribeiro\footnotemark[2]}

\date{}

\begin{document}
	
\maketitle

\begin{abstract}
Sticky channels, which never destroy nor create runs, are some of the simplest types of channels with synchronization errors (such as deletions, insertions, and replications). Despite their simplicity, we know little about the capacity of the sticky channels considered in the literature so far beyond what can be gleaned from purely numerical methods. Towards a broader and systematic exploration of sticky channels, we initiate the study of \emph{additive-noise sticky channels}, a basic family of sticky channels that extend each input \emph{run} by an amount independently sampled from a fixed noise distribution. The capacity of these channels corresponds to the capacity per unit cost of additive-noise memoryless channels over the non-negative integers, and they capture some aspects of the loss of synchronization caused by homopolymer length miscalls in DNA sequencing.

We first focus on the setting of Bernoulli additive noise with parameter $p$, and uncover curious behavior of the capacity beyond what numerical methods can tell us. For example, the capacity is constant when $p\in [1/\varphi^2,1/2]$ with $\varphi\approx 1.618$ the golden ratio, achieved by zero-error runlength-constrained coding, and behaves differently when $p\approx 0$ vs.\ when $p\approx 1$. More generally, we determine the capacity exactly for all $p\leq 1/2$, and when $1/2<p<1$ we give analytical bounds that allow us to characterize the asymptotic behavior of the capacity as $p\to 1$. We also extend our analysis to the setting where input strings have bounded runlengths, motivated by DNA-based data storage.

Then, we study additive-noise sticky channels beyond Bernoulli noise. We derive capacity lower bounds for all additive-noise sticky channels whose noise distributions have a given mean and support. These lower bounds allow us to characterize the regime where zero-error runlength-constrained coding is never optimal.
\end{abstract}

\newpage
\tableofcontents
\newpage

\section{Introduction}\label{sec:intro}

Synchronization errors like deletions, insertions, and replications represent a fundamental challenge in information theory.
Our understanding of even very simple synchronization error models remains poor, and most of the information-theoretic techniques developed over the past decades are not appropriate for dealing with such errors~\cite{Mit09,CR21}.

Some of the most basic channels with synchronization errors studied so far are \emph{sticky repeat channels}~\cite{Mit08,MTL12,ISW16,CR19b}.
These are channels that independently replicate each input bit a number of times dictated by some replication distribution, but always at least once.
A particularly simple example is the duplication channel, which writes an input bit twice at the output with probability $p$ and once with probability $1-p$, for some duplication probability $p\in[0,1]$.
Because they do not introduce deletions, the capacity of every sticky repeat channel is easily seen to equal the capacity \emph{per unit cost} (with cost function $c(x)=x$) of an associated discrete memoryless channel over the positive integers, by encoding every input string as a vector of its runlengths\footnote{A \emph{run} in a string is a maximal substring of the form $b^\ell$ for alphabet symbol $b$ and integer $\ell\geq 1$. Every string is uniquesly decomposed into concatenation of runs $0^{\ell_1} \circ 1^{\ell_2}\circ \cdots\circ 0^{\ell_r}$ and then the vector of its runlengths is $(\ell_1, \ldots, \ell_r)$.} (and assuming that all inputs start with a $0$, which does not affect the capacity)~\cite{Mit08}.
The capacity per unit cost with cost function $c(x)$ of a DMC $\Ch$ with input alphabet $\N_{>0}=\{1,2,\dots\}$ is given by
\begin{equation*}
    \CPUC_c(\Ch):=\sup_{P_X:\E[X]<\infty} \frac{I(X;Y)}{\E[c(X)]}, 
\end{equation*}
where the supremum is taken over all input random variables $P_X$ supported on $\N_{>0}$ with finite expectation, $Y$ denotes the output of $\Ch$ on input $X$, and $I$ denotes mutual information.
In this work we will only consider the special case where $c(x)=x$, and will refer to this simply by ``capacity per unit cost'' and denote it by $\CPUC(\Ch)$ from here onward.

For example, the capacity of the duplication channel with duplication probability $p$ equals the capacity per unit cost of the channel that on input 
an integer $n\geq 1$ outputs $n+\Bin(n,p)$, where $\Bin(n,p)$ is a random variable following a binomial distribution with $n$ trials and success probability $p$.
Note that this DMC captures how the length of a given input run is changed by this noisy process.

Despite their apparent simplicity, our understanding of the capacity of sticky repeat channels is still very poor.
In particular, the DMCs over the positive integers resulting from the equivalence above do not have a zero cost symbol (because runlengths are strictly positive), and so 
Verdú's simplified formula for computing the capacity per unit cost~\cite{Ver90} does not apply here.
Our goal in this work is to kickstart a more systematic exploration of sticky channels.
We leverage the connection between sticky channels and DMCs over the positive integers to identify and study what we see as a basic family of sticky channels, that we call \emph{additive-noise sticky channels}.
In certain instances, we are able to exactly determine their capacity.
As we discuss below, the capacity of these channels exhibits curious behavior, and they also capture aspects of the loss of synchronization stemming from sequencing errors in DNA-based data storage.

\subsection{Additive-noise sticky channels}

As discussed above, there is an equivalence between the capacity of sticky channels and the capacity per unit cost of DMCs over the positive integers.
Prior work has used the equivalence to help analyze the capacity of sticky repeat channels.
However, we can also turn this around and use the equivalence to find new interesting types of sticky channels by focusing on core families of DMCs, and then study their properties as a reasonable starting point towards a broader and more systematic exploration of sticky channels.

Additive-noise channels are one of the most fundamental and well-studied classes of memoryless channels.
An additive-noise channel over the reals is a memoryless channel that given input $x$ from some input alphabet $\cX\subseteq \R$ outputs $x+Z$ where $Z$ is a random variable distributed according to a noise distribution $\cN$ supported on a subset of $\R$, and ``$+$'' is the usual addition in $\R$.
In this context, most prior works have focused on the case where $\cX$ is continuous (in particular, where $\cX$ is $\R$ or the set of non-negative real numbers) and $\cN$ is a continuous distribution.\footnote{There are also well-studied discrete channels that can be framed as additive-noise channels by changing the addition operation. For example, the binary symmetric channel can be seen as an additive-noise channel if we take $\cX=\bits$, $\cN$ to be Bernoulli, and ``$+$'' to mean addition modulo $2$. However, here we are specifically interested in the case where ``$+$'' means addition in $\R$.}
Here, we are interested in the discrete counterparts of these channels, where the input alphabet $\cX$ is countable and the noise distribution $\cN$ is discrete.
In particular, we focus on the case where $\cX=\N_{>0}$ and the support of $\cN$ is a subset of $\N=\{0,1,2,\dots\}$.

\emph{Additive-noise sticky channels} are the sticky channel counterparts of additive-noise channels with positive-integer input and non-negative-integer noise, in the sense that they are the sticky channels whose capacity equals the capacity per unit cost (with cost function $c(x)=x$) of such an additive-noise DMC.
More precisely, we introduce the following definition.
\begin{definition}[Additive-noise sticky channels]
    An \emph{additive-noise sticky channel} is characterized by a noise distribution $\cN$ supported on $\N$.
    Given a binary input string $x\in\bits^n$ with $r=r(x)$ runs of lengths $\ell_1,\dots,\ell_r$, the additive-noise sticky channel with noise distribution $\cN$, which we denote by $\ANSC_{\cN}$, samples $Z_1,\dots,Z_r$ i.i.d.\ according to $\cN$ and replaces the $i$-th run of $x$ by a run of the same symbol of length
    \begin{equation*}
        L_i = \ell_i + Z_i
    \end{equation*}
    for each $i\in\{1,2,\dots,r\}$.
\end{definition}

We are interested in studying the (coding) capacity of $\ANSC_{\cN}$.
This capacity equals the capacity per unit cost of an additive-noise memoryless channel,
as described more explicitly in the following lemma.
Its proof is analogous to that of Mitzenmacher~\cite[Theorem 2.1]{Mit08}.
For completeness, we provide a proof in \cref{sec:ANSC-to-ANMC}.
\begin{lemma}\label{lem:ANSC-to-ANMC}
    Let $\cN$ be a distribution supported on $\N$.
    Then, the coding capacity of $\ANSC_{\cN}$ equals the capacity per unit cost of the memoryless channel $\ANMC_{\cN}$ with input alphabet $\N_{>0}$ which on input $n$ outputs $n+Z$ with $Z$ distributed according to $\cN$. 
\end{lemma}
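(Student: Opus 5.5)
We prove the two inequalities separately. Throughout, write $C$ for the coding capacity of $\ANSC_{\cN}$ and $C^\ast:=\CPUC(\ANMC_{\cN})$. The reduction is the runlength encoding. We assume without loss of generality that all inputs start with $0$, which costs at most one bit of rate and hence vanishes asymptotically. With this convention, an input string of length $n$ is in bijection with a composition $(\ell_1,\dots,\ell_r)$ of $n$. The channel output is, in law, the composition $(\ell_1+Z_1,\dots,\ell_r+Z_r)$. Crucially, the number of runs $r$ is preserved: the output has exactly $r$ runs, since each $\ell_i\ge 1$ and $Z_i\ge 0$, so no run is created or destroyed. Hence $\ANSC_{\cN}$ restricted to inputs with $r$ runs is exactly $r$ uses of the memoryless channel $\ANMC_{\cN}$, and the receiver observes $r$ directly. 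The only subtlety is that the cost, the total input length $\sum\ell_i=n$, is fixed while $r$ varies.

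\textbf{Achievability ($C\ge C^\ast$).} Fix $P_X$ on $\N_{>0}$ with $\E[X]<\infty$ and rate target $R<I(X;Y)/\E[X]$. We first truncate $P_X$ to a finite support $\{1,\dots,K\}$. By lower semicontinuity of mutual information and convergence of the truncated means, this loses arbitrarily little in the ratio, so we may take $P_X$ finitely supported. We then use the standard capacity-per-unit-cost random coding argument (as in Verdú / Mitzenmacher). Set $r=\lfloor n/(\E[X]+\delta)\rfloor$ and draw codewords as i.i.d.\ $P_X$ runlength vectors of length $r$. We keep only those whose sum is at most $n$, which happens with probability $\to 1$ by the law of large numbers. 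We then pad the last run to make the total length exactly $n$; the padding is a known deterministic modification the decoder can ignore by treating the last coordinate as erased, at negligible cost. The channel coding theorem for the DMC $\ANMC_{\cN}$ with $r$ uses then gives vanishing error for $rI(X;Y)(1-\delta)$ bits, i.e.\ rate approaching $I(X;Y)/\E[X]$ per input bit. An alternative is to invoke the general per-unit-cost coding theorem for DMCs with cost constraints and transfer it through the bijection.

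\textbf{Converse ($C\le C^\ast$).} Take a code of length $n$ with $M$ codewords, a uniform message, and input $X^n$. Let $R$ be the random number of runs and $L_1,\dots,L_R$ the runlengths. By Fano, $\log M(1-P_e)-1\le I(X^n;Y)$. Since $R$ is a function of both $X^n$ and $Y$,
\[
I(X^n;Y)=I(X^n;Y\mid R)+H(R)-H(R\mid Y)\le I(X^n;Y\mid R)+\log n .
\]
Conditioned on $R=r$, the channel is $r$ memoryless uses of $\ANMC_{\cN}$, so $I(X^n;Y\mid R=r)\le\sum_{i=1}^r I(L_i;L_i+Z_i\mid R=r)$. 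By the definition of $C^\ast$, each term satisfies $I(L_i;Y_i\mid R=r)\le C^\ast\,\E[L_i\mid R=r]$; this needs the per-unit-cost bound $I(X;Y)\le C^\ast\E[X]$, which holds by definition of the supremum. Summing over $i$ and averaging over $r$ gives $I(X^n;Y\mid R)\le C^\ast\,\E\big[\sum_i L_i\big]=C^\ast n$. Therefore $\log M\le (C^\ast n+\log n+1)/(1-P_e)$, and the rate bound follows as $n\to\infty$ with $P_e\to0$. The hard part of the converse is exactly this handling of the variable number of runs, which costs only $\log n$.

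The main obstacle is the achievability bookkeeping: reconciling the fixed total length $n$ with i.i.d.\ runlength codewords of random sum, and the truncation to finite support. We will handle both by typicality plus padding, as above, following Mitzenmacher's proof of \cite[Theorem 2.1]{Mit08} essentially verbatim.
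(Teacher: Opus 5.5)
Your argument is correct, but despite your closing remark it is not the argument the paper reproduces from Mitzenmacher. The paper proves the lemma as an \emph{operational} equivalence. It uses Verd\'u's operational definition of capacity per unit cost (codes with total cost at most $\nu$) and Verd\'u's result that the cost may be taken linear in the blocklength. It then transports codes in both directions through the runlength bijection, using expurgation and a pigeonhole on either the binary codeword length (for $\ANMC_{\cN}\to\ANSC_{\cN}$) or the number of runs (for $\ANSC_{\cN}\to\ANMC_{\cN}$), each losing $O(\log n)$ bits. The identification with $\sup_{P_X} I(X;Y)/\E[X]$ is inherited from Verd\'u's Theorem 2 rather than proved.

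Your converse is the information-theoretic counterpart of the second pigeonhole. Conditioning on $R$ in a Fano bound costs $H(R)\le\log n$. Bounding each per-run term by $C^\ast\,\E[L_i\mid R=r]$ uses the informational definition directly, so you need no Verd\'u machinery at all.

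Your achievability builds codes from scratch: truncation, i.i.d.\ runlengths with fixed $r$, removal of over-long codewords, and padding of an ignored last run. A side benefit is that you obtain codes of every exact blocklength $n$, whereas the paper's pigeonhole only yields some length $\ell\in[n,\nu]$.

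The price is that you must supply what the paper outsources to Verd\'u. First, the truncation step. Second, a coding theorem for a finite-input memoryless channel with countably infinite output alphabet, where $H(\cN)$ may even be infinite. There you should use an information-density (Feinstein-type) argument, whose normalized sum concentrates by the weak law since $\E|i(X;Y)|<\infty$, or quantize the outputs to a finite alphabet, rather than entropy-based joint typicality. Third, the removal of codewords with $\sum_i L_i>n$ should be made precise via a Markov bound on the random-code ensemble.
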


\subsection{Additive-noise sticky channels and homopolymer length miscalls in DNA-based data storage}

DNA-based data storage is an emerging technology boasting much higher storage density and durability than currently used systems.
In those systems, data (encoded in a $4$-letter alphabet) is synthesized into DNA strands, with each letter mapped to a \emph{nucleotide}.
Then, sequencing technologies are used to recover stored data from these strands.
DNA sequencing is an error-prone process that introduces at the base level various types of errors, from substitutions to errors that cause loss of synchronization, such as deletions, insertions, and replications.

Long runs of nucleotides (called \emph{homopolymers}, corresponding to long runs in the stored string) are particularly problematic for sequencing technologies.
Loss of synchronization due to homopolymer length miscalls is a well-documented error type across DNA sequencing platforms, predominantly under-calls in nanopore-based sequencing~\cite{YGM17,WGGH23}, length miscalls in both directions in flow-based platforms such as Ion Torrent (where it is the dominant type of error~\cite{BSBHT13}), and structurally important even on Illumina~\cite{WGGH23}.
For example, in nanopore-based sequencing stored symbols are inferred from an electrical current signal generated as a DNA strand sequentially passes through a biological pore. The measured signal is continuous and, at any given time, depends on a short window of consecutive nucleotides within the pore.
The window shifts by one position at random, and these shifts are detected by changes in the current signal.
Long homopolymers generate an almost constant signal, and their length must be inferred from the duration of this signal.
Since shifting is not a deterministic process, there is a chance of under- or over-estimating the length of the homopolymer (equivalently, the length of the corresponding run in the stored string), leading to loss of synchronization~\cite{RKR18}.
Overall, homopolymers increase the likelihood of synchronization errors, with error rates typically increasing with the run length~\cite{YGM17,WGGH23,CR26,CMRS26}.

We see additive-noise sticky channels as a toy model that captures important aspects of the loss of synchronization inherent to homopolymer length miscalls. 
It is not a faithful model of any end-to-end sequencing technology, but rather a tractable simplification that may be a reasonable starting point for studying the effect of homopolymer length miscalls on the system's performance.
More concretely, the model makes some simplifying assumptions compared to practice.
First, it only considers replications (i.e., overestimation of homopolymer lengths).
Second, it assumes that the number of errors within a run does not depend on its length nor on the number of errors in surrounding runs.
Third, it considers binary (instead of quaternary) input symbols.
Some other recent works have considered non-sticky synchronization errors and runlength-dependent error probabilities in the context of homopolymer length miscalls, but they either focus on the incomparable setting of adversarial errors~\cite{CMRS26} or only provide loose capacity lower bounds~\cite{CR26}.
As we shall see below, we are able to carry out a much sharper study of the properties of our channels (naturally, because our model is simpler).

\subsection{An overview of our contributions}

\subsubsection{The Bernoulli additive-noise sticky channel}

We begin by focusing on the simplest case of Bernoulli noise.
That is, we take $\cN$ to be a Bernoulli distribution which outputs $1$ with  probability $p\in[0,1]$ and $0$ with probability $1-p$.
We denote the corresponding channel by $\ANSC_{\Ber(p)}$ and its coding capacity by $C(p)$.
By \cref{lem:ANSC-to-ANMC}, $C(p)=\CPUC(\ANMC_{\Ber(p)})$, where we recall that $\ANMC_{\Ber(p)}$ is the additive-noise memoryless channel that on input $n\in\N_{>0}$ outputs $n+Z$ with $Z$ being a Bernoulli random variable with success probability $p$.

As with all sticky channels studied so far, it is not hard to numerically approximate $C(p)$ for any given $p$ with great accuracy (we discuss this in \cref{sec:numerical}).
However, such an approach provides limited insights about the channel. 
Our goal, then, is to develop an understanding of the capacity that is out of reach of numerical methods, either by determining $C(p)$ exactly or by deriving asymptotic results.

For the $\ANSC_{\Ber(p)}$ there is a simple baseline lower bound on $C(p)$ obtained by considering only input strings $x\in\bits^n$ with runs of odd length.
It is clear that such a code is zero-error over the $\ANSC_{\Ber(p)}$, since an input run of length $\ell$ is mapped to a run of length either $\ell$ or $\ell+1$ at the output.
As we show in \cref{sec:opt-rlc}, this naive \emph{runlength-constrained} coding strategy, which we will shorthand as \emph{``naive RLC coding''} throughout our work, yields the lower bound
\begin{equation*}
    C(p)\geq \log \varphi \approx 0.694
\end{equation*}
valid for any $p\in[0,1]$, where $\varphi=\frac{1+\sqrt{5}}{2}\approx 1.618$ is the golden ratio (the unique positive solution to $x^2-x-1=0$) and $\log$ is the base-$2$ logarithm.
Some questions arise:
\begin{enumerate}
    \item Is naive RLC coding ever optimal?

    \item How much better can we do than naive RLC coding?
\end{enumerate}

\paragraph{Optimality of naive RLC coding.}

We settle the first question by characterizing the values of $p$ for which naive RLC coding is optimal.

\begin{restatable}{theorem}{optrlc}\label{thm:opt-rlc}
    We have $C(p)\geq \log\varphi$ for all $p\in[0,1]$, and $C(p)=\log\varphi$ if and only if $p\in\left[\frac{1}{\varphi^2}\approx 0.382,\frac{1}{2}\right]$.
\end{restatable}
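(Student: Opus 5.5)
The plan is to use the standard variational characterization of capacity per unit cost for $\ANMC_{\Ber(p)}$, with the transition law $W_n=(1-p)\delta_n+p\,\delta_{n+1}$ and $\lambda:=\log\varphi$. The characterization says $C(p)\le\lambda$ if and only if there is an output distribution $Q$ on $\N_{>0}$ with $\KL(W_n\|Q)\le \lambda n$ for every $n\ge 1$. Indeed, $I(X;Y)\le \E[\KL(W_X\|Q)]\le\lambda\E[X]$ for every input $X$. Conversely, the KKT conditions for maximizing $I(X;Y)-\lambda\E[X]$ show that an input is optimal exactly when its induced output $P_Y$ satisfies $\KL(W_n\|P_Y)\le\lambda n$, with equality on its support.

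\emph{Lower bound.} Take $X$ supported on the odd integers with $P(n)=\varphi^{-n}$. This is a valid distribution because $\sum_{n\text{ odd}}x^n=x/(1-x^2)=1$ at $x=1/\varphi$, using $x^2+x=1$. The code is zero-error, since runs of length $n$ and $n+1$ never collide across odd $n$. Hence $I(X;Y)=H(X)=\lambda\E[X]$, which gives $C(p)\ge\log\varphi$ for all $p$.

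\emph{Upper bound for $p\in[1/\varphi^2,1/2]$.} I will use a dual output distribution with parity-dependent weights: $Q(m)=\alpha\varphi^{-m}$ for odd $m$ and $Q(m)=\beta\varphi^{-m}$ for even $m\ge2$. Normalization reads $\alpha+\beta/\varphi=1$. A direct computation gives
\begin{equation*}
\KL(W_n\|Q)=-h(p)+(n+p)\log\varphi-(1-p)\log a_n-p\log a_{n+1},
\end{equation*}
where $a_n\in\{\alpha,\beta\}$ according to the parity of $n$. The dependence on $n$ beyond the $n\log\varphi$ term drops out. We are therefore left with two scalar inequalities, one for odd $n$ and one for even $n$:
\begin{equation*}
p\log\varphi-(1-p)\log\alpha-p\log\beta\le h(p),\qquad p\log\varphi-p\log\alpha-(1-p)\log\beta\le h(p).
\end{equation*}
I will choose $\alpha,\beta$ with $\alpha+\beta/\varphi=1$, optimizing the worse of the two inequalities; a natural first try is to make both hold with equality. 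I then expect to verify that a feasible pair exists exactly when $1/\varphi^2\le p\le 1/2$. The parameter $p=1/2$ should correspond to the even-$n$ constraint becoming tight as $\beta$ is pushed toward the primal value. The value $1/\varphi^2$ should be where the odd-$n$ constraint binds, since there the primal output law $P_Y(m)\propto\varphi^{-m}(1-p\ \text{or}\ p)$ matches the dual. If the two-parameter family turns out to be too rigid at the endpoints, I would instead take $Q$ to be exactly the output law $P_Y^*$ of the odd input above. In that case the odd-$n$ inequalities hold with equality automatically, and only the even $n$ need checking; the ratio $P_Y^*(n+1)/P_Y^*(n)$ alternates between two constants, so this again reduces to one inequality in $p$.

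\emph{Strictness outside the interval.} For $p\notin[1/\varphi^2,1/2]$ I will show that $P^*$ (the odd input) violates the KKT condition, namely that $\KL(W_n\|P_Y^*)>\lambda n$ for some even $n$. Perturbing $P^*$ by moving a small mass $\epsilon$ to that $n$ then increases $I(X;Y)-\lambda\E[X]$ at first order, which gives $C(p)>\lambda$. By the computation above this is the same scalar inequality in $p$, so the "if and only if" follows from a single sign analysis of a function of $p$. That analysis includes checking that the function's zeros are exactly $1/\varphi^2$ and $1/2$, using $\varphi^2=\varphi+1$. I expect this step to be the main obstacle: pinning down the optimal dual $(\alpha,\beta)$, or equivalently showing that the entropy inequalities involving $h(p)$ reduce exactly to the algebraic thresholds $1/\varphi^2$ and $1/2$, and handling $p=0,1$ separately if needed.
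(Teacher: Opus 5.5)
Your proposal is correct and is essentially the paper's proof: the odd-support input $\varphi^{-n}$ gives the zero-error lower bound, and checking the divergences against its own output law $P_Y^*$ reduces to the single scalar condition $(2p-1)\log\frac{\varphi p}{1-p}\le 0$, which makes the sign analysis you expected to be hard immediate; the converse is the necessary KKT condition of \cref{lem:AG_necessary}, which is exactly your perturbation argument. Your two-parameter family adds nothing: by Gibbs' inequality the odd-$n$ constraint alone forces $(\alpha,\beta)=(1-p,p\varphi)$, i.e.\ $Q=P_Y^*$, so your ``fallback'' is the only feasible choice, and making both constraints tight is possible only at the two endpoints $p=1/\varphi^2$ and $p=1/2$.
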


For more details, see \cref{sec:opt-rlc}.
In particular, this means that naive RLC coding is optimal even when the noise is somewhat biased towards $0$, but immediately becomes sub-optimal when the noise is biased towards $1$.

\paragraph{The $p\leq 1/2$ setting.}

With the first question answered, we move towards understanding the second question.
We go beyond \cref{thm:opt-rlc} and determine $C(p)$ (along with a capacity-achieving input distribution) for all $p\leq 1/2$.

\begin{restatable}{theorem}{capbelowhalf}\label{thm:cap-below-half}
    When $p\leq 1/2$ we have
    \begin{equation*}
        C(p) = \begin{cases}
            \log\varphi, &\textrm{ if $p\in\left[\frac{1}{\varphi^2},\frac{1}{2}\right]$,}\\
            \log x_p, &\textrm{ if $p\in[0,\frac{1}{\varphi^2})$,}
        \end{cases}
    \end{equation*}
    where $x_p$ is the unique solution to the equation $x - 2^{-h(p)} x^p - 1 = 0$
    with $h(p)=-p\log p-(1-p)\log(1-p)$ the binary entropy function.
\end{restatable}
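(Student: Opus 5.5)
The plan is to use the standard duality for capacity per unit cost together with \cref{lem:ANSC-to-ANMC}. Write $W_n$ for the output law of $\ANMC_{\Ber(p)}$ on input $n$, so $W_n(n)=1-p$ and $W_n(n+1)=p$.

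\textbf{Upper bound.} For any output distribution $Q$ on $\N_{>0}$ and any input $P_X$ we have $I(X;Y)\le \sum_n P_X(n)\KL(W_n\|Q)$. Hence
\begin{equation*}
  C(p)\le \sup_{n\ge1}\frac{\KL(W_n\|Q)}{n}.
\end{equation*}
I will first take a geometric $Q(y)=(x-1)x^{-y}$ for $y\ge 1$. A direct computation gives
\begin{equation*}
  \KL(W_n\|Q) = -h(p) - \log(x-1) + (n+p)\log x,
\end{equation*}
so $\KL(W_n\|Q)\le n\log x$ for every $n$ exactly when $x-1\ge 2^{-h(p)}x^p$. Choosing $x=x_p$ gives $C(p)\le\log x_p$ for every $p$. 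Uniqueness of $x_p$ should follow from convexity/monotonicity of $x-2^{-h(p)}x^p-1$ on $x>1$, using $p<1$.

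\textbf{Lower bound for $p<1/\varphi^2$.} I want an input law $P$ whose output is exactly the geometric $Q$ above with $x=x_p$. For such $P$, every $n$ in the support has $\KL(W_n\|Q)=n\log x_p$, so $I(X;Y)/\E[X]=\log x_p$. The condition $Q(y)=(1-p)P(y)+pP(y-1)$ determines $P$ by the recursion
\begin{equation*}
  P(y)=\frac{Q(y)-pP(y-1)}{1-p}, \qquad P(0)=0.
\end{equation*}
This is a first-order linear recurrence, so $P(y)$ has the closed form
\begin{equation*}
  P(y)=\frac{x-1}{1-p}\cdot\frac{x^{-y}-(-r)^y}{1+rx},\qquad r=\frac{p}{1-p},
\end{equation*}
with the sign to be checked. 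Nonnegativity for all $y$ then reduces to $rx_p\le 1$, because of alternating-sign parity issues. I also need to allow for the possibility that the correct positivity condition is a weaker one coming from the first few terms.

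The key verification is that this positivity condition is equivalent to $p\le 1/\varphi^2$. At $p=1/\varphi^2$ one checks $x_p=\varphi$ and $r=1/\varphi$, so $rx=1$, which is the expected boundary. I would prove monotonicity in $p$ of $rx_p$ via implicit differentiation of $x-2^{-h(p)}x^p-1=0$.

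\textbf{The flat regime $p\in[1/\varphi^2,1/2]$.} The lower bound $\log\varphi$ is given by \cref{thm:opt-rlc}. Here the geometric $Q$ is no longer an achievable output, so the upper bound needs a different $Q$. I will take $Q$ to be the output law induced by the naive RLC input: odd lengths with $P(n)\propto\varphi^{-n}$, suitably normalized so that $\sum P(n)=1$ over odd $n$. Then $Q(y)$ on even and odd $y$ is explicit, and for odd $n$ we get $\KL(W_n\|Q)=n\log\varphi$ exactly. For even $n$, the inequality $\KL(W_n\|Q)\le n\log\varphi$ becomes a scalar inequality in $p$ after the $n$-dependence cancels (both outputs scale like $\varphi^{-n}$). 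That inequality should hold precisely when $p\le1/2$, comparing $h(p)$-type terms with cross-entropy against $Q$'s even/odd weights, and the case $p=1/2$ gives equality by symmetry.

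\textbf{Main obstacle.} I expect the hardest step to be this last one, verifying the even-$n$ inequality uniformly on $[1/\varphi^2,1/2]$, along with the exact positivity analysis of the recursion in the second step. For the former I would try first to reduce it to a one-variable function of $p$ and show it is concave with nonpositive endpoint values.
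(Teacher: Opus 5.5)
Your proposal is correct and follows the paper's proof essentially step for step. The geometric $Q(y)=(x_p-1)x_p^{-y}$ is exactly the paper's $\beta=0$ output law, your recursion reproduces the paper's generating-function coefficients with the same positivity condition $x_p\le (1-p)/p$, and the flat regime uses the naive-RLC certificate from \cref{thm:opt-rlc}. Two minor simplifications: the threshold follows directly (as in the paper) by evaluating the decreasing map $x\mapsto 2^{-h(p)}x^p/(x-1)$ at $x=(1-p)/p$, which gives $p(1-p)/(1-2p)\le 1\iff p\le 1/\varphi^2$ with no implicit differentiation needed. Also, the even-$n$ excess factors as $(2p-1)\log\frac{\varphi p}{1-p}$, which is nonpositive exactly on $[1/\varphi^2,1/2]$ (not for all $p\le 1/2$ as you guessed), so no concavity argument is required.
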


For more details, see \cref{sec:genbound,sec:p-leq-half}.

\paragraph{The $p>1/2$ setting.}
In this regime we are not able to exactly determine $C(p)$, but we derive sharp bounds that, in particular, allow us to pinpoint the asymptotic behavior of $C(p)$.

\begin{restatable}{theorem}{caplarge}\label{thm:cap-ub-largep}
    Fix $p\in(1/2,1]$.
    For $\beta\in\R$, denote by $\lambda(p,\beta)$ the unique solution to the equation
    \begin{equation*}
        \sum_{n=1}^\infty 2^{-\lambda(p,\beta)(n-p)-h(p)-\left(-\frac{1-p}{p}\right)^{n-1} \beta}=1.
    \end{equation*}
    Then, $C(p)\leq \inf_{\beta\in\R}\lambda(p,\beta)$.
\end{restatable}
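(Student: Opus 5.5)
We bound $C(p)=\CPUC(\ANMC_{\Ber(p)})$ (by \cref{lem:ANSC-to-ANMC}) with the standard duality upper bound for capacity per unit cost. For any output distribution $Q_Y$ on $\N_{>0}$ and any $\lambda$ such that
\begin{equation*}
D\left(P_{Y|X=n}\,\middle\|\,Q_Y\right)\leq \lambda n \quad\text{for all } n\geq 1,
\end{equation*}
we get $I(X;Y)\leq \E[D(P_{Y|X}\|Q_Y)]\leq \lambda\E[X]$. Hence $C(p)\leq\lambda$. More flexibly, it suffices that $D(P_{Y|X=n}\|Q_Y)\leq \lambda n + g(n)$ for a function $g$ whose contribution vanishes in the capacity per unit cost. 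The natural relaxation here is $D(P_{Y|X=n}\|Q_Y)\leq \lambda(n-p)+\text{const}$, using the fact that $\E[Y]=\E[X]+p$. I will aim for exact equality in a tilted form, so that the defining equation of $\lambda(p,\beta)$ appears as the normalization of $Q_Y$.

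Concretely, on input $n$ the output is $n$ with probability $1-p$ and $n+1$ with probability $p$. Hence
\begin{equation*}
D(P_{Y|X=n}\|Q_Y)=-h(p)-(1-p)\log Q(n)-p\log Q(n+1).
\end{equation*}
I choose $Q$ through a potential sequence $a_n=-\log Q(n)$, and require
\begin{equation*}
(1-p)a_n+p\,a_{n+1}=\lambda n+c \quad\text{for all } n\geq 1.
\end{equation*}
This is a first-order linear recurrence. Its general solution is a particular affine solution plus a homogeneous term proportional to $\left(-\frac{1-p}{p}\right)^{n}$. The affine particular solution is $a_n=\lambda(n-p)+\text{const}$. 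The free homogeneous coefficient is what becomes $\beta$, giving
\begin{equation*}
Q(n)=2^{-\lambda(n-p)-h(p)-\left(-\frac{1-p}{p}\right)^{n-1}\beta}\cdot 2^{-\gamma}
\end{equation*}
for some normalizing shift $\gamma$. The point of the particular scaling in the statement, with $h(p)$ inside the exponent, is that choosing $\gamma=0$ makes $D(P_{Y|X=n}\|Q)$ exactly equal to $\lambda n$. One checks this: plugging in, the $\beta$-terms cancel because $(1-p)r^{n-1}+p\,r^{n}=0$ for $r=-\frac{1-p}{p}$, and the $h(p)$ terms cancel. So $Q$ is a valid distribution precisely when $\lambda=\lambda(p,\beta)$ solves $\sum_n Q(n)=1$, and then $C(p)\leq\lambda(p,\beta)$ for every $\beta$. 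Taking the infimum gives the statement.

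There is one subtlety. The output $Y$ ranges over $n\geq 1$ and includes $n+1$, so $Q$ must be defined on all of $\N_{>0}$, which it is.

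The remaining step, which I expect to be the main obstacle, is showing that $\lambda(p,\beta)$ is well defined and unique. For $p>1/2$ we have $|r|<1$, so the homogeneous term is bounded by $|\beta|$. The series therefore converges for every $\lambda>0$. It is continuous and strictly decreasing in $\lambda$, tends to $\infty$ as $\lambda\to 0^+$, and tends to $0$ as $\lambda\to\infty$. This gives a unique root by the intermediate value theorem, and it is exactly why $p>1/2$ is needed: for $p<1/2$ the term $r^{n-1}\beta$ blows up with alternating sign and the sum diverges unless $\beta=0$. The case $p=1$ has $r=0$ and is handled directly. Finally, the duality bound applies to any input with $\E[X]<\infty$, since the mutual information is finite. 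Hence the bound holds for the supremum defining $\CPUC$.
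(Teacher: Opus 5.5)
Your argument is correct and is essentially the paper's proof: the paper also applies the duality bound of \cref{lem:CPUC-UB-framework} to $Q(y)=2^{-\lambda(y-p)-h(p)-g(y)}$, solves $(1-p)g(x)+p\,g(x+1)=0$ to get $g(x)=\left(-\frac{1-p}{p}\right)^{x-1}\beta$ so that $\KL(P_{Y_x}\|Q)=\lambda x$ exactly (\cref{thm:genbound}), and then uses $\left|\frac{1-p}{p}\right|<1$ for $p>1/2$ to obtain a unique normalizing $\lambda$ for each $\beta$. One small caveat, which the paper also glosses over: at $p=1$ the $n=1$ term equals $2^{-\beta}$ whatever $\lambda$ is, so the sum tends to $2^{-\beta}$ rather than $0$ as $\lambda\to\infty$, and a solution exists only for $\beta>0$; the infimum must therefore be taken over those $\beta$, and it still equals $1=C(1)$.
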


For more details, see \cref{sec:genbound,sec:p-geq-half}.
Based on a comparison with numerical approximations of $C(p)$ in \cref{sec:numerical}, we conjecture that the upper bound in \cref{thm:cap-ub-largep} is actually the capacity.
We formalize this in \cref{conj:cap-largep} and leave it as an interesting direction for future work.

We use \cref{thm:cap-ub-largep} to better understand the asymptotic behavior of $C(p)$ when $p\to 1$ and when $p\to \frac{1}{2}$ from above:
\begin{itemize}
    \item In the regime where $p=1-\eps$ with $\eps\to 0$, we can show using \cref{thm:cap-ub-largep} and a geometric input distribution that
\begin{equation*}
    C(1-\eps) = 1-(1+o(1))\frac{h(\eps)}{4},
\end{equation*}
where $o(1)\to 0$ as $\eps\to 0$.
For details, see \cref{sec:p-to-1}.
It is interesting to compare $C(1-\eps)$ and $C(\eps)$ when $\eps\to 0$, as perhaps one would expect the capacity to behave similarly in both of these regimes.
However, it is not hard to see that $C(\eps)=1-(1+o(1))\frac{h(\eps)}{2}$, and so $C(1-\eps)>C(\eps)$ for all sufficiently small $\eps$.

\item In the regime where $p=\frac{1}{2}+\eps$ with $\eps\to 0$ from above we also obtain two-sided bounds on $C(p)$ that, while not being sharp, give us concrete estimates on the strict improvement over naive RLC coding.
More precisely, we show that
\begin{equation*}
     \log \varphi \cdot \left( 1 + \log \varphi \cdot  \frac{4(3-\varphi)\ln 2}{5\sqrt{5}(\varphi-1)}\cdot \eps^{2} \right)  -O\left(\eps^{3}\right) \leq C\left(\frac{1}{2}+\eps\right) \leq \log\varphi\cdot \left(1+\frac{2\eps}{2+\varphi}\right)+O\left(\eps^2\right).
\end{equation*}
The lower bound is obtained by considering an appropriate convex combination of geometric-like input distributions.
For details, see \cref{sec:p-to-half}.
\end{itemize}

\paragraph{Capacity under bounded runlengths.}

Motivated by the connection to DNA data storage, where data has to be encoded into codewords with bounded runlengths due to various system constraints, we extend our study of the $\ANSC_{\Ber(p)}$ to the setting where inputs $x\in\bits^n$ only have runs of length at most some integer $\tau\geq 1$.
When the noise is Bernoulli with success probability $p$, we denote the capacity in this setting by $C(p,\tau)$.
As before, our goal is to obtain insights about $C(p,\tau)$ beyond what can be extracted from purely numerical methods.

When $\tau$ is an odd integer, we prove analogs of \cref{thm:opt-rlc,thm:cap-below-half} for $C(p,\tau)$, in particular characterizing the region where naive RLC coding is optimal. For $p \leq \frac{1}{2}$, we show that $C(p,\tau)$ approaches $C(p)$ exponentially fast in $\tau$.
When $\tau$ is even, we find that naive RLC coding is never optimal. Thus, we are not able to determine the exact capacity for all values of $\tau$ when $p$ belongs to the interval given in \Cref{thm:opt-rlc}. Moreover, for $p$ in the interior of that interval, we prove that the capacity-achieving distribution also does not have full support, as long as $\tau$ is even and large enough.
Details can be found in \cref{sec:cap-rlc}.

\subsubsection{General additive-noise sticky channels}

In the second part of our work we study additive-noise sticky channels beyond Bernoulli noise.
First, we give a universal lower bound on the capacity of any additive-noise sticky channel whose noise distribution has a prescribed mean $\mu$ that cannot be improved in general.
We then derive sharper lower bounds on the capacity of additive-noise sticky channels whose noise distribution has \emph{bounded support} and prescribed mean, and completely characterize the regime where naive RLC coding is never optimal for such channels.
We obtain such bounds by showing that the worst-case noise in each of these settings is of geometric type, in the spirit of the result (and argument) of Bedekar and Azizoglu for discrete-time queues~\cite{BA98}.

\paragraph{Capacity lower bounds for any additive-noise sticky channel with prescribed noise mean.}

Fix an arbitrary noise distribution $\cN$ with mean $\mu$ and denote the capacity of the associated $\ANSC_{\cN}$ by $\Capa(\cN)$.
The next theorem shows that geometric noise is worst-case and gives a universal lower bound on $\Capa(\cN)$ for all such $\cN$.
\cref{fig:lower_bound_capacity_infinite_support} plots this bound as a function of $\mu$.
For more details, see \cref{sec:universal-LB}.

\begin{restatable}{theorem}{lbmean}
\label{thm:lower_bound_capacity_noise_infinite_support}
Let $\Geom_\mu$ denote the geometric distribution supported on $\N$ of mean $\mu>0$.
Then, for all noise distributions $\cN$ of mean $\mu$ it holds that
\begin{align}
\label{eq:lower_bound_capacity_noise_infinite_support}
    \Capa(\cN)\geq \Capa(\Geom_\mu) = \alpha,
\end{align}
where $\alpha>0$ is the smallest positive solution to
\begin{align}
    \label{equation_tau_star}
    \frac{2^{\alpha\mu-(1+\mu)h\left(\frac{1}{1+\mu}\right)}}{2^\alpha-1} = 1.
\end{align}
\end{restatable}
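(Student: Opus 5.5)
By \cref{lem:ANSC-to-ANMC}, $\Capa(\cN)=\CPUC(\ANMC_\cN)=\sup_{P_X} I(X;X+Z)/\E[X]$. The plan has two parts. First, show that geometric noise is the worst case among all noise distributions of mean $\mu$, in the spirit of Bedekar--Azizoglu~\cite{BA98}. Second, compute $\Capa(\Geom_\mu)$ exactly by exhibiting a matching input distribution and upper bound.

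\textbf{Worst-case step.} Fix an input law $P_X$ on $\N_{>0}$ with finite mean. Write $Y=X+Z$ with $Z\sim\cN$ independent of $X$. Then
\[
I(X;Y)=H(Y)-H(Z).
\]
Among distributions on $\N$ of mean $\mu$, $\Geom_\mu$ maximizes entropy, with $H(\Geom_\mu)=(1+\mu)h\!\left(\frac{1}{1+\mu}\right)$. So replacing $\cN$ by geometric noise raises the subtracted term the most. The difficulty is that $H(Y)$ also changes.

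To handle this, I would use a "matched" input. Take $P_X$ to be the input that is good for geometric noise, chosen so that $Y$ under geometric noise is itself (shifted) geometric-like, with $H(Y)$ maximal given $\E[Y]=\E[X]+\mu$. Then with general noise $\cN$ of mean $\mu$, I would lower bound $H(Y)$ by the cross-entropy trick:
\[
H(Y)\ge -\E\bigl[\log Q(Y)\bigr] - D(P_Y\|Q),
\]
where $Q$ is the output law under geometric noise. The aim is to choose $Q$ log-linear in $y$ on its support, so that $-\E[\log Q(Y)]$ depends only on $\E[Y]$ and is identical for all noises of mean $\mu$.

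Cleaner alternative, which I would actually try first: directly show that for each fixed $\lambda$, the quantity $\sup_{P_X}\{I(X;Y)-\lambda\E[X]\}$ is minimized at geometric noise. Test any fixed input $P_X$ with $Y$ supported on $\N_{\ge 1}$. Since $Y\ge 1$ and $\E[Y]=\E[X]+\mu$, we get
\[
H(Y)\le H(\text{geom on }\N_{\ge1}\text{ of that mean}),
\]
and equality is approachable when the input is chosen to make $Y$ geometric. So the problem is really to show there is an input making $H(X+Z)$ large for every $Z$. I expect this to be the main obstacle. For general $\cN$, one cannot make $X+Z$ exactly geometric. I would circumvent it with a random-coding/typicality argument: evaluate $I$ with input i.i.d.\ from the capacity-achieving input for geometric noise, and use the decomposition $I\ge$ (a bound via $\E[\log Q]$ linear in $Y$) minus $H(Z)$, exploiting that $\log Q$ is affine on $\N_{\ge1}$ if $Q$ is geometric on $\N_{\ge1}$.

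\textbf{Geometric computation.} For $Z\sim\Geom_\mu$, take the input $P_X(n)=2^{-\alpha n+c}$. Compute $I/\E[X]$ using the geometric shape. The achievability condition should reduce to the requirement that $\sum_n 2^{-\alpha n+\alpha\mu - H(Z)}=1$, which gives
\[
\frac{2^{\alpha\mu-(1+\mu)h(1/(1+\mu))}}{2^{\alpha}-1}=1.
\]
For the converse, use the standard duality for capacity per unit cost: $\Capa\le\lambda$ whenever there is an output law $Q$ with $D(P_{Y|X=n}\|Q)\le\lambda n$ for all $n$. Choosing $Q$ geometric on $\N_{\ge1}$ of the appropriate ratio makes $D(P_{Y|X=n}\|Q)-\lambda n$ constant in $n$. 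Taking the smallest positive root of \eqref{equation_tau_star} then gives matching bounds, so $\Capa(\Geom_\mu)=\alpha$.
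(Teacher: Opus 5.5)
Your proposal has the right two-part structure, but both parts have a genuine gap.

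\textbf{The worst-case step gives no lower bound on $I(X;X+Z)$ for general $\cN$.} The relation $H(Y)=-\E[\log Q(Y)]-D(P_Y\|Q)$ is an identity. Without control of $D(P_Y\|Q)$ it yields only the upper bound $H(Y)\le-\E[\log Q(Y)]$. Your ``cleaner alternative'' also bounds $H(Y)$ from above by a geometric entropy, which is the wrong direction for lower bounding $\Capa(\cN)$. The closing decomposition ``$I\ge$ (bound via $\E[\log Q]$) minus $H(Z)$'' just restates what must be proved; no random-coding or typicality argument is needed, since this is a single-letter inequality.

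The missing idea is to use the affine-log structure for \emph{both} entropies and then apply data processing. Let $X\sim P^\star$, the input inducing $Q^\star(y)=2^{-\alpha(y-\mu)-H(\Geom_\mu)}$ under geometric noise, and let $Z\sim\cN$. Since $\log\Geom_\mu(z)$ is affine in $z$ and $\E[Z]=\mu$, you get $H(Z)=H(\Geom_\mu)-D(\cN\|\Geom_\mu)$. Since $\log Q^\star(y)$ is affine in $y$ and $\E[Y]=\E[X]+\mu$, you get $H(Y)=\alpha\E[X]+H(\Geom_\mu)-D(P_Y\|Q^\star)$. Writing $*$ for convolution,
\[
I(X;Y)=\alpha\E[X]+D(\cN\|\Geom_\mu)-D(P^\star*\cN\,\|\,P^\star*\Geom_\mu)\ge\alpha\E[X]
\]
by the data-processing inequality. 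This is equivalent to the paper's auxiliary-channel bound $I(X;Y)\ge\E[\log(Q_{Y|X}(Y|X)/Q^\star(Y))]$, whose slack is $\E_{P_Y}[D(P_{X|Y}\|Q_{X|Y})]\ge0$.

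\textbf{The geometric computation also fails as written.} A geometric input $P_X(n)=2^{-\alpha n+c}$ convolved with $\Geom_\mu$ noise is not geometric. It is a difference of two geometric sequences, or negative binomial if the ratios coincide, so it does not induce $Q^\star$. Moreover, $I(X;Y)-\alpha\E[X]=H(Y)-\alpha\E[Y]-\bigl(H(\Geom_\mu)-\alpha\mu\bigr)\le0$, with equality only when $P_Y=Q^\star$ (Gibbs' inequality plus the defining equation for $\alpha$). So that input achieves strictly less than $\alpha$. The sum $\sum_n 2^{-\alpha n+\alpha\mu-H(Z)}=1$ you wrote normalizes the \emph{output} $Q^\star$, not any input.

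What you need is to show that $Q^\star$ is realizable. Deconvolving generating functions gives $G_X(w)\propto w(1+\mu-\mu w)/(2^\alpha-w)$. That is, $P^\star(1)\propto(1+\mu)2^{-\alpha}$ and $P^\star(n)\propto 2^{-\alpha n}\bigl((1+\mu)-\mu 2^{\alpha}\bigr)$ for $n\ge2$. This is a valid distribution iff $\alpha\le\log\frac{1+\mu}{\mu}$, which must be checked from the defining equation:
\begin{itemize}
\item its left-hand side tends to $+\infty$ as $\alpha\to0^+$;
\item it is decreasing up to $\log\frac{1+\mu}{\mu}$;
\item it equals $\frac{\mu}{1+\mu}<1$ at that point.
\end{itemize}
Hence the smallest positive root lies below $\log\frac{1+\mu}{\mu}$. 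This is also why the \emph{smallest} root matters when $\mu>1$, where a second root can exist. Without this step your duality upper bound has no matching achievability, and your worst-case step has no $P^\star$ to plug in.
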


\begin{figure}
		\centering	\begin{tikzpicture}
		
		\begin{axis}[
            width = 12cm,
            height=8cm,
    		scaled y ticks = true,
    		tick label style={/pgf/number format/fixed },
    		axis lines = left,
    		xlabel = $\mu$,
    		ylabel = {Rate},
    		xtick={0,1,2,3,4,5,6,7,8,9,10},
    		ytick={0.2,0.4,0.6,0.8,1}, 
            xmin=0, xmax=10,
            ymin=0, ymax=1,
    		legend pos=south east,
    		ymajorgrids=true,
    		grid style=dashed,
		]
            \addplot[
		color=blue,
		]
		coordinates {
            (0.0, 0.9999999999999991)
            (0.13513513513513514, 0.7744018242947653)
            (0.2702702702702703, 0.6663399112708642)
            (0.40540540540540543, 0.5923465131492864)
            (0.5405405405405406, 0.5366456133738614)
            (0.6756756756756757, 0.4924944694969425)
            (0.8108108108108109, 0.4562999592942418)
            (0.945945945945946, 0.4259035203753125)
            (1.0810810810810811, 0.39990394428651654)
            (1.2162162162162162, 0.3773402722394477)
            (1.3513513513513513, 0.3575256473547892)
            (1.4864864864864866, 0.33995294509286483)
            (1.6216216216216217, 0.32423778999688724)
            (1.7567567567567568, 0.31008244690989517)
            (1.891891891891892, 0.2972520264539331)
            (2.027027027027027, 0.28555828756638124)
            (2.1621621621621623, 0.2748483047409615)
            (2.2972972972972974, 0.26499635010806183)
            (2.4324324324324325, 0.2558979583483572)
            (2.5675675675675675, 0.24746550901573072)
            (2.7027027027027026, 0.23962488572414575)
            (2.837837837837838, 0.23231291367832435)
            (2.9729729729729732, 0.22547536905459334)
            (3.1081081081081083, 0.21906541474034058)
            (3.2432432432432434, 0.21304235820900974)
            (3.3783783783783785, 0.2073706557426565)
            (3.5135135135135136, 0.2020191071345759)
            (3.6486486486486487, 0.19696019917259014)
            (3.783783783783784, 0.1921695664205474)
            (3.9189189189189193, 0.187625545277432)
            (4.054054054054054, 0.18330880280734188)
            (4.1891891891891895, 0.17920202595225207)
            (4.324324324324325, 0.17528965984713937)
            (4.45945945945946, 0.17155768632380683)
            (4.594594594594595, 0.1679934355081)
            (4.72972972972973, 0.16458542482364763)
            (4.864864864864865, 0.16132322081457617)
            (5.0, 0.15819732006388926)
            (5.135135135135135, 0.15519904616826208)
            (5.27027027027027, 0.1523204602748748)
            (5.405405405405405, 0.1495542831226153)
            (5.540540540540541, 0.14689382688190245)
            (5.675675675675676, 0.14433293537264186)
            (5.810810810810811, 0.14186593147211388)
            (5.9459459459459465, 0.1394875707147919)
            (6.081081081081082, 0.13719300024245892)
            (6.216216216216217, 0.134977722392183)
            (6.351351351351352, 0.13283756231688568)
            (6.486486486486487, 0.13076863912245648)
            (6.621621621621622, 0.12876734008005344)
            (6.756756756756757, 0.12683029753480984)
            (6.891891891891892, 0.12495436818495437)
            (7.027027027027027, 0.12313661444991503)
            (7.162162162162162, 0.12137428768379553)
            (7.297297297297297, 0.11966481302274051)
            (7.4324324324324325, 0.11800577568214514)
            (7.567567567567568, 0.11639490854315335)
            (7.7027027027027035, 0.1148300808879893)
            (7.837837837837839, 0.11330928816101613)
            (7.972972972972974, 0.11183064264735525)
            (8.108108108108109, 0.1103923649738278)
            (8.243243243243244, 0.10899277634815856)
            (8.378378378378379, 0.10763029146215923)
            (8.513513513513514, 0.10630341199304708)
            (8.64864864864865, 0.10501072064446344)
            (8.783783783783784, 0.10375087567525874)
            (8.91891891891892, 0.10252260586972932)
            (9.054054054054054, 0.10132470590802288)
            (9.18918918918919, 0.10015603209978721)
            (9.324324324324325, 0.09901549844800153)
            (9.45945945945946, 0.09790207301335857)
            (9.594594594594595, 0.09681477455256875)
            (9.72972972972973, 0.09575266940663592)
            (9.864864864864865, 0.09471486861755554)
            (10.0, 0.09370052525394872)
        }; 
        
        \end{axis}
	\end{tikzpicture}
	\caption{A universal lower bound on the capacity of any additive-noise sticky channel as a function of the mean $\mu$ of its noise distribution (\cref{thm:lower_bound_capacity_noise_infinite_support}). This lower bound is tight for geometric noise.} \label{fig:lower_bound_capacity_infinite_support}
	\end{figure}

\paragraph{Improved capacity lower bounds for bounded-support noise with prescribed mean.}

In \cref{thm:lower_bound_capacity_noise_infinite_support} we gave a universal capacity lower bound that applies to all additive-noise sticky channels whose noise distributions have some fixed mean $\mu$.
The bound in that theorem cannot be improved in general, since it is sharp for the additive-noise sticky channel with $\Geom_\mu$ noise.
To complement this, we obtain sharper universal capacity lower bounds for channels whose noise distributions have \emph{bounded support} and a given mean.
More precisely, we consider an arbitrary additive-noise sticky channel whose noise distribution $\cN$ has mean $\mu$ and bounded support.
Our goal is to derive a lower bound on $\Capa(\cN)$ solely as a function of $\mu$ and the largest element in the support $m=\max\supp(\cN)$.

A first easy observation is that, just like for Bernoulli noise, there is a simple baseline lower bound obtained by considering naive RLC coding, which in this case amounts to coding only over positive integers $i$ satisfying $i\equiv 1 \pmod{m+1}$.
This strategy yields the general lower bound $\Capa(\cN)\geq \log \varphi$, with $\varphi$ now more generally the unique positive solution to $x^{m+1}-x^m-1=0$.
It is then natural to wonder for which choices of $m$ and $\mu$ it holds that naive RLC coding is optimal for the $\ANSC_{\cN}$ with $\cN$ some noise distribution with mean $\mu$ and $\max\supp(\cN)=m$, and how much better we can do than naive RLC coding otherwise.
Our next theorem makes progress in this direction.
It completely characterizes the choices of $m$ and $\mu$ for which naive RLC coding is never optimal, gives best possible bounds whenever $\mu<m/2$, and generalizes some of our prior results for Bernoulli noise from \cref{sec:bernoulli}.
For more details, see \cref{section_lower_bound_capacity_bounded_noise}.
This lower bound is compared with that of \cref{thm:lower_bound_capacity_noise_infinite_support} as a function of $\mu$ for $m=2,3$ in \cref{fig:lower_bound_capacity_finite_supporta,fig:lower_bound_capacity_finite_supportb}.

\begin{theorem}\label{theorem_lower_bound_capacity_noise_finite_support}
    Fix an integer $m>0$ and a real number $\mu>0$. 
    Consider an arbitrary noise distribution $\cN$ with mean $\mu$ and such that $\max\supp(\cN)=m$.
    Let $\varphi$ be the unique positive solution to $x^{m+1}-x^m-1=0$.
    If $\mu<m/2$, let $\delta$ be the unique solution in $(0,1)$ to the equation\footnote{Note that the right-hand side of \cref{relation_mu_delta} is continuous, converges to $0$ when $x\to 0$, converges to $m/2$ when $x\to 1$, and is increasing in $x$ for fixed $m$. Therefore, such a (unique) solution is guaranteed to exist when $\mu<m/2$.}
    \begin{align}
        \label{relation_mu_delta}
        \mu=\frac{x}{1-x}-\frac{(m+1)x^{m+1}}{1-x^{m+1}}.
    \end{align}
    Then, the following hold:
    \begin{enumerate}
        \item If $\mu\geq m/2$, then $\Capa(\cN)>\log\varphi$ unless $\cN$ is uniform over $\{0,1,\dots,m\}$. That is, in this regime naive RLC coding is optimal if and only if $\cN$ is uniform;

        \item If $\mu<m/2$ and $\delta\geq 1/\varphi$, then $\Capa(\cN)\geq \log\varphi$, and there exists a noise distribution $\cN$ fully supported on $\{0,1,\dots,m\}$ with mean $\mu$ such that $\Capa(\cN)=\log\varphi$ in this regime. In other words, naive RLC coding is optimal for some channel in this regime;

        \item If $\mu<m/2$ and $\delta<1/\varphi$, then
        \begin{equation}
        \label{eq:theorem_lower_bound_capacity_noise_finite_support}
            \Capa(\cN)\geq \alpha > \log\varphi,
        \end{equation}
        where $\alpha$ is the solution to
        \begin{equation*}
        \frac{2^{\alpha\mu+\log\left(\frac{1-\delta}{1-\delta^{m+1}}\right)+\mu\log\delta}}{2^{\alpha}-1}=1,
    \end{equation*}
        and there exists a noise distribution $\cN$ fully supported on $\{0,1,\dots,m\}$ with mean $\mu$ such that \cref{eq:theorem_lower_bound_capacity_noise_finite_support} is equal to $\alpha$.
        In particular, naive RLC coding is never optimal in this regime.
    \end{enumerate}
\end{theorem}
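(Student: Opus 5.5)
The plan is to lower bound $\Capa(\cN)=\CPUC(\ANMC_{\cN})$ (by \cref{lem:ANSC-to-ANMC}) using one fixed family of input distributions, and to show that among noise distributions on $\{0,\dots,m\}$ with mean $\mu$ the worst case is the truncated geometric noise $\cN_\delta(k)\propto\delta^k$, $k\in\{0,\dots,m\}$. The mean of $\cN_\delta$ is exactly the right-hand side of \cref{relation_mu_delta} with $x=\delta$. This mirrors the Bedekar--Azizoglu argument, which presumably also underlies \cref{thm:lower_bound_capacity_noise_infinite_support}.

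\textbf{Step 1: an entropy bound for arbitrary noise.} For an input $X$ on $\N_{>0}$ we have
\begin{equation*}
I(X;X+Z)=H(X+Z)-H(Z).
\end{equation*}
Among distributions on $\{0,\dots,m\}$ with mean $\mu$, $H(Z)$ is maximized by the max-entropy law, which is $\cN_\delta$ when $\mu<m/2$ and uniform when $\mu=m/2$ (for $\mu>m/2$ use $\delta>1$, or symmetry). This gives
\begin{equation*}
H(Z)\le -\log\frac{1-\delta}{1-\delta^{m+1}}-\mu\log\delta .
\end{equation*}
For $H(X+Z)$ I use a geometric-type input, $P_X(n)=(1-q)q^{n-1}$. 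Write $f=P_{X+Z}$. I would bound $H(X+Z)\ge -\sum_y f(y)\log g(y)$ for a cleverly chosen reference law $g$. If $g$ is geometric with ratio $q$ on the outputs $y\ge 1$, then $-\log g(y)$ is affine in $y$, so the cross-entropy depends on $\cN$ only through $\E[Z]=\mu$. The resulting rate $\bigl(H(X+Z)-H(Z)\bigr)/\E[X]$ is then at least a quantity depending only on $(q,\mu,\delta)$. Optimizing over $q$, and setting $2^{-\alpha}=q$, should reproduce exactly the defining equation for $\alpha$ in Part 3.

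\textbf{Step 2: tightness and comparison with $\log\varphi$.} For $\cN=\cN_\delta$, every inequality in Step 1 becomes an equality. A geometric input convolved with truncated geometric noise is in general not geometric, so tightness there needs a check. I expect instead that the capacity-achieving input for $\cN_\delta$ satisfies the Kuhn--Tucker conditions for CPUC, $D(P_{Y|X=n}\|P_Y)\le \alpha n$ with equality on the support, and that the output is geometric. I would verify this directly, as was done for $C(p)$ in \cref{thm:cap-below-half}. The two regimes arise because the required input weights are nonnegative exactly when $\delta<1/\varphi$. When $\delta\ge 1/\varphi$, the optimal input collapses to the naive RLC input supported on $n\equiv 1\pmod{m+1}$, giving $\Capa(\cN_\delta)=\log\varphi$. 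This is Part 2, with $\Capa(\cN)\ge\log\varphi$ coming from naive RLC coding. When $\delta<1/\varphi$, one checks that $\alpha>\log\varphi$ by monotonicity in $\delta$ of the defining equation, with equality at $\delta=1/\varphi$, which gives Part 3. The computation should parallel the derivation of the threshold $p=1/\varphi^2$ in \cref{thm:opt-rlc}.

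\textbf{Step 3: Part 1.} For $\mu\ge m/2$, uniform noise maximizes $H(Z)$ among all laws on $\{0,\dots,m\}$, and for uniform noise the capacity is $\log\varphi$. For non-uniform $\cN$, $H(Z)<\log(m+1)$ strictly. I would then exhibit a perturbation of the naive RLC input that strictly improves the rate, for instance mixing in a small amount of an input supported on $n\equiv 2\pmod{m+1}$, and show that the first-order gain is positive whenever $\cN$ is not uniform.

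\textbf{Main obstacle.} I expect the hardest step to be making the geometric-input bound in Step 1 exactly tight and uniform over all $\cN$. Outputs $y\le m$ are hit with boundary effects, so the cross-entropy is not purely affine there. Handling this may require choosing the reference $g$ adapted to $\cN_\delta$'s optimal output and proving $H(X+Z)-H(Z)$ is minimized at $\cN_\delta$ by a convexity/Lagrangian argument rather than a clean identity.
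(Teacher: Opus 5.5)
\textbf{The main gap is in Step 1, which is where Part 3's universal bound $\Capa(\cN)\ge\alpha$ has to come from.} The inequality $H(X+Z)\ge-\sum_y f(y)\log g(y)$ is backwards. By Gibbs' inequality, $-\sum_y f(y)\log g(y)=H(f)+\KL(f\|g)\ge H(f)$ for every $g$, so a cross-entropy against a geometric $g$ only \emph{upper} bounds $H(X+Z)$. A cleverer $g$ cannot fix this, because bounding $H(X+Z)$ from below and $H(Z)$ from above separately leaves the output-side slack with the wrong sign. The geometric input is also the wrong input. Since $\KL(P_{Y|X=n}\|Q^\star)=\alpha n$ for all $n$, where $Q^\star(y)=2^{-\alpha(y-\mu)-H(\cN_\delta)}$, any input attaining rate $\alpha$ on $\cN_\delta$ must produce output exactly $Q^\star$, which is geometric. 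A geometric input convolved with $\cN_\delta$ is not geometric, so it has rate strictly below $\alpha$ already at $\cN_\delta$, and no $\mu$-only bound built from it can reach $\alpha$.

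What is missing is the construction in \cref{thm:bounded-supp-small-mu}. Take the input $P^\star$ with $G_{P^\star}=G_{Q^\star}/G_{\cN_\delta}$; its nonnegativity when $\alpha\le-\log\delta$ is exactly your Step 2, which matches the paper. Feed $P^\star$ into the channel with arbitrary noise $\cN$ and write
\begin{equation*}
I(X;Y)=H(Y)-H(Z)=\E\Bigl[\log\frac{\cN_\delta(Z)}{Q^\star(Y)}\Bigr]+\Bigl(\KL(\cN\|\cN_\delta)-\KL(P_Y\|Q^\star)\Bigr).
\end{equation*}
The bracket is nonnegative by data processing, since $P_Y$ and $Q^\star$ are the images of $\cN$ and $\cN_\delta$ under the same map $z\mapsto z+X$ with $X\sim P^\star$. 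The paper writes this bracket as $\E_{P_Y}[\KL(P_{X|Y}\|Q_{X|Y})]$. Both logarithms in the first term are affine, in $Z$ and in $Y$ respectively, so that term equals $\alpha\E[X]$ for every $\cN$ of mean $\mu$. The condition $\supp(\cN)\subseteq\{0,\dots,m\}$ is what keeps $\log\cN_\delta(Z)$ finite.

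\textbf{Step 3 also fails as stated.} Mixing in inputs with $n\equiv 2\pmod{m+1}$ does not always help. By the computation in \cref{lem:naiveRLCoptimality}, the first-order gain from adding mass at $x$ with $\hat x=(x-1)\bmod(m+1)$ is
\begin{equation*}
\KL\bigl(p\,\|\,(p_{\hat x},\dots,p_m,p_0,\dots,p_{\hat x-1})\bigr)-\log\varphi\bigl(\hat x-(m+1)\Pr[Z\ge m+1-\hat x]\bigr).
\end{equation*}
Take $m=2$ and $p=(\tfrac13-\eps,\tfrac13+\tfrac32\eps,\tfrac13-\tfrac12\eps)$, which has mean $1+\eps/2\ge m/2$. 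The $\hat x=1$ gain is $O(\eps^2)-\tfrac32\eps\log\varphi<0$, so your perturbation loses; the violation occurs only at $\hat x=2$.

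The correct argument sums over all residues:
\begin{equation*}
\sum_{\hat x=1}^m\bigl(\hat x-(m+1)\Pr[Z\ge m+1-\hat x]\bigr)=(m+1)(m/2-\mu)\le 0.
\end{equation*}
If no residue gave a positive gain, every cyclic-shift divergence would vanish, including the shift by one, which forces $\cN$ to be uniform. If $\supp(\cN)\ne\{0,\dots,m\}$, some residue gives infinite divergence. Finally, the naive input always attains $\log\varphi$. So $\Capa(\cN)=\log\varphi$ would make it capacity-achieving, and \cref{lem:AG_necessary} then yields the contradiction. The observation that uniform noise maximizes $H(Z)$ plays no role in this argument.
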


\begin{figure}
    \begin{subfigure}{0.48\textwidth}

		\centering	\begin{tikzpicture}
		
		\begin{axis}[
            width = 7.5cm,
            height=7cm,
    		scaled y ticks = true,
    		tick label style={/pgf/number format/fixed },
    		axis lines = left,
            x axis line style={-},
    		xlabel = $\mu$,
    		ylabel = {Rate},
            xtick={0,0.2,0.4,0.6,0.8, 1.0},
            xticklabels={0,0.2,0.4,0.6,0.8, 1.0},
    		ytick={0.2,0.4,0.551463,0.6,0.8,1},
            yticklabels={0.2,0.4,$\log\varphi$,0.6,0.8,1},
            xmin=0, xmax=1.,
            ymin=0.2, ymax=1,
            legend style={
                font=\scriptsize
            },
    		legend pos=north east,
    		ymajorgrids=true,
    		grid style=dashed,
		]
            \addplot[
		color=red,
		]
		coordinates {
            (1.000009999800001e-05, 0.9999140861223889)
            (0.009314147596458868, 0.9667267815007723)
            (0.018779016948452475, 0.9425277504061753)
            (0.028395471015937677, 0.921629949011816)
            (0.0381545602657606, 0.902835370428767)
            (0.0480475286570668, 0.8855871899783251)
            (0.05806581873577293, 0.8695617225352439)
            (0.0682010759719277, 0.8545489001472008)
            (0.07844515237465, 0.8404028571748481)
            (0.08879010941992248, 0.8270151190379719)
            (0.09922822032686661, 0.8143034973002952)
            (0.10975197171824493, 0.8022028388165461)
            (0.12035406470085362, 0.7906599302819929)
            (0.1310274154011986, 0.7796310245968398)
            (0.14176515499141287, 0.7690795072496247)
            (0.1525606292397905, 0.7589736281232207)
            (0.1634073976195989, 0.7492865506384316)
            (0.1742992320090065, 0.7399940041467741)
            (0.18523011501403913, 0.7310754438397685)
            (0.1961942379454763, 0.7225113229754032)
            (0.20718599847952548, 0.7142846705212089)
            (0.21819999803098875, 0.7063802839727575)
            (0.22923103886646723, 0.6987842936075848)
            (0.24027412098395173, 0.6914830810077854)
            (0.25132443878392835, 0.6844648406906184)
            (0.2623773775558985, 0.677718790981905)
            (0.2734285098029795, 0.6712345064413245)
            (0.2844735914260237, 0.6650018441713621)
            (0.29550855778747664, 0.6590121019298192)
            (0.3065295196739966, 0.6532562488739888)
            (0.3175327591756757, 0.6477267123637125)
            (0.32851472549855754, 0.6424159106453785)
            (0.3394720307260206, 0.6373162229003317)
            (0.35040144554350827, 0.632420395680882)
            (0.3612998949400417, 0.6277224633273006)
            (0.37216445389892755, 0.6232159864813411)
            (0.3829923430891071, 0.6188950062732175)
            (0.39378092456765584, 0.6147538639035142)
            (0.4045276975030479, 0.6107867112766819)
            (0.4152302939279578, 0.6069885560463698)
            (0.4258864745295521, 0.6033544814893147)
            (0.43649412448446445, 0.5998791815330909)
            (0.4470512493449252, 0.5965588727144707)
            (0.4575559709818182, 0.5933883729152627)
            (0.4680065235898112, 0.5903634032704421)
            (0.47840124975908654, 0.5874806232541316)
            (0.4887385966176357, 0.5847352795600945)
            (0.4990171120475571, 0.5821236423495649)
            (0.5092354409782895, 0.5796425637023681)
            (0.5193923217592602, 0.5772879951248766)
            (0.529486582614004, 0.5750565782855102)
            (0.5395171381773963, 0.5729454162996018)
            (0.5494829861172992, 0.5709509108301738)
            (0.5593832038415631, 0.5690703034404111)
            (0.5692169452910274, 0.5673005152885758)
            (0.5789834378188782, 0.5656385636444545)
            (0.5886819791564578, 0.5640820450082772)
            (0.5983119344653863, 0.5626283826292839)
            (0.6078727334756453, 0.5612746132026587)
            (0.6173638677090715, 0.5600184870739905)
            (0.6267848877875414, 0.5588576595873256)
            (0.6361354008249672, 0.557789815846119)
            (0.6454150679020865, 0.5568127406689548)
            (0.6546236016229022, 0.5559243048016336)
            (0.6637607637515256, 0.5551223737415157)
            (0.6728263629280704, 0.5544051983193123)
            (0.6818202524621769, 0.5537706851767306)
            (0.690742328202658, 0.5532170397048889)
            (0.6995925264817143, 0.5527424693502571)
            (0.708370822132105, 0.5523452228278163)
            (0.7170772265756227, 0.5520237650763633)
            (0.7257117859811955, 0.5517762099775689)
            (0.7342745794909029, 0.5516012672849778)
            (0.7427657175121805, 0.5514974105053035)
            (0.751185340074483, 0.5514630897455959)
            (1.0, 0.5514631)
        };
\addlegendentry{Bounded-support lower bound \eqref{eq:theorem_lower_bound_capacity_noise_finite_support}};
        
 \addplot[
		color=blue,
		]
		coordinates {
            (0.0, 0.9999999999999991)
            (0.02027027027027027, 0.9390865312855196)
            (0.04054054054054054, 0.898487662995086)
            (0.060810810810810814, 0.8652582898532442)
            (0.08108108108108109, 0.8366271123045965)
            (0.10135135135135136, 0.8112845820853204)
            (0.12162162162162163, 0.7884659348382063)
            (0.14189189189189189, 0.7676711179084729)
            (0.16216216216216217, 0.7485488875770824)
            (0.18243243243243246, 0.7308399265551447)
            (0.20270270270270271, 0.7143456143099957)
            (0.22297297297297297, 0.6989094496966872)
            (0.24324324324324326, 0.6844053106827833)
            (0.26351351351351354, 0.6707296746549076)
            (0.28378378378378377, 0.6577962623723802)
            (0.30405405405405406, 0.6455322325253827)
            (0.32432432432432434, 0.6338754055892196)
            (0.3445945945945946, 0.6227721924874821)
            (0.3648648648648649, 0.6121760188604397)
            (0.38513513513513514, 0.6020461059305078)
            (0.40540540540540543, 0.5923465131492862)
            (0.4256756756756757, 0.5830453764530367)
            (0.44594594594594594, 0.5741142949973391)
            (0.46621621621621623, 0.5655278321933116)
            (0.4864864864864865, 0.5572631058557024)
            (0.5067567567567568, 0.5492994486245618)
            (0.5270270270270271, 0.5416181243845405)
            (0.5472972972972974, 0.5342020897321882)
            (0.5675675675675675, 0.5270357919998847)
            (0.5878378378378378, 0.5201049971844667)
            (0.6081081081081081, 0.5133966425208129)
            (0.6283783783783784, 0.5068987095054737)
            (0.6486486486486487, 0.5006001139978824)
            (0.668918918918919, 0.49449061066767686)
            (0.6891891891891893, 0.4885607095604405)
            (0.7094594594594595, 0.4828016029532784)
            (0.7297297297297298, 0.47720510099010294)
            (0.75, 0.4717635748424093)
            (0.7702702702702703, 0.4664699063482745)
            (0.7905405405405406, 0.46131744325070195)
            (0.8108108108108109, 0.45629995929423933)
            (0.8310810810810811, 0.45141161855217593)
            (0.8513513513513514, 0.4466469434504022)
            (0.8716216216216217, 0.44200078603193366)
            (0.8918918918918919, 0.43746830207117515)
            (0.9121621621621622, 0.43304492770153546)
            (0.9324324324324325, 0.4287263582659641)
            (0.9527027027027027, 0.4245085291388003)
            (0.972972972972973, 0.4203875983002858)
            (0.9932432432432433, 0.4163599304731595)
            (1.0135135135135136, 0.4124220826547319)
        };
\addlegendentry{General lower bound \eqref{eq:lower_bound_capacity_noise_infinite_support}}
        
        \end{axis}
	\end{tikzpicture}
	\caption{A comparison between the general capacity lower bound (\cref{thm:lower_bound_capacity_noise_infinite_support}) and the capacity lower bound for $m=2$ (\cref{theorem_lower_bound_capacity_noise_finite_support}) as a function of $\mu$.} \label{fig:lower_bound_capacity_finite_supporta}

    \end{subfigure}
    \begin{subfigure}{0.48\textwidth}

		\centering	\begin{tikzpicture}
		
		\begin{axis}[
            width = 8.5cm,
            height=7cm,
    		scaled y ticks = true,
    		tick label style={/pgf/number format/fixed },
    		axis lines = left,
            x axis line style={-},
    		xlabel = $\mu$,
            xtick={0,0.2,0.4,0.6,0.8,1.0,1.2,1.4},
            xticklabels={0,0.2,0.4,0.6,0.8,1.0,1.2,1.4, },
    		ytick={0.2,0.4,0.465,0.6,0.8,1},
            yticklabels={0.2,0.4,$\log\varphi$,0.6,0.8,1},
            xmin=0, xmax=1.5,
            ymin=0.2, ymax=1,
            legend style={
                font=\scriptsize
            },
    		legend pos=north east,
    		ymajorgrids=true,
    		grid style=dashed,
		]
            \addplot[
		color=red,
		]
		coordinates {
            (1.000010000099997e-05, 0.9999140861223649)
            (0.009897256253909964, 0.9650825934448651)
            (0.019981464595814596, 0.9397457339810686)
            (0.030267276625123504, 0.9178452469905374)
            (0.04075870539774142, 0.8980996415047888)
            (0.05145913801898189, 0.8799067629857912)
            (0.062371348854094594, 0.8629203546775749)
            (0.07349751346629843, 0.8469143916279166)
            (0.08483922328346932, 0.8317326300468512)
            (0.09639750098633867, 0.8172604141512467)
            (0.10817281660324332, 0.803411575110945)
            (0.12016510428916444, 0.790118663554739)
            (0.132373779760024, 0.7773279065498145)
            (0.14479775834700176, 0.7649976253879169)
            (0.15743547363001073, 0.7530909477804265)
            (0.17028489660444013, 0.7415790010419235)
            (0.18334355533085625, 0.7304375353994773)
            (0.19660855501354613, 0.7196458117913538)
            (0.2100765984506029, 0.7091858998488918)
            (0.2237440067956756, 0.6990421300737677)
            (0.2376067405695386, 0.6892020192371464)
            (0.2516604208582586, 0.679653193394)
            (0.2659003506339366, 0.6703866730092773)
            (0.28032153613375066, 0.6613925318045878)
            (0.2949187082333097, 0.6526629164420257)
            (0.30968634375110016, 0.6441905309553582)
            (0.32461868662205623, 0.6359694957478809)
            (0.339709768879955, 0.6279932511444718)
            (0.35495343139041213, 0.6202566956522189)
            (0.3703433442786741, 0.612754449950821)
            (0.3858730269991481, 0.605482140048135)
            (0.40153586799662483, 0.5984353420723887)
            (0.4173251439124011, 0.5916095256776998)
            (0.43323403829195467, 0.5850016335433738)
            (0.4492556597544227, 0.5786070319628234)
            (0.46538305958785087, 0.5724225999980916)
            (0.481609248737972, 0.5664448822400127)
            (0.49792721416210817, 0.5606700530496718)
            (0.5143299345236345, 0.5550957810842672)
            (0.5308103952062645, 0.5497181216976801)
            (0.5473616026311791, 0.5445341315950087)
            (0.5639765978637191, 0.5395409570607013)
            (0.5806484694999378, 0.5347353662659999)
            (0.5973703658267769, 0.5301149376789287)
            (0.6141355062529477, 0.5256767477357209)
            (0.6309371920107494, 0.5214173329617956)
            (0.6477688161320538, 0.5173345821576496)
            (0.6646238727044765, 0.5134255981193128)
            (0.6814959654163703, 0.5096875877196245)
            (0.698378815401689, 0.5061179596512863)
            (0.7152662683979724, 0.5027141645880473)
            (0.7321523012327138, 0.49947371704154486)
            (0.7490310276551766, 0.49639382218519174)
            (0.7658967035323174, 0.4934724704479962)
            (0.7827437314288803, 0.49070729449053463)
            (0.7995666645929391, 0.4880954281679873)
            (0.8163602103691691, 0.48563502329084146)
            (0.8331192330629968, 0.48332337827773025)
            (0.8498387562794074, 0.48115889543821494)
            (0.866513964760731, 0.4791390890536733)
            (0.8831402057480425, 0.47726209321252344)
            (0.8997129898910444, 0.47552546725129663)
            (0.9162279917313416, 0.47392782226845553)
            (0.9326810497839884, 0.47246693900664005)
            (0.9490681662419914, 0.47114094537423534)
            (0.9653855063281971, 0.469948372325714)
            (0.9816293973186084, 0.46888723679347133)
            (0.9977963272607288, 0.4679561992189614)
            (1.0138829434099987, 0.46715360503906217)
            (1.029886050406792, 0.4664779167727033)
            (1.0458026082157958, 0.46592789527696776)
            (1.061629729848884, 0.465502064228752)
            (1.0773646788918683, 0.46519935459287337)
            (1.0930048668547196, 0.4650184714536744)
            (1.1085478503640784, 0.4649584172162091)
            (1.1085479, 0.464958)
            (1.49, 0.464958)
        };
\addlegendentry{Bounded-support lower bound \eqref{eq:theorem_lower_bound_capacity_noise_finite_support}};
        
 \addplot[
		color=blue,
		]
		coordinates {
            (0.0, 0.9999999999999991)
            (0.02027027027027027, 0.9390865312855196)
            (0.04054054054054054, 0.898487662995086)
            (0.060810810810810814, 0.8652582898532442)
            (0.08108108108108109, 0.8366271123045965)
            (0.10135135135135136, 0.8112845820853204)
            (0.12162162162162163, 0.7884659348382063)
            (0.14189189189189189, 0.7676711179084729)
            (0.16216216216216217, 0.7485488875770824)
            (0.18243243243243246, 0.7308399265551447)
            (0.20270270270270271, 0.7143456143099957)
            (0.22297297297297297, 0.6989094496966872)
            (0.24324324324324326, 0.6844053106827833)
            (0.26351351351351354, 0.6707296746549076)
            (0.28378378378378377, 0.6577962623723802)
            (0.30405405405405406, 0.6455322325253827)
            (0.32432432432432434, 0.6338754055892196)
            (0.3445945945945946, 0.6227721924874821)
            (0.3648648648648649, 0.6121760188604397)
            (0.38513513513513514, 0.6020461059305078)
            (0.40540540540540543, 0.5923465131492862)
            (0.4256756756756757, 0.5830453764530367)
            (0.44594594594594594, 0.5741142949973391)
            (0.46621621621621623, 0.5655278321933116)
            (0.4864864864864865, 0.5572631058557024)
            (0.5067567567567568, 0.5492994486245618)
            (0.5270270270270271, 0.5416181243845405)
            (0.5472972972972974, 0.5342020897321882)
            (0.5675675675675675, 0.5270357919998847)
            (0.5878378378378378, 0.5201049971844667)
            (0.6081081081081081, 0.5133966425208129)
            (0.6283783783783784, 0.5068987095054737)
            (0.6486486486486487, 0.5006001139978824)
            (0.668918918918919, 0.49449061066767686)
            (0.6891891891891893, 0.4885607095604405)
            (0.7094594594594595, 0.4828016029532784)
            (0.7297297297297298, 0.47720510099010294)
            (0.75, 0.4717635748424093)
            (0.7702702702702703, 0.4664699063482745)
            (0.7905405405405406, 0.46131744325070195)
            (0.8108108108108109, 0.45629995929423933)
            (0.8310810810810811, 0.45141161855217593)
            (0.8513513513513514, 0.4466469434504022)
            (0.8716216216216217, 0.44200078603193366)
            (0.8918918918918919, 0.43746830207117515)
            (0.9121621621621622, 0.43304492770153546)
            (0.9324324324324325, 0.4287263582659641)
            (0.9527027027027027, 0.4245085291388003)
            (0.972972972972973, 0.4203875983002858)
            (0.9932432432432433, 0.4163599304731595)
            (1.0135135135135136, 0.4124220826547319)
            (1.0337837837837838, 0.4085707908984097)
            (1.0540540540540542, 0.40480295821631973)
            (1.0743243243243243, 0.40111564348995327)
            (1.0945945945945947, 0.39750605128893546)
            (1.114864864864865, 0.39397152250951406)
            (1.135135135135135, 0.3905095257542808)
            (1.1554054054054055, 0.3871176493833861)
            (1.1756756756756757, 0.3837935941750521)
            (1.195945945945946, 0.3805351665399037)
            (1.2162162162162162, 0.37734027223944794)
            (1.2364864864864866, 0.3742069105642036)
            (1.2567567567567568, 0.37113316893153764)
            (1.2770270270270272, 0.3681172178672558)
            (1.2972972972972974, 0.3651573063385761)
            (1.3175675675675675, 0.3622517574092988)
            (1.337837837837838, 0.3593989641907149)
            (1.3581081081081081, 0.356597386064406)
            (1.3783783783783785, 0.3538455451552348)
            (1.3986486486486487, 0.3511420230348787)
            (1.418918918918919, 0.34848545763801214)
            (1.4391891891891893, 0.34587454037487614)
            (1.4594594594594597, 0.343308013425386)
            (1.4797297297297298, 0.3407846672012509)
            (1.5, 0.3383033379637269)
        };
\addlegendentry{General lower bound \eqref{eq:lower_bound_capacity_noise_infinite_support}}
        
        \end{axis}
	\end{tikzpicture}
	\caption{A comparison between the general capacity lower bound (\cref{thm:lower_bound_capacity_noise_infinite_support}) and the capacity lower bound for $m=3$ (\cref{theorem_lower_bound_capacity_noise_finite_support}) as a function of $\mu$.} \label{fig:lower_bound_capacity_finite_supportb}

\end{subfigure}
\end{figure}

\subsection{More related work}

As discussed above, studying the capacity of additive-noise sticky channels is equivalent to studying the capacity per unit cost of additive-noise memoryless channels with discrete alphabet $\N_{>0}$ and noise supported on $\N$.
In turn, the capacity per unit cost is closely connected to the average power-constrained capacity, where we focus on inputs $X$ such that $\E[X]\leq \mu$ for some $\mu$.
The average power-constrained capacity of additive-noise memoryless channels over the non-negative integers has appeared before in the study of discrete-time queuing channels~\cite{BA98,PG03}.
There, they only consider the case where the additive noise is geometric.

In an independent concurrent work, Kovačević~\cite{Kov26} constructed examples of sticky repeat channels for which the Shannon capacity equals the zero-error capacity, achieved by runlength-constrained coding. Although Kovačević's results are not directly related to the results obtained in our work, at a high level both sets of results highlight that zero-error runlength-constrained coding can sometimes be the optimal way of handling random sticky errors, even when they are biased (as in the case of the Bernoulli additive-noise sticky channel). 
We believe that the relationship between the capacity of sticky channels (broadly construed) and zero-error runlength-constrained coding deserves further study.

\section{Preliminaries}

\subsection{Notation}

We denote random variables by uppercase roman letters such as $X$, $Y$, and $Z$.
Sets are usually denoted by calligraphic uppercase letters such as $\cS$ and $\cT$.
We write $\N_{>0}$ for the set of positive integers and $\N$ for the set of non-negative integers.
We also write $[n]$ for the set $\{1,2,\dots,n\}$.
We denote the base-$2$ logarithm by $\log$ and the natural logarithm by $\ln$.

We will deal solely with discrete random variables.
Let $X$ be a random variable supported on $\cX\subseteq\R$ and some $x\in \cX$.
We write $P_X$ for the probability mass function of $X$ and $P_X(x)$ for the probability that $X$ takes on value $x$.
We denote the expectation of $X$ by $\E[X]$.
For two arbitrarily correlated random variables $X$ and $Y$ we will use the notation $P_{Y_x}$ as shorthand for $P_{Y|X=x}$.

\subsection{Characterization of the capacity per unit cost of memoryless channels}\label{subsec:AG}

Recall \cref{lem:ANSC-to-ANMC} relating the capacity of additive-noise sticky channels to the capacity per unit cost of additive-noise memoryless channels over the non-negative integers.
Motivated by this, the following framework due to Abdel-Ghaffar~\cite{AG93} will be useful to us.

\begin{restatable}[\cite{AG93}]{lemma}{capframework}\label{lem:CPUC-UB-framework}
    Let $\Ch$ be a memoryless channel with countable input and output alphabets $\cX,\cY\subseteq \N_{>0}$, respectively, and $c:\cX\to\R_{> 0}$ a positive cost function.
    Denote by $Y_x$ the output random variable of $\Ch$ on input $x\in\cX$.
    Then, we have
    \begin{equation*}
        \CPUC_c(\Ch) \leq \sup_{x\in\cX}\frac{\KL(P_{Y_x}\|Q)}{c(x)}
    \end{equation*}
    for any distribution $Q$ supported on $\cY$, where $\KL(P\|Q)=\sum_{x} P(x)\log \frac{P(x)}{Q(x)}$ is the Kullback-Leibler divergence between discrete distributions $P$ and $Q$.
    
    Furthermore, if there exists an input distribution $P_X$ such that $Q$ is its corresponding output distribution with respect to $\Ch$ and
    \begin{equation*}
        \frac{\KL(P_{Y_x}\|Q)}{c(x)}\leq\lambda
    \end{equation*}
    for all $x\in\cX$ with equality when $P_X(x)>0$, then $\CPUC_c(\Ch)=\lambda$ and $P_X$ is capacity-achieving.
\end{restatable}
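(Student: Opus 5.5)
The plan is to use the standard "golden formula" decomposition of mutual information, with care for countably infinite alphabets. Fix a distribution $Q$ on $\cY$ and set $\lambda^*:=\sup_{x\in\cX}\KL(P_{Y_x}\|Q)/c(x)$. If $\lambda^*=\infty$ there is nothing to prove, so assume $\lambda^*<\infty$.

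Take any input distribution $P_X$ with $\E[c(X)]<\infty$, and let $P_Y=\sum_x P_X(x)P_{Y_x}$ be the resulting output distribution. First, the pointwise bound $\KL(P_{Y_x}\|Q)\leq \lambda^* c(x)$ gives
\begin{equation*}
\sum_x P_X(x)\KL(P_{Y_x}\|Q)\leq \lambda^*\E[c(X)]<\infty .
\end{equation*}
By joint convexity of relative entropy, $\KL(P_Y\|Q)\leq \sum_x P_X(x)\KL(P_{Y_x}\|Q)$, so $\KL(P_Y\|Q)$ is also finite. This finiteness lets us split the logarithm $\log\frac{P_{Y_x}(y)}{P_Y(y)}=\log\frac{P_{Y_x}(y)}{Q(y)}-\log\frac{P_Y(y)}{Q(y)}$ and sum the two parts separately, which yields the identity
\begin{equation*}
I(X;Y)=\sum_x P_X(x)\KL(P_{Y_x}\|P_Y)=\sum_x P_X(x)\KL(P_{Y_x}\|Q)-\KL(P_Y\|Q).
\end{equation*}
Since $\KL(P_Y\|Q)\geq 0$, this gives $I(X;Y)\leq \lambda^*\E[c(X)]$. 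Taking the supremum over $P_X$ proves the first claim.

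For the second claim, apply the first part with this particular $Q$. Every ratio $\KL(P_{Y_x}\|Q)/c(x)$ is at most $\lambda$, so $\lambda^*\le\lambda$ and hence $\CPUC_c(\Ch)\leq\lambda$. For the matching lower bound, use the given $P_X$, whose output distribution is exactly $Q$. Then $\KL(P_Y\|Q)=0$, and the equality condition on $\supp(P_X)$ gives
\begin{equation*}
I(X;Y)=\sum_{x:P_X(x)>0}P_X(x)\,\lambda c(x)=\lambda\,\E[c(X)].
\end{equation*}
So this $P_X$ attains the ratio $\lambda$, provided $\E[c(X)]<\infty$ so that $P_X$ is admissible. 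I will read this finiteness as implicit in the hypothesis, consistent with the definition of $\CPUC$. It follows that $\CPUC_c(\Ch)=\lambda$ and that $P_X$ is capacity-achieving.

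The only real obstacle is the measure-theoretic bookkeeping in the decomposition step. With countable alphabets one must rule out an $\infty-\infty$ expression when splitting the logarithm. This is exactly why finiteness of $\KL(P_Y\|Q)$ and of the averaged divergence must be established first, via the pointwise bound and convexity, before rearranging the sums.
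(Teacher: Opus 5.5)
Your proposal is correct and follows essentially the paper's route. The paper gets the upper bound from the duality inequality $I(X;Y)\leq\sum_x P_X(x)\KL(P_{Y_x}\|Q)$, citing Lapidoth--Moser, and divides by $\E[c(X)]$. You derive that same inequality yourself from the identity $I(X;Y)=\sum_x P_X(x)\KL(P_{Y_x}\|Q)-\KL(P_Y\|Q)$ after establishing the needed finiteness, and you also spell out the achievability half (reading $\E[c(X)]<\infty$ as implicit), which the paper leaves unstated.
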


The proof of \cref{lem:CPUC-UB-framework} presented by Abdel-Ghaffar~\cite{AG93} implicitly assumes that the input and output alphabets are finite.
However, it is not hard to see that it extends to countably infinite alphabets too.\footnote{For example, we can invoke a general result of Lapidoth and Moser~\cite[Theorem 5.1]{LM03}, which states that in the setting of \cref{lem:CPUC-UB-framework} (and well beyond it) we have $I(X;Y)\leq \sum_{x\in\cX} P_X(x) \KL(P_{Y_x}\|Q)$ for any input distribution $P_X$ and any distribution $Q$ over $\cY$, with $Y$ the corresponding channel output. Then, if $\KL(P_{Y_x}\|Q)\leq \lambda c(x)$ for all $x$, we get that
\begin{equation*}
    \frac{I(X;Y)}{\E[c(X)]}\leq \frac{\sum_{x\in\cX} P_X(x) \KL(P_{Y_x}\|Q)}{\E[c(X)]} \leq \frac{\lambda \sum_{x\in\cX} P_X(x) c(x)}{\E[c(X)]} = \lambda.
\end{equation*}
}
Under some additional mild assumptions, the sufficient conditions for optimality in \cref{lem:CPUC-UB-framework} are also necessary.
This is formalized in the next lemma, which follows from standard techniques.
For completeness, we provide a proof in \Cref{sec:AG-necessary}.

\begin{lemma}\label{lem:AG_necessary}
    Let $\Ch$ be a memoryless channel with input and output alphabets $\cX,\cY \subseteq \N_{>0}$ such that $H(Y_x) < \infty$ for every $x \in \cX$.
    Suppose that $P_X$ with corresponding output distribution $P_Y$ achieves the capacity per unit cost of $\Ch$ with cost function $c$, and that $\sum_{x \in \cX} P_X(x) H(Y_x) < \infty$.
    Then,
    \begin{equation*}
        \frac{\KL(P_{Y_x}\|P_Y)}{c(x)}\leq\CPUC_c(\Ch)
    \end{equation*}
    for all $x\in\cX$ with equality for all $x$ such that $P_X(x)>0$.
\end{lemma}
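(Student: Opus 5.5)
We prove \cref{lem:AG_necessary} by a standard first-order perturbation argument around the optimal input. Write $C:=\CPUC_c(\Ch)$, and let $P_X$ achieve it, so that $I(X;Y)=C\,\E[c(X)]$ with $\E[c(X)]<\infty$. The plan has three steps: handle inputs outside the support by mixing in a point mass, handle inputs in the support by shifting mass between two support points, and then justify the derivative computations in the countable setting.

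\emph{Step 1: inputs outside the support.} Fix $x_0\in\cX$ and set $P_\theta=(1-\theta)P_X+\theta\delta_{x_0}$ for $\theta\in[0,1]$. Let $f(\theta)=I(P_\theta)-C\,\E_{P_\theta}[c(X)]$. By the definition of $C$ we have $f(\theta)\le 0$ for all $\theta$, and $f(0)=0$. Hence the right derivative satisfies $f'(0^+)\le 0$.

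The cost term is linear in $\theta$, with derivative $c(x_0)-\E[c(X)]$. For the mutual information, use the decomposition
\[
I(P_\theta)=\sum_x P_\theta(x)\,\KL(P_{Y_x}\|P_{Y,\theta}),
\]
where $P_{Y,\theta}=(1-\theta)P_Y+\theta P_{Y_{x_0}}$ is the output distribution under $P_\theta$. The standard computation then gives
\[
\frac{d}{d\theta}I(P_\theta)\Big|_{0^+}=\KL(P_{Y_{x_0}}\|P_Y)-I(X;Y).
\]
The contribution coming from differentiating $P_{Y,\theta}$ inside the logarithm equals $-\sum_y (P_{Y_{x_0}}(y)-P_Y(y))=0$.

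Substituting $I(X;Y)=C\,\E[c(X)]$, the condition $f'(0^+)\le 0$ yields
\[
\KL(P_{Y_{x_0}}\|P_Y)-C\,\E[c(X)]-C\,c(x_0)+C\,\E[c(X)]\le 0,
\]
that is, $\KL(P_{Y_{x_0}}\|P_Y)\le C\,c(x_0)$. This proves the inequality for every $x_0\in\cX$.

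\emph{Step 2: equality on the support.} Suppose $P_X(x_0)>0$. Then the same mixture is a valid distribution for slightly negative $\theta$ as well: $P_\theta\ge 0$ whenever $\theta\ge -P_X(x_0)/(1-P_X(x_0))$. Since $f$ has an interior maximum at $\theta=0$, its two-sided derivative vanishes, which gives $\KL(P_{Y_{x_0}}\|P_Y)=C\,c(x_0)$.

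If some $x_0$ in the support had strict inequality, a cleaner alternative also works. Averaging the inequality from Step 1 against $P_X$ gives
\[
\sum_x P_X(x)\,\KL(P_{Y_x}\|P_Y)=I(X;Y)=C\,\E[c(X)].
\]
Combined with the pointwise bound $\KL(P_{Y_x}\|P_Y)\le C\,c(x)$, this forces equality at every $x$ with $P_X(x)>0$. This averaging argument avoids needing differentiability in the negative direction, so I would use it.

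\emph{Step 3: justifying the derivative (main obstacle).} The delicate point is justifying the derivative formula in Step 1 with countable alphabets. Exchanging differentiation and the infinite sum over $y$ requires care, and this is where the hypotheses $H(Y_x)<\infty$ and $\sum_x P_X(x)H(Y_x)<\infty$ are used.

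I would avoid differentiating directly. Instead, I would write
\[
I(P_\theta)=H(P_{Y,\theta})-\sum_x P_\theta(x)H(Y_x).
\]
The conditional-entropy term is linear in $\theta$ and finite by the hypotheses. For the output entropy, I would use the exact identity for mixtures:
\[
H\bigl((1-\theta)P+\theta Q\bigr)=(1-\theta)H(P)+\theta H(Q)+(1-\theta)\KL\bigl(P\,\big\|\,M_\theta\bigr)+\theta\,\KL\bigl(Q\,\big\|\,M_\theta\bigr),
\]
where $M_\theta=(1-\theta)P+\theta Q$ and $P=P_Y$, $Q=P_{Y_{x_0}}$.

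Two limits then complete the argument. First, $\KL(P\|M_\theta)=o(\theta)$: its integrand $\log\bigl(P/M_\theta\bigr)$ is bounded above by $-\log(1-\theta)$, and dominated convergence applies. Second, $\KL(Q\|M_\theta)\to\KL(Q\|P)$ as $\theta\to 0$. This follows from monotone convergence, since $M_\theta\ge\theta Q$ and the integrand increases as $\theta$ decreases. The convergence also holds when $\KL(Q\|P)=\infty$, in which case $f'(0^+)=+\infty>0$, contradicting optimality. Dividing $f(\theta)\le 0$ by $\theta$ and letting $\theta\to 0^+$ then yields the inequality of Step 1.

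Finiteness of $H(P_Y)$ is needed for these manipulations. It follows from $H(Y)\le I(X;Y)+H(Y|X)$, where $I(X;Y)=C\,\E[c(X)]<\infty$ and $H(Y|X)=\sum_x P_X(x)H(Y_x)<\infty$ by hypothesis.
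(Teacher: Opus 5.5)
Your proof is correct (up to one mis-stated justification, noted below), and it takes a different route from the paper in both halves. The paper works in the weak-derivative framework of Cheraghchi--Ribeiro. It maximizes the concave functional $I(\cdot)-\lambda E(\cdot)$ and uses \cref{lem:MI-derivative} to differentiate toward $\delta_x$; that lemma includes an extension for the case $\KL(P_{Y_x}\|P_Y)=\infty$, which is only partly sketched. For $x$ in the support, the paper then differentiates a second time, toward the distribution $P_{X_x}$ with the mass at $x$ deleted. This is the same one-parameter family as your $\theta<0$ idea, and it requires re-checking the finiteness hypotheses to get the reverse inequality.

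You instead use an exact identity for the inequality and averaging for the equality. Your mixture identity, together with $H(P_Y)<\infty$ (which you correctly derive), gives
\[
f(\theta)=(1-\theta)\,\KL(P_Y\|M_\theta)+\theta\bigl(\KL(P_{Y_{x_0}}\|M_\theta)-C\,c(x_0)\bigr)\le 0 .
\]
Hence $\KL(P_{Y_{x_0}}\|M_\theta)\le C\,c(x_0)$ for every $\theta\in(0,1]$, using only $\KL(P_Y\|M_\theta)\ge 0$, and the infinite-divergence case needs no separate treatment. The averaging step then forces equality on the support, using only $\sum_x P_X(x)\KL(P_{Y_x}\|P_Y)=C\,\E[c(X)]<\infty$, with no second derivative computation. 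Your route is self-contained and more elementary. The paper's route is a direct instance of the generic first-order optimality argument for concave functionals, but it leans on an external lemma and has to verify that lemma's hypotheses twice.

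The step that needs fixing is in Step 3. The summand $P_{Y_{x_0}}(y)\log\bigl(P_{Y_{x_0}}(y)/M_\theta(y)\bigr)$ is not monotone in $\theta$ where $P_{Y_{x_0}}(y)<P_Y(y)$, because there $M_\theta(y)$ increases as $\theta\downarrow 0$. So monotone convergence does not apply as stated. You only need
\[
\KL(P_{Y_{x_0}}\|P_Y)\le\liminf_{\theta\to 0^+}\KL(P_{Y_{x_0}}\|M_\theta).
\]
This follows from Fatou's lemma applied to the nonnegative summands of $\sum_y P_{Y_{x_0}}(y)\log(1/M_\theta(y))$, together with $H(Y_{x_0})<\infty$; this is exactly the paper's Fatou step. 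Similarly, the claim $\KL(P_Y\|M_\theta)=o(\theta)$ is unnecessary, since only its nonnegativity is used. It is also false when $P_{Y_{x_0}}\not\ll P_Y$, where $\KL(P_Y\|M_\theta)/\theta\to P_{Y_{x_0}}(\{P_Y=0\})>0$.
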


\begin{remark}
    For the additive-noise sticky channel $\ANSC_{\cN}$, we have $H(Y_x) = H(\cN)$ for all $x \in \cX$. Therefore, the conditions in \cref{lem:AG_necessary} are satisfied provided that $H(\cN)<\infty$.
\end{remark}

\section{The Bernoulli additive-noise sticky channel}\label{sec:bernoulli}

\subsection{When is naive RLC coding optimal?}\label{sec:opt-rlc}

In this section we prove \cref{thm:opt-rlc}, thus characterizing the optimality region of the the naive RLC coding strategy of taking only strings with runs of odd length as codewords.
We restate the theorem here to aid the reader.
\optrlc*

\begin{proof}
    We consider the input distribution
    \begin{equation*}
        P_X(n) = \begin{cases}
        \varphi^{-n}, &\textrm{if $n$ is odd,}\\
        0, &\textrm{otherwise.}
        \end{cases}
    \end{equation*}
    First, note that
    \begin{equation*}
        C(p)=\CPUC(\ANMC_{\Ber(p)})\geq \frac{I(X;Y)}{\E[X]} = \frac{H(X)}{\E[X]} = \log \varphi.
    \end{equation*}
    The second equality uses the fact that $H(X|Y)=0$, since $X$ is supported on the odd integers and $Y\in\{X,X+1\}$ with probability $1$.
    This proves the first part of the theorem.
    
    To see the second part, we use \cref{lem:CPUC-UB-framework} to determine when $P_X$ is capacity-achieving.
    Note that
    \begin{equation*}
        \sum_{x=1}^\infty P_X(x) =\sum_{k=1}^\infty \varphi^{-(2k-1)} = 1,
    \end{equation*}
    and so $P_X$ is a valid probability mass function.
    With this choice of $X$, the corresponding output $Y$ has probability mass function
    \begin{equation*}
        P_Y(y) = \begin{cases}
            (1-p)P_X(y), &\textrm{if $y$ is odd,}\\
            p \cdot P_X(y-1), &\textrm{otherwise.}
        \end{cases}
    \end{equation*}

    We now compute $\KL(P_{Y_x}\|P_Y)$ for an arbitrary integer $x>0$, where we recall that $P_{Y_x}$ denotes the channel output distribution on input $x$.
    For odd $x$, we get
    \begin{align*}
        \KL(P_{Y_x}\|P_Y) &= (1-p)\log\left(\frac{1}{P_Y(x)}\right) + p\log\left(\frac{1}{P_Y(x+1)}\right) - h(p)\\
        &=(1-p)\log\left(\frac{1}{(1-p)P_X(x)}\right) + p\log\left(\frac{1}{p\cdot P_X(x)}\right) -h(p)\\
        &= \log\left(\frac{1}{P_X(x)}\right)\\
        &= x\cdot \log\varphi.
    \end{align*}

    For even $x$, we get
    \begin{align*}
        \KL(P_{Y_x}\|P_Y) &= (1-p)\log\left(\frac{1}{P_Y(x)}\right) + p\log\left(\frac{1}{P_Y(x+1)}\right) - h(p)\\
        &=(1-p)\log\left(\frac{1}{p \cdot P_X(x-1)}\right) + p\log\left(\frac{1}{(1-p)P_X(x+1)}\right) - h(p)\\
        &= (1-p)(x-1)\log\varphi + p(x+1)\log\varphi -(1-p)\log p -p\log(1-p)-h(p)\\
        &=x\cdot \log\varphi +(2p-1)\log\varphi-(1-p)\log p -p\log(1-p)-h(p).
    \end{align*}
    Define $\Delta(p)=(2p-1)\log\varphi-(1-p)\log p -p\log(1-p)-h(p)$.
    We get that $\KL(P_{Y_x}\|P_Y)\leq x \log(\varphi)$ if and only if $\Delta(p)\leq 0$, in which case we conclude that $P_X$ is capacity-achieving.
    By routine algebraic manipulation, we conclude that this holds if and only if
    \begin{equation*}
        \frac{1}{\varphi^2}\leq p\leq \frac{1}{2}.
    \end{equation*}
    Consequently, for $p$ in this interval we get via \cref{lem:CPUC-UB-framework} that
    \begin{equation*}
        C(p)=\CPUC(\ANMC_{\Ber(p)})=\log\varphi.
        \qedhere
    \end{equation*}
\end{proof}

\subsection{A capacity upper bound for general $p$}\label{sec:genbound}

In this section we derive an upper bound on $C(p)$ that holds for arbitrary $p$.
For $p<1/\varphi^2$ we show that this upper bound actually \emph{is} the capacity, and so, together with \cref{thm:opt-rlc}, we determine $C(p)$ exactly for $p\in[0,1/2]$, leading to \cref{thm:cap-below-half}.
For $p>1/2$ we leave it as a conjecture that our upper bound is the capacity, supported by numerical evidence.
Nevertheless, we are able to use our upper bound in that setting to characterize the asymptotic behavior of $C(p)$ as $p\to 1$.

Our strategy is to ``guess'' that the capacity-achieving output $Y$ must have full support over $\N_{>0}$.
More precisely, with \cref{lem:CPUC-UB-framework} in mind, we will characterize the distributions $Q$ that satisfy
\begin{equation}\label{eq:constraint-Y}
    \KL(P_{Y_x}\|Q)=\lambda x, \quad \textrm{for all $x\in\N_{>0}$},
\end{equation}
for some $\lambda$.
By that lemma, we conclude that all such $\lambda$'s provide an upper bound on $C(p)=\CPUC(\ANMC_{\Ber(p)})$.

Without loss of generality, we write
\begin{equation*}
    Q(y) = 2^{-\lambda(y-p) - h(p) - g(y)}
\end{equation*}
for some $\lambda$ and function $g$.
Note that not all choices of $\lambda$ and $g$ are allowed, since it must be the case that $\sum_{y=1}^\infty Q(y)=1$.
Writing $Q$ in this way is useful because
\begin{align*}
    \KL(P_{Y_x}\|Q) &= (1-p) \log(1/Q(x)) + p \log(1/Q(x+1)) - h(p)\\
    &= \lambda x +(1-p)g(x) + p g(x+1).
\end{align*}
Therefore, \cref{eq:constraint-Y} is equivalent to $g$ satisfying
\begin{equation*}
    (1-p)g(x) + p g(x+1)=0
\end{equation*}
for all $x\geq 1$, which is equivalent to $g(x+1) = -\frac{1-p}{p}\cdot g(x)$ for all $x\geq 1$.
By induction, we conclude that for \cref{eq:constraint-Y} to hold it must be that
\begin{equation*}
    g(x) = \left(-\frac{1-p}{p}\right)^{x-1} g(1)
\end{equation*}
for all $x\geq 1$.

We summarize this discussion in the following theorem, writing $\beta=g(1)$.
\begin{theorem}\label{thm:genbound}
    Fix $p\in[0,1]$ and $\lambda,\beta\in\R$.
    Suppose that
    \begin{equation*}
        Q(y) = 2^{-\lambda(y-p)-h(p)-\left(-\frac{1-p}{p}\right)^{y-1} \beta}
    \end{equation*}
    satisfies $\sum_{y=1}^\infty Q(y)=1$.
    Then, $C(p)\leq \lambda$.
    Furthermore, if there exists an input distribution $P_X$ that yields $Q$ as the output distribution over the $\ANMC_{\Ber(p)}$, then $C(p)=\lambda$.
\end{theorem}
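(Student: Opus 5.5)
The plan is to apply \cref{lem:CPUC-UB-framework} to the channel $\ANMC_{\Ber(p)}$ with input and output alphabets $\N_{>0}$, cost function $c(x)=x$, and the given distribution $Q$. Since $C(p)=\CPUC(\ANMC_{\Ber(p)})$ by \cref{lem:ANSC-to-ANMC}, both the upper bound and the equality claim follow once we show that $\KL(P_{Y_x}\|Q)=\lambda x$ for every $x\in\N_{>0}$.

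First, $Q$ is a bona fide distribution on $\N_{>0}$. Every term $Q(y)$ is strictly positive, being a power of $2$, and the terms sum to $1$ by hypothesis. Strict positivity matters because it makes every divergence $\KL(P_{Y_x}\|Q)$ finite.

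Next, fix $x\in\N_{>0}$ and write $g(y)=\left(-\frac{1-p}{p}\right)^{y-1}\beta$. The output $Y_x$ equals $x$ with probability $1-p$ and $x+1$ with probability $p$, so
\begin{equation*}
\KL(P_{Y_x}\|Q) = (1-p)\log\frac{1}{Q(x)}+p\log\frac{1}{Q(x+1)}-h(p).
\end{equation*}
Substituting $\log(1/Q(y))=\lambda(y-p)+h(p)+g(y)$ gives
\begin{equation*}
\KL(P_{Y_x}\|Q)=\lambda\bigl((1-p)(x-p)+p(x+1-p)\bigr)+(1-p)g(x)+pg(x+1)=\lambda x+(1-p)g(x)+pg(x+1).
\end{equation*}
The recursion $g(x+1)=-\frac{1-p}{p}g(x)$ makes the last two terms cancel, so $\KL(P_{Y_x}\|Q)=\lambda x$ exactly. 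The first part of \cref{lem:CPUC-UB-framework} then gives $C(p)\le\sup_x \lambda x/x=\lambda$.

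For the second claim, suppose some $P_X$ produces output distribution $Q$. We already have $\KL(P_{Y_x}\|Q)/x=\lambda$ for all $x$, so the inequality holds everywhere and is an equality wherever $P_X(x)>0$. The second part of \cref{lem:CPUC-UB-framework} then yields $C(p)=\lambda$.

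The only delicate point, and the main obstacle I expect, is the degenerate endpoints $p\in\{0,1\}$, where the ratio $-\frac{1-p}{p}$ or the entropy term $h(p)$ needs interpretation.
\begin{itemize}
\item[] For $p=0$, read $-\frac{1-p}{p}$ as its limit and adopt the convention $0^0=1$. Only the terms involving $g(x)$ with coefficient $1-p=1$ enter the divergence, and the cancellation requires $g(x)=0$ for $x\ge 2$. Under this reading $g(x)=0$ for $x\ge2$, so $g(x)=0$ for all $x\ge 1$ except that $g(1)$ is unconstrained; either way the computation still yields $\lambda x$.
\item[] For $p=1$ we have $g(y)=0$ for $y\ge 2$ and $g(1)=\beta$. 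Only $Q(x+1)$ appears in the divergence, so the identity again holds.
\end{itemize}
I would handle these two cases with a brief remark and otherwise assume $p\in(0,1)$. Beyond this, the argument is a direct verification.
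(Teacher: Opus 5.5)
Your argument is the paper's own. The paper writes $Q(y)=2^{-\lambda(y-p)-h(p)-g(y)}$ and computes $\KL(P_{Y_x}\|Q)=\lambda x+(1-p)g(x)+p\,g(x+1)$. It shows that $g(y)=\left(-\frac{1-p}{p}\right)^{y-1}\beta$ is exactly what makes the last two terms cancel, and then invokes both parts of \cref{lem:CPUC-UB-framework}. The only difference is that the paper derives this $g$ as forced by $\KL(P_{Y_x}\|Q)=\lambda x$, while you only verify sufficiency, which is all the theorem needs.

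Your $p=1$ remark is fine, but your $p=0$ remark (a case the paper does not treat) is wrong. At $p=0$ the ratio $-\frac{1-p}{p}$ tends to $-\infty$, not $0$; you seem to have treated it as in the $p=1$ case. Since $Y_x=x$ deterministically, $\KL(P_{Y_x}\|Q)=\lambda x+g(x)$ for every $x\ge 1$. So $g(1)=\beta$ is not free: at $x=1$ the divergence is $\lambda+\beta$. The case only makes sense with $\beta=0$, which the normalization hypothesis forces under the limiting reading.
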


\subsection{The $p\leq 1/2$ setting}\label{sec:p-leq-half}

We will now use \cref{thm:genbound} to determine $C(p)$ for all $p\leq 1/2$, leading to \cref{thm:cap-below-half}, which we restate here.

\capbelowhalf*

\begin{proof}
    The part about $p\geq 1/\varphi^2$ was already established in \cref{thm:opt-rlc}, so we focus on the setting where $p<1/\varphi^2$.

    Note that when $p<1/2$ the quantity $\left(\frac{1-p}{p}\right)^{n-1}$ grows exponentially with $n$.
    Therefore, in this setting, the only choice of $\beta$ that makes $Q$ in \cref{thm:genbound} a valid probability mass function is $\beta=0$, and so we focus on
    \begin{equation}\label{eq:simple-PY}
        Q(y) = 2^{-\lambda(y-p) - h(p)}.
    \end{equation}
    Since it must be the case that $\sum_{y=1}^\infty Q(y)=1$, we conclude that $\lambda$ must be the (unique) solution to
    \begin{equation}\label{eq:lambda-constraint}
        \sum_{y=1}^\infty Q(y)=\frac{2^{\lambda\cdot p-h(p)}}{2^\lambda -1} =1,
    \end{equation}
    in which case $C(p)\leq \lambda$.
    
    By \cref{thm:genbound}, to prove that $C(p)=\lambda$ it suffices to show that $Q$ with $\beta=0$ and $\lambda$ the unique solution to \cref{eq:lambda-constraint} can be realized as an output distribution of the $\ANMC_{\Ber(p)}$.
    This holds if there exists $X$ such that we can write $Q$ as the distribution of $X+Z$
    with $Z$ an independent $\Ber(p)$ random variable.
    We can use this to obtain a candidate generating function $G_X$ for $P_X$.
    This candidate generating function is not guaranteed to be a \emph{probability generating function}, in the sense that its terms need not all be non-negative.
    We then show that this is indeed the case, and so conclude that the desired $X$ exists.

    Let $G$ denote the probability generating function of $Q$.
    For $Z\sim \Ber(p)$ independent of $X$ and $w\in\C$ we have
    \begin{equation*}
        G(w) = \E[w^Y]= \E[w^{X+Z}]=\E[w^X]\cdot \E[w^Z] = G_X(w)(1-p+pw).
    \end{equation*}
    Therefore, we must have
    \begin{equation*}
        G_X(w) = \frac{G(w)}{1-p+pw}
    \end{equation*}
    in the disk $\{w\in\C:|w|<1\}$ (recall that $p\leq 1/2$).

    Recalling \cref{eq:simple-PY}, it is not hard to see that $G$ admits the closed-form
    \begin{equation*}
        G(w) = \frac{2^{\lambda p-h(p)}w}{2^\lambda-w},
    \end{equation*}
    and so
    \begin{equation}\label{eq:pgf-x}
        G_X(w) = \frac{2^{\lambda p-h(p)}w}{(1-p+pw)(2^\lambda-w)}.
    \end{equation}
    Note that $G_X(w)$ is analytic in the disk $|w|<1$.
    To see that $P_X$ is a valid probability distribution it is enough to show that the coefficients of Taylor expansion of $G_X(z)$ around $w=0$ are all non-negative and that the order-$0$ coefficient is $0$, since we know that $\sum_{x\geq 1}P_X(x)=G_X(1)=\frac{G(1)}{1-p+p}=1$.
    To simplify the exposition, we ignore the $2^{\lambda p-h(p)}$ multiplicative factor in \cref{eq:pgf-x} as it does not depend on $x$ and is always positive.
    In other words, we have to show that the coefficients of the Taylor expansion of
    \begin{equation*}
        \frac{w}{(1-p+pw)(2^\lambda-w)}
    \end{equation*}
    around $w=0$ are always positive.
    It is not hard to see by induction that the order-$n$ coefficient of the Taylor expansion of $\frac{w}{(1-p+pw)(2^\lambda-w)}$ around $w=0$ is
    \begin{equation*}
        \alpha_n=\sum_{i=0}^{n-1} \frac{(-p)^i}{2^{\lambda(n-i)}(1-p)^{i+1}} = \frac{1}{2^{\lambda n}(1-p)} \sum_{i=0}^{n-1} \left(-\frac{p2^{\lambda}}{1-p}\right)^i = \frac{1-\left(-\frac{p 2^\lambda}{1-p}\right)^n}{2^{\lambda n}(1-p+p2^\lambda)}.
    \end{equation*} 
    Note that $\alpha_0=0$ and, for $n\geq 1$, we have $\alpha_n\geq 0$ (and hence $P_X(n)\geq 0$) if and only if
    \begin{equation*}
        1\geq \left(-\frac{p 2^\lambda}{1-p}\right)^n.
    \end{equation*}
    This is clearly true when $n$ is odd, because the right-hand side is always non-positive in that case.
    When $n$ is even, this is equivalent to
    \begin{equation*}
        0\leq\lambda \leq \log\left(\frac{1-p}{p}\right).
    \end{equation*}
To see this, recall that $\lambda$ is the unique solution to \cref{eq:lambda-constraint}.
Since $\lambda\mapsto\sum_{y\geq 1}Q(y)$ is a decreasing function of $\lambda>0$, to verify \cref{eq:lambda-constraint} it suffices to check that replacing $\lambda$ by $\log\left(\frac{1-p}{p}\right)$ makes the left-hand side at most $1$, i.e., that
\begin{equation*}
    \frac{2^{\log\left(\frac{1-p}{p}\right) p-h(p)}}{2^{\log\left(\frac{1-p}{p}\right)} -1} = \frac{p(1-p)}{1-2p} \leq 1.
\end{equation*}
This holds if $0\leq p\leq 1/\varphi^2$.
Finally, to get the desired statement in the theorem define $x_p=2^\lambda$ and rewrite \cref{eq:lambda-constraint} in terms of $x_p$.
\end{proof}

\subsection{The $p>1/2$ setting}\label{sec:p-geq-half}

Note that for any fixed $p\in[1/2,1]$ the term $\left(\frac{1-p}{p}\right)^{n-1}$ decreases exponentially fast to $0$ as $n\to\infty$.
Therefore, we get that for every $p\in[1/2,1]$ and $\beta\in\R$ there is a (unique) $\lambda$ such that
\begin{equation*}
    \sum_{y=1}^\infty Q(y) = \sum_{y=1}^\infty 2^{-\lambda(y-p)-h(p)-\left(-\frac{1-p}{p}\right)^{y-1}\beta}=1.
\end{equation*}
Therefore, invoking \cref{thm:genbound} with $p\in[1/2,1]$ directly yields \cref{thm:cap-ub-largep}, which we restate here.

\caplarge*

\subsubsection{Asymptotic behavior of $C(p)$ as $p\to 1$}\label{sec:p-to-1}

In this section we use \cref{thm:cap-ub-largep} to determine the asymptotic behavior of $C(p)$ as $p\to 1$ up to low-order terms, as described in the following theorem.
\begin{theorem}\label{thm:p-to-1}
    We have
    \begin{equation*}
        C(1-\eps)=1-(1+o(1))\frac{h(\eps)}{4},
    \end{equation*}
    where $o(1)\to 0$ as $\eps\to 0$.
\end{theorem}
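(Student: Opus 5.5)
The plan is to prove matching upper and lower bounds on $C(1-\eps)$, each of the form $1-(1+o(1))\frac{h(\eps)}{4}$. Throughout, write $p=1-\eps$ and $r=\frac{1-p}{p}=\frac{\eps}{1-\eps}$. The upper bound will come from \cref{thm:cap-ub-largep} with a well-chosen $\beta=\beta(\eps)$. The lower bound will come from plugging a geometric input distribution into $I(X;Y)/\E[X]$.

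\emph{Upper bound.} In \cref{thm:cap-ub-largep} the normalization reads
\begin{equation*}
    2^{-h(\eps)}\Bigl(2^{-\lambda\eps-\beta}+2^{-\lambda(1+\eps)+r\beta}+\sum_{y\geq 3}2^{-\lambda(y-1+\eps)-(-r)^{y-1}\beta}\Bigr)=1 .
\end{equation*}
Taking $\beta=0$ is useless here: it gives $\lambda\approx-\log(1-2^{-h(\eps)})$, which is large. The point is to take $\beta>0$ large enough to suppress the $y=1$ term. This costs only the factor $2^{r\beta}$ in the $y=2$ term.

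I will take $\beta=\Theta(\log(1/\eps))$. Then $r\beta=O(\eps\log(1/\eps))=O(h(\eps))$, while $r^{k}\beta=o(h(\eps))$ for $k\geq 2$. So for $y\geq 3$ the correction $(-r)^{y-1}\beta$ in the exponent contributes only $(1+o(h(\eps)))$ factors, and the tail is $\frac{2^{-\lambda(2+\eps)}}{1-2^{-\lambda}}(1+o(h))$.

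Next substitute $t=2^{-\beta}$ and $\lambda=1-\gamma$ with $\gamma\to0$. The equation becomes, to first order in $h$,
\begin{equation*}
    2^{-h}\bigl(t+t^{-r}2^{-\lambda}+2^{-2\lambda}/(1-2^{-\lambda})\bigr)\approx 1 .
\end{equation*}
The left side is increasing in $\lambda$'s complement, so minimizing $\lambda$ amounts to minimizing the bracket over $t$, which can be done in closed form. I then expand around $\lambda=1$, where the bracket without noise equals $1$, and solve for $\gamma$. I expect the optimizer to be $t\approx r\cdot\Theta(1)$ and the resulting balance to give $\gamma=(1+o(1))h(\eps)/4$. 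The constant $1/4$ should arise as $h$ divided by the $\lambda$-derivative of $\log$ of the bracket at $\lambda=1$, combined with the optimized $t$-terms. Checking that this constant is exactly $1/4$ is the main computational step.

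\emph{Lower bound.} I take $P_X(n)=(1-q)q^{n-1}$ on $\N_{>0}$ with $q=\frac12+\eta$, where $\eta\to0$ is tuned to $\eps$. Then $I(X;Y)=H(Y)-h(\eps)$ and $Y=X+Z$ with $Z\sim\Ber(1-\eps)$. The output law is explicit:
\begin{equation*}
    P_Y(1)=\eps(1-q), \qquad P_Y(y)=(1-q)q^{y-2}\bigl(\eps q+1-\eps\bigr) \text{ for } y\geq 2 .
\end{equation*}
From this, $H(Y)$ is a closed-form expression. I will expand $H(Y)/\E[X]$ to order $h(\eps)$ and then optimize $\eta$. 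The intuition is that the output is almost a shifted geometric, so $H(Y)\approx H(X)$ plus a small gain from the $y=1$ atom. If the plain geometric fails to reach the constant $1/4$, I will fall back to a geometric input modified at $n=1$, which mimics the $Q$ built for the upper bound. I will verify that, in the limit, its output matches $Q$ at the optimal $\beta$ up to $o(h)$.

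\emph{Main obstacle.} The hard part is the delicate second-order bookkeeping. The $y=1$ and $y=2$ terms interact through $t$ and $r$, and one must confirm that the choice of $\beta$ (respectively $\eta$) makes both bounds agree in the constant $1/4$. One must also show that all neglected terms, such as $r^2\beta$ and the $\eps$-shifts in the exponents, are genuinely $o(h(\eps))$.
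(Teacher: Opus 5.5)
Your proposal is correct and follows essentially the paper's route. For the upper bound, the paper also applies \cref{thm:cap-ub-largep} with $\beta=\log(1/\eps)+O(1)$: it takes $\beta=\frac{p\,h(p)}{1-p}$, so that $r\beta=h(p)$ and the $y=2$ term is exactly $2^{-\lambda(2-p)}$, rather than your optimized $t=r/2$. It then tracks the $O(\eps)$ terms to get the sharper non-asymptotic bound $C(p)\le 1-h(p)/4$ for $p$ near $1$. For the lower bound, the paper uses exactly your $\eta=0$ case, $P_X(x)=2^{-x}$, which already gives rate $\frac{2-\eps+h(\eps/2)-h(\eps)}{2}=1-(1+o(1))\frac{h(\eps)}{4}$, so neither the tuning of $\eta$ nor the fallback input is needed.
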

\begin{proof}
    We prove this theorem in two stages.
    First, we show that for all $\eps>0$ small enough we have $C(1-\eps)\leq 1-\frac{h(\eps)}{4}$.
    Second, we analyze the rate achieved by a geometric input distribution, and show that it is $1-(1+o(1))\frac{h(\eps)}{4}$.
    Together, these results yield the desired theorem statement.

    To get the upper bound on $C(p)$ for $p=1-\eps$ large we instantiate \cref{thm:cap-ub-largep} with $\beta =\frac{p h(p)}{1-p}$.
    consider the function $f(\alpha,p)$ defined as
\begin{equation*}
    f(\lambda,p)=\sum_{n=1}^\infty 2^{-\lambda(n-p)-h(p)-g(n)} = \sum_{n=1}^\infty 2^{-\lambda(n-p)-h(p)-\left(-\frac{1-p}{p}\right)^{n-1}\frac{p h(p)}{1-p}}.
\end{equation*}
From \cref{thm:cap-ub-largep}, we know that for each $p\in(1/2,1)$ the unique $\lambda_p> 0$ such that $f(\lambda_p,p)=1$ is an upper bound on $C(p)$.
Furthermore, with $p$ fixed the map $\lambda\mapsto f(\lambda,p)$ is decreasing in $\lambda$.
Our goal is to show that
\begin{equation}\label{eq:ub-alphap}
    \lambda_p \leq 1-\frac{h(p)}{4}
\end{equation}
for all sufficiently large $p<1$, which implies the same upper bound on $C(p)$ via the discussion above.
We will derive \cref{eq:ub-alphap} in two steps:
\begin{enumerate}
    \item We upper bound $f(\lambda,p)$ by a more tractable function $\tilde{f}(\lambda,p)$.

    \item We show that  $\tilde{f}(1-h(p)/4,p)\leq 1$ for all sufficiently large $p<1$.
    This implies that $f(1-h(p)/4,p)\leq 1$ for all sufficiently large $p<1$ too.
    Then, since $\lambda\mapsto f(\lambda,p)$ is decreasing in $\lambda$, we can conclude that $\lambda_p\leq 1-h(p)/4$ for all sufficiently large $p<1$.
\end{enumerate}

We obtain $\tilde{f}$ via the following claim.
\begin{claim}
    For all $\lambda>0$ and $p\in (1/2,1)$ we have that
    \begin{equation*}
        f(\lambda,p)\leq 2^{-\lambda(1-p)-\frac{h(p)}{1-p}}+2^{-\lambda(2-p)} + 2^{-\lambda(3-p) - \frac{h(p)}{p}}+\frac{2^{-\lambda(3-p)-h(p)+\left(\frac{1-p}{p}\right)^2 h(p)}}{2^\lambda-1} =: \tilde{f}(\lambda,p).
    \end{equation*}
\end{claim}
\begin{proof}
    We may write
    \begin{equation}\label{eq:f-first-terms}
        f(\lambda,p)=2^{-\lambda(1-p)-\frac{h(p)}{1-p}}+2^{-\lambda(2-p)} + 2^{-\lambda(3-p) - \frac{h(p)}{p}} + \sum_{n=4}^\infty 2^{-\lambda(n-p)-h(p)-\left(-\frac{1-p}{p}\right)^{n-1}\frac{p h(p)}{1-p}}.
    \end{equation}
    Then, note that $\left(\frac{1-p}{p}\right)^{n-1}$ is decreasing in $n$ for fixed $p>1/2$.
    Therefore,
    \begin{equation*}
        2^{-\lambda(n-p)-h(p)-\left(-\frac{1-p}{p}\right)^{n-1}\frac{p h(p)}{1-p}} \leq 2^{-\lambda(n-p)-h(p)+\left(\frac{1-p}{p}\right)^{2}h(p)}
    \end{equation*}
    for all $n\geq 4$,
    and so
    \begin{equation}\label{eq:f-rem-terms}
        \sum_{n=4}^\infty 2^{-\lambda(n-p)-h(p)-\left(-\frac{1-p}{p}\right)^{n-1}\frac{p h(p)}{1-p}} \leq \sum_{n=4}^\infty 2^{-\lambda(n-p)-h(p)+\left(\frac{1-p}{p}\right)^{2}h(p)} = \frac{2^{-\lambda(3-p)-h(p)+\left(\frac{1-p}{p}\right)^2 h(p)}}{2^\lambda-1}.
    \end{equation}
    Combining \cref{eq:f-first-terms,eq:f-rem-terms} yields the desired upper bound.
\end{proof}

Define $\gamma(p)=\tilde{f}(1-h(p)/4,p)$.
Then, it is not hard to check that $\gamma'(1)=\ln 2 - 1/e$, and so
\begin{equation*}
    \gamma(p)= 1-(\ln 2 - 1/e)(1-p)+O((1-p)^2).
\end{equation*}
Since $\ln 2>1/e$, it follows that there exists $p_0<1$ such that $\gamma(p)=\tilde{f}(1-h(p)/4,p)<1$ for all $p\in(p_0,1)$.
As discussed above, this implies that
\begin{equation}\label{eq:ub-largep}
    C(p)\leq \lambda_p \leq 1-h(p)/4
\end{equation}
for all $p\in(p_0,1)$.

We now turn to lower bounding $C(p)$ for large $p$.
When $p=1$ the capacity-achieving input distribution $X$ is geometric with $P_X(x)=2^{-x}$.
Therefore, we should expect that this input distribution gives a fairly sharp lower bound when $p\to 1$.
Note that $\E[X]=2$, and so the achievable rate is
\begin{equation*}
    \frac{I(X;Y)}{\E[X]}= \frac{H(Y)-H(Y|X)}{2}=\frac{H(Y)-h(p)}{2}.
\end{equation*}
where $Y$ is the output satisfying $P_Y(1)=(1-p)P_X(1)=\frac{1-p}{2}$ and, for $n\geq 2$,
\begin{equation*}
    P_Y(y) = p \cdot P_X(y-1)+(1-p) P_X(y) = p 2^{-n+1}+(1-p)2^{-n} = 2^{-n}(1+p).
\end{equation*}
Therefore,
\begin{align*}
    H(Y) &= \frac{1-p}{2}\log\left(\frac{2}{1-p}\right)+\sum_{y\geq 2} P_Y(y)\log(1/P_Y(y)) \\
    &=\frac{1-p}{2}\log\left(\frac{2}{1-p}\right)+(1+p)\sum_{y\geq 2} 2^{-y}(y-\log(1+p))\\
    &= \frac{1-p}{2}\log\left(\frac{2}{1-p}\right)+(1+p)\left(\frac{3}{2}-\frac{\log(1+p)}{2}\right)\\
    &= 1+p + h\left(\frac{1-p}{2}\right).
\end{align*}
Consequently, we have the lower bound
\begin{equation}\label{eq:lb-geom2}
    C(p) \geq \frac{ 1+p + h\left(\frac{1-p}{2}\right)-h(p)}{2} = 1-(1+o(1))\frac{h(p)}{4}.
\end{equation}
The theorem statement follows by combining \cref{eq:ub-largep,eq:lb-geom2} and noting that for $p=1-\eps$ we have $h(p)=h(\eps)$.
\end{proof}

A plot of the closed-form upper and lower bounds obtained in the proof of \cref{thm:p-to-1} are given in \cref{fig:p-1-ublb} together with a numerical approximation of $C(p)$.

\subsubsection{Asymptotic behavior of $C(p)$ as $p\to 1/2$}\label{sec:p-to-half}

In this section we study the asymptotic behavior of $C(p)$ in the regime where $p>1/2$ and $p\to 1/2$. First we present an upper bound on the capacity.

\begin{theorem} \label{thm:UP-p-half}
    For $\eps\rightarrow 0$ from above, we have
    \begin{equation*}
        C\left(\frac{1}{2}+\eps\right) \leq \log\varphi\cdot \left(1+\frac{2\eps}{2+\varphi}\right)+O\left(\eps^2\right).
    \end{equation*}
\end{theorem}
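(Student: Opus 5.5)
We will prove the bound by applying \cref{thm:cap-ub-largep} with a single, well-chosen $\beta$ that does not depend on $\eps$, and then expanding the resulting $\lambda(p,\beta)$ to first order in $p-\frac12$. The choice of $\beta$ comes from the boundary case $p=1/2$. There $-\frac{1-p}{p}=-1$, and we ask for
\begin{equation*}
Q(y)=2^{-\lambda(y-\frac12)-1-(-1)^{y-1}\beta}
\end{equation*}
to coincide with the output distribution of the naive RLC input $P_X(n)=\varphi^{-n}\mathds{1}[n \text{ odd}]$. This output distribution is $P_Y(y)=\frac12\varphi^{-y}$ for odd $y$ and $P_Y(y)=\frac12\varphi^{-(y-1)}$ for even $y$. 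Matching exponents with $\lambda=\log\varphi$ gives, for both parities, the same condition $\beta_0=\frac{\log\varphi}{2}$. Hence $F(\lambda,p):=\sum_{n\ge1}2^{-\lambda(n-p)-h(p)-r(p)^{n-1}\beta_0}$, with $r(p)=-\frac{1-p}{p}$, satisfies $F(\log\varphi,\frac12)=1$. For every $p\in(\frac12,1]$, \cref{thm:cap-ub-largep} gives $C(p)\le\lambda_p$, where $F(\lambda_p,p)=1$.

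Next I will show that $F$ is $C^2$ in $(\lambda,p)$ on a neighbourhood of $(\log\varphi,\frac12)$ intersected with $p\ge\frac12$, using dominated convergence. The key point is that $|r(p)|\le1$ for $p\ge\frac12$. Consequently every $p$- or $\lambda$-derivative of the $n$-th term, up to order two, is bounded by a polynomial in $n$ times $2^{-\lambda n}$ times a constant. This is summable uniformly for $\lambda$ near $\log\varphi>0$. We also have $F_\lambda=-\ln 2\sum_n Q_n(n-p)<0$. The one-sided implicit function theorem then gives
\begin{equation*}
\lambda_{1/2+\eps}=\log\varphi+\lambda'\eps+O(\eps^2),\qquad \lambda'=-\frac{F_p}{F_\lambda}\Big|_{(\log\varphi,1/2)}.
\end{equation*}

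It then remains to compute $\lambda'$ and check that $\lambda'\le\frac{2\log\varphi}{2+\varphi}$. Differentiating the exponent in $p$ gives $\lambda-h'(p)-\beta_0(n-1)r^{n-2}r'(p)$. At $p=\frac12$ this equals $\lambda-4\beta_0(n-1)(-1)^n$, since $h'(\frac12)=0$ and $r'=1/p^2=4$. Therefore
\begin{equation*}
\lambda'=\frac{\sum_nQ_n\bigl(\log\varphi-4\beta_0(n-1)(-1)^n\bigr)}{\sum_nQ_n(n-\frac12)},
\end{equation*}
with $Q_n=P_Y(n)$ as above. The denominator is $\E[X]+\frac12-\frac12=\E[X]$, which is explicit in $\varphi$. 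In the numerator, pairing odd $n=k$ with even $n=k+1$ gives $\sum_nQ_n(n-1)(-1)^n=\frac12\sum_{k\text{ odd}}\varphi^{-k}=\frac12$. A preliminary computation therefore suggests $\lambda'=0$, which is stronger than needed. In either case the conclusion is $\lambda'\le\frac{2\log\varphi}{2+\varphi}$, and the theorem follows.

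The main obstacle is making this expansion rigorous at the boundary point $p=\frac12$. There the factor $r^{n-1}$ no longer decays, and for $p<\frac12$ the series diverges, so only a one-sided expansion is available. I will handle this by Taylor's theorem with remainder in $p$ on $[\frac12,\frac12+\eps_0]$, with all second derivatives uniformly bounded by the domination above. I will also double-check the sign and the value of the numerator sum. If it turns out not to vanish, then the value $\frac{2\log\varphi}{2+\varphi}$ should arise from this same formula, possibly after allowing $\beta=\beta_0+c\,\eps$ with $c$ chosen to minimise $\lambda'$.
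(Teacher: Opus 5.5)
Your proof is correct, and it takes a genuinely different and sharper route than the paper's.

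\textbf{The paper's route.} The paper also applies \cref{thm:cap-ub-largep} with essentially your $\beta$: it takes $\beta=1+\frac{\log\varphi}{2}-h(p)$, which equals your $\beta_0$ up to $O(\eps^2)$. Instead of differentiating the exact series, it bounds each term separately. For odd $y$ it uses $\left(\frac{1-p}{p}\right)^{y-1}\ge 1-4(y-1)\eps$, and for even $y$ it uses the crude bound $\left(\frac{1-p}{p}\right)^{y-1}\le\frac{1-p}{p}$. The sum then becomes a closed-form geometric expression whose Taylor expansion in $p$ is elementary and needs no limit interchange. The price is the even-$y$ estimate, which throws away roughly $4(y-2)\eps\beta_0$ from each even exponent. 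Summed against the $p=\frac12$ output distribution, this loss is a first-order excess of $2(\varphi-1)\log\varphi\ln 2\cdot\eps$, which is exactly where the linear term $\frac{2\log\varphi}{2+\varphi}\eps$ comes from.

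\textbf{Your route.} You keep the exact series and pay instead with a dominated-convergence argument. That argument works uniformly on $[\frac12,\frac12+\eps_0]$ because $|r(p)|\le 1$ there. Your evaluation $\sum_n Q_n(n-1)(-1)^n=\frac12$ is right, so the numerator vanishes. You therefore obtain $C(\frac12+\eps)\le\log\varphi+O(\eps^2)$. Combined with \cref{prop:LB-p-half}, this gives $C(\frac12+\eps)-\log\varphi=\Theta(\eps^2)$, consistent with the numerics.

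\textbf{Two simplifications.} First, you can drop the hedging at the end. That $\lambda'=0$ is also explained by an exact identity. Let $Q_p$ be the distribution of \cref{thm:genbound} with $\beta=\beta_0$ and $\lambda=\lambda(p,\beta_0)$. Since $\KL(P_{Y_x}\|Q_p)=\lambda(p,\beta_0)\,x$ for every $x$, the naive RLC input $X$ gives $\lambda(p,\beta_0)\,\E[X]=I(X;Y)+\KL(P_Y\|Q_p)$. Hence $\lambda(p,\beta_0)=\log\varphi+\KL(P_Y\|Q_p)/\sqrt5$, and the divergence vanishes at $p=\frac12$. Second, you do not need the implicit function theorem. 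Taylor's theorem with remainder on the closed one-sided rectangle shows $F(\log\varphi+K\eps^2,\frac12+\eps)\le 1$ for a suitably large constant $K$. Monotonicity of $F$ in $\lambda$ then gives $\lambda_{1/2+\eps}\le\log\varphi+K\eps^2$ directly.
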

\begin{proof}
Let $p=\frac{1}{2}+\eps$. To get the upper bound on $C\left(\frac{1}{2}+\eps\right)$, we instantiate \cref{thm:cap-ub-largep} with a careful choice of $\beta>0$. 
First, we upper bound $Q(n)$ by a more tractable expression. Note that for all odd $y\geq 1$ we have
\begin{align}
Q(y) &=2^{-\lambda(y-p) - h(p) - \left(\frac{1-p}{p}\right)^{y-1}\beta}\nonumber\\
&\leq 2^{-\lambda(y-p) - h(p) - \left(1-4\eps\right)^{y-1}\beta}\nonumber\\
&\leq 2^{-\lambda(y-p) - h(p)-\beta + 4(y-1)\eps\beta}\nonumber,
\end{align}
while for all even $y\geq 2$ we have
\begin{align}
Q(y)&=2^{-\lambda(y-p) - h(p) + \left(\frac{1-p}{p}\right)^{y-1}\beta}\nonumber\\
&\leq 2^{-\lambda(y-p)- h(p)-\beta +\frac{\beta}{p}}\nonumber.
\end{align}
Therefore
\begin{align}
   1=\sum_{y=1}^\infty Q(y)&\leq \sum_{y=1}^\infty 2^{-\lambda(2y-1-p) - h(p)-\beta + 4(2y-2)\eps\beta} + \sum_{y=1}^\infty 2^{-\lambda(2y-p) - h(p)-\beta +\frac{\beta}{p}}\nonumber\\
    \label{upper_bound_of_1}
    &= 2^{\lambda p - h(p)-\beta}\left( \frac{2^{\lambda -8\eps\beta}}{2^{2\lambda -8\eps\beta} - 1} +  \frac{2^{\frac{\beta}{p}}}{2^{2\lambda} - 1}\right).
\end{align}
We will now take
\begin{align}
\label{choice_g_1}
    \beta &= 1+\frac{\log \varphi}{2}-h(p).
\end{align}
To motivate this choice of $\beta$, note that for $\eps=0$ the upper bound in \cref{upper_bound_of_1} is minimized by the choice in \cref{choice_g_1}.
Now, denote
\begin{align}
\label{choice_lambda_p}
    \lambda(p)=\log \varphi + \Delta(p).
\end{align}
for some function $\Delta$  satisfying $\Delta(1/2) = 0$, and denote
\begin{equation*}
    f(p)= 2^{\lambda p - h(p)-\beta}\left( \frac{2^{\lambda -8\eps\beta}}{2^{2\lambda -8\eps\beta} - 1} +  \frac{2^{\frac{\beta}{p}}}{2^{2\lambda} - 1}\right).
\end{equation*}
Note that if $f(p)\leq 1$, then $\lambda\leq \lambda(p)$ for that choice of $p$ because
the right-hand side of \cref{upper_bound_of_1} is decreasing with $\lambda$ and greater than $1$.
Since we are interested in the behavior around $p=1/2$, we focus on the corresponding Taylor expansion
\begin{align*}
    f(p)
    &=1+\left(p-\frac{1}{2}\right) \Bigg(\left(1-2\varphi\right)\Delta'\left(\frac{1}{2}\right)+ 2(\varphi-1)\log \varphi\Bigg)\ln 2+O\left(\left(p-\frac{1}{2}\right)^2\right).
\end{align*}
This means that $f(p)\leq 1$ for $p$ sufficiently close to $1/2$ provided that
\begin{align*}
    \Delta'\left(\frac{1}{2}\right)\geq  \frac{2(\varphi-1)}{2\varphi-1} \log \varphi= \frac{2}{\varphi+2}\log \varphi.
\end{align*}
   Now, note that
   \begin{equation}
   \label{choice_delta_p}
        \Delta(p) = \left(\frac{2}{\varphi+2} \log \varphi+\delta\right)\left(p-\frac{1}{2}\right)
    \end{equation}
    satisfies $\Delta(1/2)=0$ and $\Delta'(1/2)\geq2\log \varphi/(\varphi+2)$ for any $\delta>0$.
    Injecting \cref{choice_delta_p} in \cref{choice_lambda_p}, we conclude that
    \begin{equation*}
        \lambda \leq \log \varphi \cdot \left(1 + \left(\frac{2}{\varphi+2}+\delta\right)\left(p-\frac{1}{2}\right)\right) +O\left(\left(p-\frac{1}{2}\right)^2\right),
    \end{equation*}
    for any $\delta>0$, provided that $p$ is sufficiently close to $1/2$ (depending on $\eps$).
\end{proof}

To complement the asymptotic upper bound above, we present an asymptotic lower bound on the capacity in the regime $p\rightarrow 1/2$.

\begin{theorem} \label{prop:LB-p-half}
    For $\eps>0$ we have
    \begin{equation*}
    C\left(\frac{1}{2}+\eps\right)\geq \log \varphi \cdot \left( 1 + \log \varphi \cdot  \frac{4(3-\varphi)\ln 2}{5\sqrt{5}(\varphi-1)}\eps^{2} \right)  - O\left(\eps^{3}\right).
    \end{equation*}
\end{theorem}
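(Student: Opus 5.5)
We plan to lower bound $C(\tfrac12+\eps)$ by exhibiting an explicit input distribution and expanding its rate $I(X;Y)/\E[X]$ to second order in $\eps$. The intro indicates the construction: a convex combination of geometric-like distributions. Concretely, take
\begin{equation*}
    P_X^{(t)} = (1-t)P_{\mathrm{odd}} + t\,P_{\mathrm{even}},
\end{equation*}
where $P_{\mathrm{odd}}(n)=\varphi^{-n}$ on odd $n$ is the naive RLC distribution from the proof of \cref{thm:opt-rlc}. The companion $P_{\mathrm{even}}$ is a geometric-like law supported on even integers, for instance $P_{\mathrm{even}}(n)\propto\varphi^{-n}$ on even $n$. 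We then set $t=c\,\eps$ with $c$ to be optimized. The reason for this ansatz comes from the proof of \cref{thm:opt-rlc}: at $p=\tfrac12$ the gap function $\Delta(p)$ vanishes, so even inputs are exactly at the Abdel-Ghaffar threshold. For $p=\tfrac12+\eps$ we have $\Delta(p)>0$, with $\Delta(p)=\Delta'(\tfrac12)\eps+O(\eps^2)$ and $\Delta'(\tfrac12)>0$. Moving mass of order $\eps$ onto even inputs should therefore produce a gain of order $\eps\cdot t\sim\eps^2$, which matches the claimed $\eps^2$ term.

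The main computation is a perturbative expansion of $R(t,p)=I(X;Y)/\E[X]$ around $(t,p)=(0,\tfrac12)$.
\begin{enumerate}
\item Compute $\partial_t R$ at $t=0$ via the standard first-variation identity for mutual information:
\begin{equation*}
    \partial_t I = \sum_x \bigl(P_{\mathrm{even}}(x)-P_{\mathrm{odd}}(x)\bigr)\,\KL(P_{Y_x}\|P_Y).
\end{equation*}
Using the divergences already computed in the proof of \cref{thm:opt-rlc}, we have $\KL(P_{Y_x}\|P_Y)=x\log\varphi$ for odd $x$ and $x\log\varphi+\Delta(p)$ for even $x$. This gives
\begin{equation*}
    \partial_t R\big|_{t=0} = \frac{\Delta(p)}{\E_{\mathrm{odd}}[X]}.
\end{equation*}
\item Compute $\partial_t^2 R$ at $t=0,\ p=\tfrac12$. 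It is strictly negative, since it equals minus a $\chi^2$-type curvature of the output distribution together with cost-normalization terms, all explicit geometric sums in $\varphi$.
\item Write $R\approx \log\varphi + t\,a\eps + \tfrac12 t^2 b$ with $a=\Delta'(\tfrac12)/\E_{\mathrm{odd}}[X]$ and $b<0$. Optimizing over $t$ gives $t^\ast=-a\eps/b$ and a gain of $a^2\eps^2/(2|b|)$, with error $O(\eps^3)$ from the third-order Taylor terms. These terms are bounded uniformly because every series involved is geometric with ratio bounded away from $1$ near $p=\tfrac12$.
\end{enumerate}

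Finally we evaluate the constants. We have $\E_{\mathrm{odd}}[X]=\sum_k(2k-1)\varphi^{-(2k-1)}$. Differentiating the $\Delta$ of \cref{thm:opt-rlc} at $p=\tfrac12$ gives $\Delta'(\tfrac12)=2\log\varphi - 2\cdot(\text{const}/\ln2)$, which simplifies using $\log$ derivatives. We then match $a^2/(2|b|)$ with $(\log\varphi)^2\frac{4(3-\varphi)\ln2}{5\sqrt5(\varphi-1)}$, using $\varphi^2=\varphi+1$ and $\sqrt5=2\varphi-1$.

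The main obstacle is step 2: getting the second derivative exactly right, including how $\E[X]$ changes with $t$ and the cross terms with the output entropy. It may also turn out that the exact constant requires a slightly different even-part distribution than $\varphi^{-n}$, for example a geometric with a free ratio. If the constant does not match, I would first try letting the even component's ratio be a free parameter and optimizing it jointly with $c$. Since the statement is a lower bound, any admissible choice whose expansion is at least the claimed expression suffices.
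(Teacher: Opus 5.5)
Your proposal is correct and is essentially the paper's argument. The paper also mixes the naive RLC law (weight $1-\gamma\eps$) with an even-supported geometric law (weight $\gamma\eps$), expands to second order, and optimizes $\gamma$. It differs in two details: it expands $H(Y)$ directly rather than differentiating $R$, and its even component is $P_{X_2}(x)=2^{-x/2}$ rather than your $\varphi^{1-x}$.

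Three remarks will make your constant-matching step concrete.
\begin{itemize}
\item One has $\Delta(p)=(2p-1)\log\frac{\varphi p}{1-p}$, so $\Delta'(\tfrac12)=2\log\varphi$ exactly; the extra $1/\ln 2$ term you wrote is zero.
\item Since $\Delta(\tfrac12)=0$, the cost-normalization cross terms in $\partial_t^2 R$ vanish at $p=\tfrac12$. This leaves $b=-\chi^2(P_{Y_2}\|P_{Y_1})/(\sqrt5\ln 2)$. The first-order coefficient $\Delta(p)/\sqrt5$ is the same for every even-supported component, so the gain is $\frac{2\ln 2\,\log^2\varphi}{\sqrt5\,\chi^2}\eps^2$ and only the $\chi^2$ depends on that choice.
\item The paper's component gives $\chi^2=\sqrt5/2$, which yields exactly the stated constant, since $\frac{4(3-\varphi)}{5\sqrt5(\varphi-1)}=\frac45$. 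Your component gives $\chi^2=\varphi/2$, hence a gain of $\frac{4\ln 2}{\varphi+2}\log^2\varphi\,\eps^2$. Your constant therefore will not match, but it is strictly larger, so the statement follows a fortiori.
\end{itemize}

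Your choice has one further advantage. It keeps $P_{Y_2}/P_{Y_1}$ bounded near $p=\tfrac12$, so the third-order Taylor remainder is uniformly controlled. With $2^{-x/2}$ this likelihood ratio grows like $(\varphi/\sqrt2)^y$. The pointwise expansion used in the paper then needs a one-sided remainder argument, which works because $(-x\log x)'''>0$.
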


\begin{proof}
Fix $0<\eps\leq 1/2$ and define $p=1/2+\eps$.
Let $0<\gamma\leq 2$, to be chosen later.
We consider a mixture $P_X$ with parameter $\gamma \epsilon$ of two distributions $P_{X_1}$ and $P_{X_2}$, given by
\begin{equation*}
    P_X=(1-\gamma\eps)P_{X_1}+\gamma\eps P_{X_2}.
\end{equation*}
We choose the first distribution of the mixture to be the naive RLC coding input distribution
\begin{equation*}
    P_{X_1}(x) = \begin{cases}
        \varphi^{-x}, &\textrm{if $x$ is odd,}\\
        0, &\textrm{otherwise,}
    \end{cases}
\end{equation*}
which is capacity achieving for $1/\varphi^2\leq p\leq 1/2$ 
(see \cref{sec:opt-rlc}) and induces the output distribution
\begin{equation*}
    P_{Y_1}(y) = \begin{cases}
        (1-p)\varphi^{-y}, &\textrm{if $y$ is odd,}\\
        p \varphi ^{-(y-1)}, &\textrm{otherwise.}
    \end{cases}
\end{equation*}
The second input distribution of the mixture is
\begin{equation*}
    P_{X_2}(x) = \begin{cases}
        0 , &\textrm{if $x$ is odd,} \\
        2^{-\frac{x}{2} } &\textrm{otherwise,}
    \end{cases}
\end{equation*}
and induces the output distribution
\begin{equation*}
    P_{Y_2}(y) = \begin{cases}
        0, &\textrm{if $y=1$,}\\
        p 2^{-\frac{y-1}{2}}, &\textrm{if $y$ is odd, $y\geq3$,}\\
         (1-p)2^{-\frac{y}{2}}, &\textrm{otherwise.}
    \end{cases}
\end{equation*}
Recall that $p=\frac{1}{2}+\eps$, a Taylor expansion of the mutual information around $\eps=0$ gives 
\begin{align}
\label{Taylor_expansion_mutual_information_regime_near_one_half}
I(X;Y)
&= \frac{(\gamma \eps)^2}{2\ln 2} + (1-\gamma \eps)I(X_1;Y_1) - \gamma \eps h(p)
- \gamma \eps \sum_{y\geq 1} P_{Y_2}(y) \log P_{Y_1}(y)  \nonumber\\
&~~~- \frac{1}{2\ln 2}\frac{(\gamma \eps)^2}{1-\gamma \eps}\sum_{y\geq 1}  \frac{P_{Y_2}(y)^2}{P_{Y_1}(y)}-O(\eps^3) .
\end{align}
We now expand each term in \cref{Taylor_expansion_mutual_information_regime_near_one_half} asymptotically.
First, we have
\begin{align*}
    h(p)&= 1 -O(\eps^2).
\end{align*}
Second, we have
\begin{align}
\sum_{y\geq 1} P_{Y_2}(y) \log P_{Y_1}(y)
\label{weight_input}
&= - \sum_{y\geq 1} P_{Y_2}(y) - \log \varphi \sum_{y\geq 1} P_{Y_2}(y) y + \log \varphi \sum_{\text{even } y, y\geq 2} P_{Y_2}(y)\nonumber\\
&~~~+ 2\eps \sum_{\text{even } y, y\geq 2} P_{Y_2}(y) - 2\eps \sum_{\text{odd } y, y\geq 1} P_{Y_2}(y) -O(\eps^2)\nonumber\\
&=-1 -4\log \varphi-2\log \varphi \cdot \eps - O(\eps^2).
\end{align}
Third, we have
\begin{align*}
    \sum_{y\geq 1}  \frac{P_{Y_2}(y)^2}{P_{Y_1}(y)}
    &=\frac{3\varphi+1}{2(3-\varphi)}+O(\eps).
\end{align*}
The reasoning behind this particular mixture choice is the following. In order to maximize the right-hand side of \cref{Taylor_expansion_mutual_information_regime_near_one_half}, we want to maximize the expectation of $Y_2$ while minimizing $\sum_{\text{even } y, y\geq 2} P_{Y_2}(y)$ in \cref{weight_input}. Since $\mathbb{P}[Y_2 \text{ is even}]=(1-p)\mathbb{P}[X_2 \text{ is even}]+p\cdot \mathbb{P}[X_2 \text{ is odd}]= 1-p +(2p-1) \mathbb{P}[X_2 \text{ is odd}]$ and $p>\frac{1}{2}$, we choose a distribution $P_{X_2}$ with mass only on even numbers.

Recalling that the RLL achievable rate per unit cost is $\log \varphi$ (see \cref{sec:opt-rlc}), we obtain from \cref{Taylor_expansion_mutual_information_regime_near_one_half} that
\begin{align*}
&I(X;Y)\\
&~~~= (1-\gamma \eps)\log \varphi \cdot \E[X_1]-\gamma \eps -\gamma \eps \left( -1 - 4 \log \varphi -2\log \varphi \cdot \eps\right) +\frac{(\gamma \eps)^2}{2\ln 2} \left(1-\frac{3\varphi+1}{2(3-\varphi)}\right) +O(\eps^3)\\
&~~~=(1-\gamma \eps)\sqrt{5} \log \varphi+  4 \gamma \log \varphi \cdot \eps  +(\gamma \eps)^2\left(\frac{5(1-\varphi)}{4(3-\varphi)\ln 2}  + \frac{2\log \varphi}{\gamma} \right)+O(\eps^3),
\end{align*}
where
\begin{align*}
    \E[X_1]&= \sqrt{5}.
\end{align*}
Then, noting that $\E[X]=\sqrt{5}\left(1+\gamma \eps\frac{4-\sqrt{5}}{\sqrt{5}}\right)$,
the achievable rate per unit cost of the mixture is
\begin{align}
\label{achievable_rate_per_unit_cost_regime_near_one_half}
\frac{I(X;Y)}{\E[X]}
&= \log \varphi +\frac{(\gamma \eps)^2}{\sqrt{5}}\left(\frac{5(1-\varphi)}{4(3-\varphi)\ln 2} + 2 \frac{\log \varphi}{\gamma}\right) -O(\eps^3).
\end{align}
It remains to optimize the choice of $\gamma$. 
The optimal choice is $\gamma^*=\frac{4(3-\varphi)\ln 2 \cdot \log \varphi}{5(\varphi-1)}$, which satisfies $\gamma^*\leq 2$ ensuring $\gamma\eps\leq1$ for $0\leq\eps\leq1/2$. Plugging this choice into \cref{achievable_rate_per_unit_cost_regime_near_one_half} leads to the desired statement.
\end{proof}

A plot of the closed-form upper and lower bounds obtained in \cref{thm:UP-p-half} and \cref{prop:LB-p-half} together with a numerical approximation of $C(p)$ is given in \cref{fig:p-half-ublb}.

\subsection{Numerical approximation of $C(p)$ and a conjecture}\label{sec:numerical}

While \cref{thm:cap-below-half} determines $C(p)$ for all $p\leq 1/2$, \cref{thm:cap-ub-largep} provides only an upper bound on $C(p)$ for $p>1/2$.
In this section we discuss a numerical approximation of $C(p)$, which in particular provides evidence that the upper bound from \cref{thm:cap-ub-largep} is the capacity.
More precisely, we formulate the following conjecture.
\begin{conjecture}\label{conj:cap-largep}
    Fix $p\in(1/2,1]$.
    For $\beta\in\R$, denote by $\lambda(p,\beta)$ the unique solution to the equation
    \begin{equation*}
        \sum_{n=1}^\infty 2^{-\lambda(p,\beta)(n-p)-h(p)-\left(-\frac{1-p}{p}\right)^{n-1} \beta}=1.
    \end{equation*}
    Then, $C(p)= \inf_{\beta\in\R}\lambda(p,\beta)$.
\end{conjecture}

Towards numerically approximating $C(p)$, we leverage \cref{lem:ANSC-to-ANMC} and employ a variant of the well-known Blahut–Arimoto algorithm (BAA) \cite{blahut1972,Arimoto72}, specifically the version developed by Jimbo and Kunisawa \cite{jimbo1979iteration}.
Recall that the $\ANMC_{\Ber(p)}$ takes as input a natural number $n>0$ and outputs $n + Z$ with $Z$ following a Bernoulli distribution with success probability $p$. In what follows, we numerically approximate the capacity per unit cost by studying a finite-input, finite-output truncation of this channel.

Namely, we will numerically approximate the capacity of the channel $\ANMC_{\Ber(p),\tau}$ which gets as input only $n\in[\tau]$ and outputs $n + \Ber(p)$. This channel and its capacity is discussed in more detail in \cref{sec:cap-rlc}. 
For the purpose of this section, we observe that this channel is ``equivalent'' to the channel $\ANSC_{\Ber(p), \tau}$ which gets as input $x\in \{0,1\}^n$ having runs of length at most $\tau$ and applies $\ANSC_{\Ber(p)}$ to the input. Formally, if we denote by $C(p,t)$ the capacity of $\ANSC_{\Ber(p), \tau}$, then, by \cref{lem:ANSC-to-ANMC} (whose proof can be found in the appendix), we have that 
\[
C(p,\tau) = \CPUC(\ANMC_{\Ber(p),\tau}) = \sup_{X:\E[X]<\infty} \frac{I(X;Y)}{\E[X]},
\]
where the supremum is taken over all input random variables supported on $[\tau]$, with finite expectation, and $Y$ is the output of $\ANMC_{\Ber(p),\tau}$ on input $X$.

In general, when a channel is a DMC with finite input and output alphabets, one can apply the Jimbo-Kunisawa algorithm~\cite{jimbo1979iteration} to numerically approximate the capacity per unit cost.
This algorithm gets as input the channel's transition law $P(y|x)$, which specifies, for every $x\in\calX$ and $y\in\calY$, the probability that the channel outputs $y\in\calY$ when $x\in\calX$ is transmitted.
The algorithm is initialized with some input distribution $X^{(0)}$ supported on the input alphabet, and at each iteration this distribution is updated. It is known that the resulting sequence of distributions $X^{(0)}, X^{(1)}, X^{(2)}, \ldots$ converges to an input distribution that attains the capacity per unit cost of the corresponding DMC~\cite[Theorem 1]{jimbo1979iteration}.
In our implementation, we also give as input a stopping criteria which will govern the error term to capacity. We summarize it in the following theorem.

\begin{theorem}[{\cite[Theorem 1, Corollary 2]{jimbo1979iteration}}] \label{thm:jk-alg}
    Let $a>0$, $p\in [0,1]$, and $\tau\in \N$. There exists an iterative algorithm that outputs a distribution $P_X$ supported on $[\tau]$ such that
    \[
    \frac{I(X;Y)}{\E[X]} \leq \capBASfinite \leq \frac{I(X;Y)}{\E[X]} + a \;.
    \]
    Moreover, the number of iterations is $O(1/a)$.
\end{theorem}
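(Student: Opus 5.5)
The statement is quoted from Jimbo and Kunisawa, so the plan is to check that their finite-alphabet hypotheses hold and then read off the two-sided guarantee. The channel $\ANMC_{\Ber(p),\tau}$ has input alphabet $[\tau]$, output alphabet $[\tau+1]$, transition law $P(y\mid x)=(1-p)\mathds{1}\{y=x\}+p\mathds{1}\{y=x+1\}$, and strictly positive cost $c(x)=x\geq 1$. This is a finite DMC with positive costs, which is exactly the setting of~\cite{jimbo1979iteration}. Combined with \cref{lem:ANSC-to-ANMC} restricted to runlengths in $[\tau]$, this identifies $\capBASfinite$ with $\sup_{P_X}I(X;Y)/\E[X]$ over $P_X$ on $[\tau]$, and that supremum is attained by compactness of the simplex and continuity of the ratio.

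The algorithm is the Dinkelbach-style Blahut--Arimoto iteration. Start from the uniform $P_X^{(0)}$ on $[\tau]$. At step $k$, write $Q^{(k)}$ for the induced output law and set
\begin{equation*}
P_X^{(k+1)}(x)\propto P_X^{(k)}(x)\,2^{\KL(P_{Y_x}\|Q^{(k)})/c_k(x)},
\end{equation*}
where $c_k(x)$ is the cost rescaled by the current rate estimate, as in their scheme.

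The error certificate comes from the upper half of \cref{lem:CPUC-UB-framework}, applied with $Q=Q^{(k)}$:
\begin{equation*}
\frac{I(X^{(k)};Y^{(k)})}{\E[X^{(k)}]}\leq \capBASfinite\leq \max_{x\in[\tau]}\frac{\KL(P_{Y_x}\|Q^{(k)})}{x}.
\end{equation*}
The left inequality holds because $P_X^{(k)}$ is a feasible input. The stopping rule is to halt once the gap between the two sides is at most $a$, and output $P_X^{(k)}$. The displayed two-sided bound of the theorem then holds at the moment the algorithm halts.

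The main obstacle is the claim that the algorithm needs $O(1/a)$ iterations, i.e.\ that this gap actually falls below $a$ within that many steps. For that I would not reprove anything and would cite \cite[Theorem 1, Corollary 2]{jimbo1979iteration} for convergence of the iterates and the $O(1/a)$ rate. If I had to reconstruct the rate, I would follow the standard Blahut--Arimoto telescoping argument. Summing the per-step inequality between the divergence to an optimal $P_X^\star$ and the rate gap, using the fact that $\KL(P_X^\star\|P_X^{(0)})\leq\log\tau$ for uniform initialization, and dividing by the minimum cost $1$, bounds the sum of the gaps by $\log\tau$. Since the gaps are monotone, the gap after $k$ steps is at most $\log\tau/k$, which gives $O(1/a)$ iterations with a constant depending on $\tau$.
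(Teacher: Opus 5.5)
Your proposal takes the same approach as the paper, which gives no independent proof and simply invokes \cite[Theorem 1, Corollary 2]{jimbo1979iteration} for the finite channel $\ANMC_{\Ber(p),\tau}$ with positive costs $c(x)=x$; your hypothesis check and your use of \cref{lem:CPUC-UB-framework} as a computable stopping certificate are consistent with that citation. One caveat about your optional self-contained sketch: the telescoping argument bounds the true suboptimality $\capBASfinite - I(X^{(k)};Y^{(k)})/\E[X^{(k)}]$ (for the best iterate), not the certificate gap $\max_x \KL(P_{Y_x}\|Q^{(k)})/x - I(X^{(k)};Y^{(k)})/\E[X^{(k)}]$ that your stopping rule tests, so if you rely on it you should instead run a fixed $\lceil \log\tau/a\rceil$ iterations and output the best iterate.
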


Our next goal is to show that $C(p,\tau)$ closely approximates $C(p)$, particularly when $\tau$ is large. To establish this, we will rely on the following simple lemma and a claim.

We will make use of a standard result stating that among all random variables on $N$ with a given finite mean, the one with maximal entropy is the geometric random variable having that same mean \cite[Theorem 5.8]{conrad2004probability}.

\begin{lemma} \label{lem:max-ent-geom} 
    Let $X$ be a random variable sampled from a distribution supported on $\{1,2,3,\ldots\}$ with finite mean.
    Then, $H(X) \leq \log(e \cdot \E[X])$.
\end{lemma}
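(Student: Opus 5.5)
The plan is the standard Gibbs/relative-entropy argument comparing the law of $X$ with the geometric law of the same mean on $\{1,2,3,\ldots\}$. Write $\mu=\E[X]\ge 1$; the case $\mu<\infty$ is assumed. If $\mu=1$ then $X\equiv 1$ and $H(X)=0\le \log e$, so assume $\mu>1$. Let $G$ be the geometric distribution on $\N_{>0}$ with mean $\mu$, namely
\begin{equation*}
G(n)=q(1-q)^{n-1}\quad\text{with}\quad q=1/\mu .
\end{equation*}
Nonnegativity of the Kullback--Leibler divergence gives $0\le \KL(P_X\|G)=-H(X)-\sum_n P_X(n)\log G(n)$. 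Since $\log G(n)$ is affine in $n$ and $\E[X]=\mu$, the cross-entropy term is determined by the mean alone:
\begin{equation*}
H(X)\le -\log q-(\mu-1)\log(1-q)=H(G)=\log\mu+(\mu-1)\log\frac{\mu}{\mu-1}.
\end{equation*}
Alternatively, the cited result \cite[Theorem 5.8]{conrad2004probability} gives this maximal-entropy bound directly.

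It then remains to show $(\mu-1)\log\frac{\mu}{\mu-1}\le \log e$. I would write it as $(\mu-1)\log\bigl(1+\frac{1}{\mu-1}\bigr)$ and apply $\ln(1+t)\le t$ with $t=1/(\mu-1)$. This gives $(\mu-1)\ln\frac{\mu}{\mu-1}\le 1$, and converting to base $2$ yields the claimed $\le\log e$. Combining the two displays gives $H(X)\le\log\mu+\log e=\log(e\,\E[X])$.

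No step is a real obstacle. The only care needed is the degenerate case $\mu=1$ and checking that the cross-entropy series converges, which holds because $\E[X]<\infty$.
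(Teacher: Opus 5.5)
Your proof is correct and takes essentially the paper's route: compare with the geometric law on $\N_{>0}$ of the same mean, then bound that law's entropy by $\log(e\mu)$. The only differences are that you derive the maximum-entropy step from Gibbs' inequality instead of citing Conrad's theorem, and you make the final estimate explicit via $\ln(1+t)\le t$, which the paper leaves implicit; your formula $H(G)=\log\mu+(\mu-1)\log\frac{\mu}{\mu-1}$ is the correct one, whereas the paper's displayed expression for the geometric entropy has both signs flipped.
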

\begin{proof}
    By \cite[Theorem 5.8]{conrad2004probability}, $H(X) \leq H(Y)$ where $Y\sim \textrm{Geom}(1/\E[X])$, and we have 
    \[
    H(Y) = -\log \left(\E[X]\right) - (\E[X] - 1) \log \left( \frac{\E[X]}{\E[X] - 1} \right) \leq \log (e \cdot \E[X]) \;. \qedhere
    \]
\end{proof}
Next, we show that any distribution which achieves capacity on the $\ANMC_{\Ber(p)}$ must have ``small'' expectation.
\begin{claim} \label{clm:expectation-bound-capacity}
    Let $P_X$ be a distribution which achieves capacity on $\ANMC_{\Ber(p)}$. It holds that $\E[X] < 6$.
\end{claim}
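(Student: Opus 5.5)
The plan is to combine three ingredients: the universal lower bound $C(p)\geq\log\varphi$ from \cref{thm:opt-rlc}, an upper bound $I(X;Y)\leq H(Y)$, and \cref{lem:max-ent-geom} applied to $Y$. Suppose $P_X$ achieves capacity, and write $\mu=\E[X]$. Since $Y=X+Z$ with $Z\sim\Ber(p)$, $Y$ is supported on $\{1,2,\dots\}$ and $\E[Y]=\mu+p\leq \mu+1$. Then
\begin{equation*}
    \log\varphi\leq C(p)=\frac{I(X;Y)}{\mu}\leq \frac{H(Y)}{\mu}\leq \frac{\log(e(\mu+1))}{\mu}.
\end{equation*}
(One may subtract $H(Y|X)=h(p)\geq 0$, but discarding it suffices.)

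Next I use that $g(\mu)=\log(e(\mu+1))/\mu$ is strictly decreasing for $\mu\geq 1$. Its numerator grows only logarithmically, and differentiating confirms monotonicity on $[1,\infty)$. Hence it suffices to check that $g(6)<\log\varphi$. Evaluating,
\begin{equation*}
    g(6)=\frac{\log(7e)}{6}=\frac{\log 7+\log e}{6}\approx\frac{2.807+1.443}{6}\approx 0.708 .
\end{equation*}
Unfortunately this exceeds $\log\varphi\approx 0.694$, so this crude bound only gives roughly $\E[X]<6.3$. This is the main obstacle: the constants are tight.

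To fix it, I keep the $-h(p)$ term or sharpen the entropy bound. First I replace $\log(e\E[Y])$ by the exact geometric maximum entropy $(m)\log m-(m-1)\log(m-1)$ with $m=\E[Y]\leq\mu+1$. This maximum is increasing in $m$. At $\mu=6$, $m\leq 7$, and the bound becomes $7\log 7-6\log 6\approx 19.651-15.510=4.141$. Dividing by $6$ gives approximately $0.690<0.694$.

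I then need that $\bigl((\mu+1)\log(\mu+1)-\mu\log\mu\bigr)/\mu$ is decreasing in $\mu$. This follows because the numerator equals $\log(\mu+1)+\mu\log(1+1/\mu)$, which is concave and grows logarithmically while the denominator is linear. A derivative check on $\mu\geq 1$ (or $\mu\geq 6$) would confirm it. Consequently every capacity-achieving $P_X$ with $\E[X]\geq 6$ would give rate below $\log\varphi\leq C(p)$, a contradiction, so $\E[X]<6$.

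A remaining subtlety is that $\E[X]<\infty$ is built into the definition of achieving capacity per unit cost, so no extra work is needed there. If the monotonicity argument proves messy, I would instead only need the bound at $\mu\geq 6$, using $\log(1+1/\mu)\leq \log e/\mu$ to get $\bigl(\log(\mu+1)+\log e\bigr)/\mu$. However, this recovers the weaker estimate, so the exact geometric entropy form must be kept.
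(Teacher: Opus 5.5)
Your proof is correct, but it takes a slightly different route from the paper.

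The paper bounds $I(X;Y)\le H(X)$ rather than $H(Y)$. Since $X$ itself is supported on $\{1,2,\dots\}$ with mean $\E[X]$ (not $\E[X]+p$), the crude estimate $H(X)\le\log(e\E[X])$ from \cref{lem:max-ent-geom} already gives $C(p)\le\log(6e)/6\approx 0.671<\log\varphi$ whenever $\E[X]\ge 6$. That leaves comfortable slack, and nothing needs sharpening.

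By passing to $Y$, you pay for the shift of the mean by $p\le 1$. That shift is exactly why your first attempt failed at $\mu=6$. You then recover by using the exact geometric maximum entropy $(\mu+1)\log(\mu+1)-\mu\log\mu$. This follows from the same cited maximum-entropy result together with the fact that the geometric entropy increases in its mean. It works, but only with a margin of about $0.004$ ($0.6903$ vs.\ $0.6942$).

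Your monotonicity step is fine, though it can be made cleaner than the concavity heuristic. With $N(\mu)=(\mu+1)\log(\mu+1)-\mu\log\mu$, one has $N'(\mu)=\log(1+1/\mu)$. Hence $\frac{d}{d\mu}\frac{N(\mu)}{\mu}=\frac{\mu N'(\mu)-N(\mu)}{\mu^2}=-\frac{\log(\mu+1)}{\mu^2}<0$ for all $\mu>0$.

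So both arguments are valid. The paper's choice of $H(X)$ lets the weaker entropy bound suffice, while yours needs the sharper formula and a tighter numerical comparison.
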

\begin{proof}
    Assume that $\E[X] \geq 6$. Since $P_X$ achieves capacity,
    \[
    C(p) = \frac{I(X;Y)}{\E[X]} \leq \frac{H(X)}{\E[X]} \leq \frac{\log(e \E[X])}{\E[X]} \leq \frac{\log(6e)}{6} \leq 0.672 \;,
    \]
    where the second inequality is due to \cref{lem:max-ent-geom} and the third inequality follows since $\log(ex)/x$ is decreasing in the range $x \geq 1$.  
    This contradicts the fact that the capacity of this channel is at least $\log \varphi \geq 0.697$ for every $p\in [0,1]$.
\end{proof}

Using these two elementary bounds and the simple claim above, we now show that  $\capBASfinite$ provides a fairly tight approximation to $\capBAS$. In particular, the approximation error is on the order of $O(\frac{\log \tau}{\tau})$.
\begin{theorem} \label{prop:numer-approx}
    For every $\tau \geq 1$, we have that 
    \[
    \capBASfinite \leq \capBAS \leq \capBASfinite + \frac{7\log (e(\tau+1) ) - 11}{\tau + 1} \;.
    \]
\end{theorem}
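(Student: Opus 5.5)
The lower bound $C(p,\tau)\leq C(p)$ is immediate: restricting inputs to $[\tau]$ only shrinks the set of admissible input distributions in the supremum defining the capacity per unit cost. So the work is in the upper bound.

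For the upper bound, I would start from a near-optimal input $X$ for $\ANMC_{\Ber(p)}$. If a capacity-achieving input exists, \cref{clm:expectation-bound-capacity} gives $\E[X]<6$. Otherwise I take $X$ with rate within $\eta$ of $C(p)$. The same argument as in that claim, which uses \cref{lem:max-ent-geom} and $C(p)\ge\log\varphi$, shows $\E[X]<6$ once $\eta$ is small. I then truncate: set $T=\tau+1$, let $E=\mathbf{1}[X\le\tau]$, and let $X'$ be $X$ conditioned on $E=1$. This $X'$ is supported on $[\tau]$, so $I(X';Y')\le C(p,\tau)\E[X']$.

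Next I decompose the mutual information. Since $E$ is a function of $X$,
\[
I(X;Y)\le I(X;Y,E)=I(X;E)+I(X;Y\mid E)\le h(q)+(1-q)I(X';Y')+q\,I(X'';Y''),
\]
where $q=\Pr[X>\tau]$ and $X''$ is $X$ conditioned on $X\ge T$. I bound the first piece by
\[
(1-q)I(X';Y')\le C(p,\tau)(1-q)\E[X']\le C(p,\tau)\E[X].
\]
For the tail piece I use $I(X'';Y'')\le H(X'')$. Shifting by $\tau$ and applying \cref{lem:max-ent-geom} gives
\[
H(X'')\le \log\bigl(e(\E[X'']-\tau)\bigr)\le\log(e\,\E[X'']).
\]
By Markov, $q\le \E[X]/T<6/T$. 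I also have $q\E[X'']\le \E[X]<6$, and $h(q)\le q\log(e/q)$.

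To finish, I divide by $\E[X]$. Because $X\ge1$ gives $\E[X]\ge1$, this yields
\[
C(p)-\eta\le C(p,\tau)+h(q)+q\log(e\E[X'']).
\]
Write $q\log(e\E[X''])=q\log(e/q)+q\log(q\E[X''])\le q\log(e/q)+q\log 6$. The map $q\mapsto q\log(e/q)$ is increasing for $q<1$, so its value is at most its value at $6/T$, which is $\frac{6}{T}\log\frac{eT}{6}$. That makes the total error $O(\log T/T)$, as required.

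The main obstacle is matching the exact constant $\frac{7\log(e(\tau+1))-11}{\tau+1}$. I expect it to come from combining the two $q\log(e/q)$-type terms, giving coefficient $6+1=7$, and absorbing the $-\log 6$ pieces into $-11$, using $6\log 6\approx15.5$ minus the leftover $q\log 6$ terms. I would tune which quantity gets bounded via Markov to land on those numbers. It may be cleaner to bound $h(q)$ and the tail term separately, or to use $\E[X]\ge1$ more carefully. Otherwise the structure is routine.
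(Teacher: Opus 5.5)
\textbf{The decomposition is the paper's, but the bound you derive is weaker than the stated constant, and the fix you suggest cannot reach it.}

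Your setup matches the paper's. You condition on the indicator of $X\le\tau$ and pay $h(q)$ for it. You bound the truncated part by $C(p,\tau)$ and the tail with \cref{lem:max-ent-geom}. Two of your choices improve on the paper. Using a near-optimal $X$ handles a possibly non-attained supremum, which the paper does not address since it assumes a capacity-achieving input exists. And $I(X'';Y'')\le H(X'')$ is slightly sharper than the paper's $H(Y'')\le H(X'')+h(p)$.

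The problem is the tail term. You discard the factor $1/\E[X]$ using $\E[X]\ge 1$, and then pay Markov a second time. Carried through, your estimates give $h(q)\le\frac{6}{T}\log\frac{eT}{6}$ and $q\log(e\E[X''])\le\frac{6}{T}\log(eT)$. The total error is then $\frac{12\log(eT)-6\log 6}{T}$. This is strictly larger than $\frac{7\log(eT)-11}{T}$ for every $\tau\ge 1$, since beating it would require $5\log(eT)\le 6\log 6-11\approx 4.5$. So the theorem as stated is not proved. Moving Markov around inside this scheme does not help: every Markov-controlled $q\log(e/q)$ term carries coefficient $6$, so your ``$6+1=7$'' heuristic has no source.

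\textbf{Fix.} Keep the normalization. Since $\E[X]\ge q\,\E[X'']$,
\[
\frac{q\,I(X'';Y'')}{\E[X]}\le\frac{H(X'')}{\E[X'']}\le\frac{\log\bigl(e\,\E[X'']\bigr)}{\E[X'']}\le\frac{\log(eT)}{T}.
\]
The last step uses that $\log(ex)/x$ is decreasing on $[1,\infty)$ and that $\E[X'']\ge T$. This tail bound needs no Markov at all and contributes coefficient $1$. Combined with $h(q)\le\frac{6}{T}\bigl(\log(eT)-\log 6\bigr)$, it gives
\[
\frac{7\log(eT)-6\log 6}{T}\le\frac{7\log(eT)-11}{T}.
\]
The paper, which uses $H(X'')+h(p)$, instead gets $-12+h(p)\le -11$.

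One small point remains. Monotonicity of $q\mapsto q\log(e/q)$ only lets you substitute $q=6/T$ when $T\ge 6$. For $\tau\le 4$ you should instead note that the right-hand side already exceeds $1\ge C(p)$, so the claim is trivial there.
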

\begin{proof}
    The lower bound is immediate, since any distribution supported on $[\tau]$ is also a valid distribution on $\N$. Consequently, the supremum taken over all finite-mean distributions supported on $\N$ is at least as large as the supremum taken over finite-mean distributions supported on $[\tau]$.

    For the upper bound, let $P_X$ be a distribution which achieves capacity on $\ANMC_{\Ber(p)}$ and let $X\sim P_X$ be a random variable. Denote $A = \mathds{1} \{X > \tau \}$, i.e., $A$ is the event that $X$ is larger than $\tau $ and denote $\alpha = \Pr[A]$. 
    Since $A$ is determined by $X$, we have that $I(X;Y) = I(A,X;Y)$. Now, by the chain rule, we have
    \begin{align*}
        I(A,X;Y) &= I(A; Y) + I(X;Y|A) \\
        &= I(A;Y) + \alpha I(X;Y|A=1) + (1 - \alpha)I(X;Y|A=0) \;.
    \end{align*}
    Thus, we have
    \begin{equation} \label{eq:cap-upper-bound-split}
        C(p) \leq \frac{I(A;Y) + \alpha I(X;Y|A=1) + (1 - \alpha)I(X;Y|A=0)}{\E[X]} \;.
    \end{equation}
    We will upper bound each term separately.
    For the first term we have
    \begin{align}
        \frac{I(A;Y)}{\E[X]} \leq h(\alpha) \leq h \left( \frac{6}{\tau + 1}\right) \leq \frac{6 \left( \log(e(\tau+1)) - 2\right)}{\tau + 1} \;. \label{eq:bound-I-Y-A}
    \end{align}
    Here, the first inequality follows since $I(A;Y)\leq H(A) = h(\alpha)$ and we have that $\E[X] \geq 1$. The second inequality uses the Markov inequality which gives that $\alpha = \Pr[X \geq \tau  +1] \leq \frac{\E[X]}{\tau +1} \leq \frac{6}{\tau + 1}$ where the last inequality is due to \cref{clm:expectation-bound-capacity}. The last inequality is by the well-known upper bound $h(x) \leq x\log(e/x)$. 

    Now, the second term can be upper bounded as follows
    \begin{align}
        \frac{\alpha \cdot I(X; Y|A=1)}{\E[X]} &\leq \frac{H(Y|A=1)}{\E[X|A = 1]} \nonumber \\
        &\leq \frac{H(X|A=1) + h(p)}{\E[X|A=1]} \nonumber \\
        &\leq \frac{\log (e (\E[X|A=1]))+h(p)}{\E[X|A=1]} \nonumber \\
        &\leq \frac{\log(e (\tau + 1)) + h(p)}{\tau+1} \label{eq:bound-A-1}\;.
    \end{align}
    The first inequality follows since 
    \[
        \frac{\alpha}{\E[X]} = \frac{\alpha}{(1 - \alpha)\cdot \E[X|A=0] + \alpha \cdot \E[X|A=1]} \leq \frac{1}{\E[X|A=1]} \;.
    \]
    The second inequality follows by recalling that $Y = X + \Ber(p)$ and that $H(X+\Ber(p)) \leq H(X) + H(\Ber(p))$. 
    The third inequality is due to \cref{lem:max-ent-geom} and the fourth inequality follows since $\log(ex)/x$ is a decreasing function for $x\geq 1$ and $\E[X|A=1]\geq \tau+1$.
    
    Finally, the third term can be upper bounded by
    \begin{align}
        \frac{(1 - \alpha)\cdot I(X; Y|A=0)}{\E[X]} \leq \frac{(1 - \alpha) \E[X|A=0] \cdot C(p,\tau)}{\E[X]} \leq C(p,\tau) \;. \label{eq:bound-A-0}
    \end{align}
    The first inequality follows since the distribution of $(X|A=0)$ is supported on $[\tau ]$ and thus $I(X; Y |A=0)/ \E[X|A=0]\leq C(p,\tau)$. The second inequality follows since
    \[
    \frac{(1-\alpha)\cdot \E[X|A=0]}{\E[X]} = \frac{(1-\alpha)\sum_{x\in [\tau ]} x\cdot \frac{P_X(x)}{(1- \alpha)}}{\E[X]} = \frac{\sum_{x\in [\tau ]} x\cdot P_X(x)}{\E[X]}\leq 1\;.
    \] 
    Plugging \cref{eq:bound-I-Y-A,eq:bound-A-1,eq:bound-A-0} into \cref{eq:cap-upper-bound-split} yields the claimed proposition.
\end{proof}

Using Proposition~\ref{prop:numer-approx}, we can apply the Jimbo–Kunisawa algorithm (Theorem~\ref{thm:jk-alg}) to obtain numerical estimates of $C(p)$ and then compare these values with those predicted by our Conjecture~\ref{conj:cap-largep}. 
Specifically, we run the algorithm from Theorem~\ref{thm:jk-alg} with parameters $\tau = 2^{16}$ and $a = 10^{-3}$. This produces a lower bound on $C(p, 2^{16})$ that also serves as a tight approximation of the capacity, with at most $10^{-3}$ additive error. A plot of this lower bound 
is given in \cref{fig:numerical}. 

Moreover, when combined with Proposition~\ref{prop:numer-approx}, this yields a numerical approximation of $C(p)$ whose additive error is at most $3 \cdot 10^{-3}$. For several values of $p$, Table~\ref{table:approx-conj} compares these capacity approximations with our conjectured values. 
Furthermore, \cref{fig:combined} presents a comparison between the approximated capacity and the asymptotic upper and lower bounds obtained in previous sections, in the regimes where $p\approx 1/2$ and $p\approx 1$.

\begin{table}[h!]
\centering
\begin{tabular}{|c||c|c|}
\hline
$p$ & Approximation & UB (\cref{thm:cap-ub-largep}) \\ \hline
$0.55$ & $0.6970746$ & $0.6970751$ \\ \hline
$0.60$ & $0.7051158$ & $0.7051163$ \\ \hline
$0.65$ & $0.7179734$ & $0.7179744$ \\ \hline
$0.70$ & $0.7355886$ & $0.7355888$ \\ \hline
$0.75$ & $0.7581924$ & $0.7581924$ \\ \hline
\end{tabular}
\quad
\begin{tabular}{|c||c|c|}
\hline
$p$ & Approximation & UB (\cref{thm:cap-ub-largep}) \\ \hline
$0.80$ & $0.7863129$ & $0.7863129$ \\ \hline
$0.85$ & $0.8208797$ & $0.8208797$ \\ \hline
$0.90$ & $0.8635515$ & $0.8635517$ \\ \hline
$0.95$ & $0.9178513$ & $0.9178514$ \\ \hline
$0.99$ & $0.9774431$ & $0.9774431$ \\ \hline
\end{tabular}
\caption{A comparison between our numerical estimates of the capacity $C(p)$ and the values obtained from the upper bound in \cref{thm:cap-ub-largep}, which we conjecture (\cref{conj:cap-largep}) to equal the true capacity, shows consistency to at least five decimal digits.}
\label{table:approx-conj}
\end{table}

\begin{figure}
		\centering	\begin{tikzpicture}
		
		\begin{axis}[
            width = 12cm,
            height=8cm,
		scaled y ticks = true,
		tick label style={/pgf/number format/fixed },
		axis lines = left,
		xlabel = $p$,
		ylabel = {Rate},
		xtick={0.1,0.2,0.3,0.4,0.5,0.6,0.7,0.8,0.9,1},
		ytick={0.7,0.8,0.9,1}, 
        xmin=0, xmax=1,
        ymin=0.65, ymax=1,
		legend pos=south east,
		ymajorgrids=true,
		grid style=dashed,
		]
            \addplot[
		color=blue,
            mark size=0.6pt,
		mark=*,
		]
		coordinates {
                (0.01, 0.9648662676429648)
                (0.02, 0.9399823450657127)
                (0.03, 0.9189989594574902)
                (0.04, 0.9005481866200025)
                (0.05, 0.8839774956197947)
                (0.06, 0.8689018826659844)
                (0.07, 0.8550669401647814)
                (0.08, 0.8422920803783178)
                (0.09, 0.8304426009301901)
                (0.1, 0.8194143037269959)
                (0.11, 0.8091243162827199)
                (0.12, 0.7995052715812978)
                (0.13, 0.7905014379178712)
                (0.14, 0.782066044993075)
                (0.15, 0.7741593774066233)
                (0.16, 0.7667473790198729)
                (0.17, 0.7598006082115885)
                (0.18, 0.7532934406863723)
                (0.19, 0.7472034510328082)
                (0.2, 0.7415109260074572)
                (0.21, 0.7361984766605651)
                (0.22, 0.7312507258378717)
                (0.23, 0.7266540540106576)
                (0.24, 0.7223963908490971)
                (0.25, 0.7184670431147644)
                (0.26, 0.7148565517241482)
                (0.27, 0.7115565724989746)
                (0.28, 0.7085597763527098)
                (0.29, 0.7058597655890525)
                (0.3, 0.7034510036925086)
                (0.31, 0.7013287565327979)
                (0.32, 0.6994890433260432)
                (0.33, 0.6979285960269611)
                (0.34, 0.696644826089528)
                (0.35, 0.6956357977454452)
                (0.36, 0.6949002071220213)
                (0.37, 0.6944373666630861)
                (0.38, 0.6942471770729512)
                (0.39, 0.694241913361114)
                (0.4, 0.694241913628305)
                (0.41, 0.6942419136304968)
                (0.42, 0.6942419136305983)
                (0.43, 0.6942419136306112)
                (0.44, 0.6942419136306129)
                (0.45, 0.6942419136306104)
                (0.46, 0.6942419136306001)
                (0.47, 0.6942419136305242)
                (0.48, 0.6942419136292091)
                (0.49, 0.6942419135302156)
                (0.5, 0.6942414858912936)
                (0.51, 0.69436000567872)
                (0.52, 0.6947090281293751)
                (0.53, 0.6952818789104241)
                (0.54, 0.6960722748449032)
                (0.55, 0.6970746414224845)
                (0.56, 0.6982840670533418)
                (0.57, 0.699696275317943)
                (0.58, 0.7013076025141329)
                (0.59, 0.7031149765048882)
                (0.6, 0.705115895702047)
                (0.61, 0.707308408098575)
                (0.62, 0.7096910906984084)
                (0.63, 0.7122630298561664)
                (0.64, 0.7150238030753789)
                (0.65, 0.717973462791648)
                (0.66, 0.7211125226225602)
                (0.67, 0.7244419465189864)
                (0.68, 0.7279631412142799)
                (0.69, 0.7316779523455814)
                (0.70, 0.7355886646189486)
                (0.71, 0.7396980064106677)
                (0.72, 0.7440091592435745)
                (0.73, 0.7485257726531105)
                (0.74, 0.7532519850678772)
                (0.75, 0.7581924514805006)
                (0.76, 0.7633523788864142)
                (0.77, 0.7687375707346347)
                (0.78, 0.7743544819859631)
                (0.79, 0.7802102868388481)
                (0.8, 0.786312961803251)
                (0.81, 0.7926713876388253)
                (0.82, 0.7992954748157927)
                (0.83, 0.8061963187401816)
                (0.84, 0.813386393217187)
                (0.85, 0.8208797938315323)
                (0.86, 0.828692547621916)
                (0.87, 0.8368430124726565)
                (0.88, 0.8453524004897085)
                (0.89, 0.8542454768129525)
                (0.9, 0.8635515134807049)
                (0.91, 0.8733056259551878)
                (0.92, 0.8835507055395343)
                (0.93, 0.8943403222874498)
                (0.94, 0.9057432982093652)
                (0.95, 0.9178513633009069)
                (0.96, 0.9307930484555111)
                (0.97, 0.9447619046513025)
                (0.98, 0.9600845382693225)
                (0.99, 0.9774431732443992)

        }; 
        
        \end{axis}
	\end{tikzpicture}
	\caption{A numerical lower bound on the capacity of $\capBASfinite$ for $\tau=2^{16}$, computed using the Jimbo–Kunisawa algorithm (\cref{thm:jk-alg}) with threshold $a=0.001$. As discussed in \cref{sec:numerical}, these points approximate $C(p)$ with an additive error of less than $0.003$.} \label{fig:numerical}
	\end{figure}

\pgfmathdeclarefunction{entropy}{1}{%
  \pgfmathparse{-(#1)*ln(#1)/ln(2) - (1-(#1))*ln(1-(#1))/ln(2)}%
}

\begin{figure}
    \centering
\begin{subfigure}{0.49\textwidth}
 \centering
        \begin{tikzpicture}
		
		\begin{axis}[
    width=\linewidth,
    scaled y ticks=true,
    tick label style={/pgf/number format/fixed},
    axis lines=left,
    xlabel={$p$},
    xmin=0.9, xmax=1,
    ymin=0.85, ymax=1,
    legend style={
        font=\scriptsize
    },
    legend pos=north west,
    legend image post style={scale=0.7},
    ymajorgrids=true,
    grid style=dashed,
]  
            \addplot[
		color=blue,
        thick,
            mark size=0.6pt,
		mark=*,
		]
		coordinates {
            ( 0.8 , 0.786312961803251 )
            ( 0.805 , 0.7894596122474256 )
            ( 0.81 , 0.7926713876388253 )
            ( 0.815 , 0.795949559924682 )
            ( 0.820 , 0.7992954748157927 )
            ( 0.825 , 0.802710557452374 )
            ( 0.830 , 0.8061963187401816 )
            ( 0.835 , 0.8097543624532411 )
            ( 0.840 , 0.8133863932171871 )
            ( 0.845 , 0.8170942255085453 )
            ( 0.850 , 0.820879793831532 )
            ( 0.855 , 0.8247451642662081 )
            ( 0.860 , 0.828692547621916 )
            ( 0.865 , 0.8327243144799845 )
            ( 0.870 , 0.8368430124726565 )
            ( 0.875 , 0.8410513862251079 )
            ( 0.880 , 0.8453524004897089 )
            ( 0.885 , 0.8497492671336186 )
            ( 0.890 , 0.8542454768129525 )
            ( 0.895 , 0.8588448363935557 )
            ( 0.900 , 0.8635515134807047 )
            ( 0.905 , 0.8683700898281522 )
            ( 0.910 , 0.8733056259551881 )
            ( 0.915 , 0.878363740075885 )
            ( 0.920 , 0.8835507055395343 )
            ( 0.925 , 0.8888735725556983 )
            ( 0.930 , 0.8943403222874501 )
            ( 0.935 , 0.8999600648645246 )
            ( 0.940 , 0.9057432982093652 )
            ( 0.945 , 0.9117022530395353 )
            ( 0.950 , 0.9178513633009066 )
            ( 0.955 , 0.9242079249711804 )
            ( 0.960 , 0.9307930484555113 )
            ( 0.965 , 0.9376330895366252 )
            ( 0.970 , 0.9447619046513027 )
            ( 0.975 , 0.952224628966008 )
            ( 0.980 , 0.9600845382693227 )
            ( 0.985 , 0.9684370023607721 )
            ( 0.990 , 0.9774431732443992 )
            ( 0.995 , 0.9874403913189771 )

        }; 
        \addlegendentry{Capacity approximation};

        \addplot[
            red,thick,domain=0.7,domain=0.7:0.999, samples=200
        ] 
        { 1 - entropy(x) / 4 };
        \addlegendentry{Upper bound \eqref{eq:ub-alphap}};

        \addplot[
            green,thick,domain=0.7,domain=0.7:0.999, samples=200
        ] 
        { (1 + x + entropy((1-x)/2) - entropy(x)) / 2 };
        \addlegendentry{Lower bound \eqref{eq:lb-geom2}};
        
        \end{axis}
	\end{tikzpicture}
	\caption{Our asymptotically optimal closed-form upper and lower bounds (derived in \cref{sec:p-geq-half}) compared to the numerical approximation of the capacity in the large $p$ regime.
    } 
    \label{fig:p-1-ublb}
	\end{subfigure}
\hfill
\begin{subfigure}{0.49\textwidth}
 \centering
        \begin{tikzpicture}
		
		\begin{axis}[
    width=\linewidth,
    scaled y ticks=true,
    tick label style={/pgf/number format/fixed},
    axis lines=left,
    xlabel={$p$},
    xmin=0.5, xmax=0.55,
    ymin=0.68, ymax=0.72,
    xtick={0.5,0.51,0.52,0.53,0.54},
    legend style={
        font=\scriptsize
    },
    legend pos=north west,
    legend image post style={scale=0.7},
    ymajorgrids=true,
    grid style=dashed,
]
            \addplot[
		color=blue,
        thick,
            mark size=0.6pt,
		mark=*,
		]
		coordinates {
                ( 0.5 , 0.6942414858912936 )
                ( 0.505 , 0.6942716107981445 )
                ( 0.51 , 0.69436000567872 )
                ( 0.515 , 0.6945061158609878 )
                ( 0.52 , 0.6947090281293751 )
                ( 0.525 , 0.6949678850576532 )
                ( 0.53 , 0.6952818789104241 )
                ( 0.535 , 0.6956502480730913 )
                ( 0.54 , 0.6960722748449032 )
                ( 0.545 , 0.6965472839474861 )
                ( 0.55 , 0.6970746414224845 )
        }; 
        \addlegendentry{Capacity approximation};
        
        \pgfmathsetmacro{\golden}{(1+sqrt(5))/2}
        \pgfmathsetmacro{\loggolden}{ln(\golden)/ln(2)}
        
        \addplot[
            red,thick,domain=0.5:0.55, samples=200
        ] 
        {\loggolden * (1+ (2 *(x - 0.5) )/ (2+\golden ) )};
        \addlegendentry{Upper bound \cref{thm:UP-p-half}}

        \addplot[
            green,
            thick,
            domain=0.5:0.55,
            samples=200,
        ]
        {
        \loggolden * (
            1
            + \loggolden *
            (4*(3-\golden)*ln(2))/(5*sqrt(5)*(\golden-1))
            * (x - 0.5)^2
        )
        };
        \addlegendentry{Lower bound \cref{prop:LB-p-half}};
        
        \end{axis}
	\end{tikzpicture}
	\caption{Our asymptotic closed-form upper and lower bounds compared to the numerical approximation of the capacity in the $p\approx 1/2$ regime. When plotting these bounds, we omitted the lower order terms.
    } \label{fig:p-half-ublb}
    \end{subfigure}
    \caption{Comparison of our closed-form bounds with numerical capacity approximations in the regimes $p\approx1$ and $\approx1/2$.}
\label{fig:combined}
\end{figure}

\section{The Bernoulli additive-noise sticky channel with bounded runlengths}\label{sec:cap-rlc}

We now consider the Bernoulli additive-noise sticky channel with parameter $p \leq \frac{1}{2}$, and input runlength bounded from above by $\tau < \infty$ (this is equivalent to considering the corresponding additive-noise memoryless channel with finite input alphabet $[\tau]$). We summarize the results of this section below.

\begin{itemize}
    \item \emph{Odd $\tau$}: When $\tau$ is odd, we are able to compute $C(p,\tau)$ exactly for all $p \leq \frac{1}{2}$. We separate our analysis into two cases:
    \begin{enumerate}
        \item \emph{Naive RLC coding optimality}: Taking a similar approach to that of \Cref{thm:opt-rlc}, we show that naive coding over the odd integers is optimal for a region $p \in \left[\frac{1}{\varphi^2}(1 + \eps_{\tau}), \frac{1}{2}\right]$, and that $C(p,\tau) = \log \varphi - \eps_{\tau}'$ in this region, where $\eps_{\tau}, \eps_{\tau}' \to 0$ exponentially fast in $\tau$.

        \item \emph{Remaining values of $p \leq \frac{1}{2}$}: Similarly to \Cref{thm:cap-below-half}, we guess that the optimal output distribution $P_Y$ satisfies $P_Y(y) \propto 2^{-\lambda y}$ for some $\lambda$ depending on $\tau$ and $p$, which leads us to determine the exact capacity for all values of $p$ in the left sub-interval of $\left[0,\frac{1}{2}\right]$ for which naive coding is not optimal.
    \end{enumerate}

    \item \emph{Even $\tau$}: When $\tau$ is even, we are only able to determine the exact capacity $C(p,\tau)$ for values of $p$ in a left sub-interval of $\left[0,\frac{1}{2}\right]$, using the same approach as that of the second case mentioned above for odd $\tau$. In fact, it is easily shown that naive coding is never optimal when $\tau$ is even, and we are unable to obtain the exact capacity for values of $p$ in the remaining right sub-interval of $\left[0,\frac{1}{2}\right]$.

    Towards a better understanding of the capacity-achieving input distribution for the right sub-interval, we prove that the optimal input does not have full support over $[\tau]$, for all $\tau$ large enough.
\end{itemize}

We begin by studying the parameter regime where naive RLC coding is optimal.

\subsection{When is naive RLC coding optimal?}

We find that the naive input distribution over strings with runs of odd length is optimal when $\tau$ is odd, for a region of $p$ which approaches the plateau region $\left[\frac{1}{\varphi^{2}},\frac{1}{2}\right]$ for the infinite input alphabet setting found in \Cref{thm:opt-rlc}. Moreover, both the capacity and the plateau region for $p$ with input alphabet $[\tau]$ approach their infinite-alphabet analogues exponentially fast in $\tau$.
When $\tau$ is even, the naive coding strategy is not optimal. While we do not obtain the exact capacity for this case, we remark that the trivial bounds
\begin{equation*}
    C(p,\tau-1) \leq C(p,\tau) \leq C(p,\tau+1)
\end{equation*}
hold for all $p,\tau$.

\begin{theorem}\label{thm:finite_alphabet_bernoulli_right}
    Fix an odd integer $\tau>1$ and let $\lambda_{\tau} \in (0,1]$ be the unique solution to
    \begin{equation*} \label{eq:right_interval_condition}
        2^{2\lambda} - 2^{\lambda} - 1 + 2^{-\lambda \tau} = 0.
    \end{equation*}
    Then, $C(p,\tau) \geq \lambda_{\tau}$ with equality if and only if $p \in \left[\frac{1}{2^{\lambda_{\tau}} + 1}, \frac{1}{2} \right]$. In particular,
    \begin{equation*}
        C(p,\tau) > \log \varphi - 1.3^{-\tau}
    \end{equation*}
    for $p \in \left[ \frac{1}{\varphi^{2}}(1 + 1.3^{-\tau}), \frac{1}{2} \right]$.
\end{theorem}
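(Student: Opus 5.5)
The plan is to mirror the proof of \cref{thm:opt-rlc}, truncated to $[\tau]$. Consider the input distribution $P_X(n)=2^{-\lambda_\tau n}$ for odd $n\le\tau$ and $P_X(n)=0$ otherwise. Summing the geometric series over $n=1,3,\dots,\tau$ gives
\begin{equation*}
\sum_{k=1}^{(\tau+1)/2} 2^{-\lambda(2k-1)} = \frac{2^{-\lambda}\left(1-2^{-\lambda(\tau+1)}\right)}{1-2^{-2\lambda}},
\end{equation*}
and this equals $1$ exactly when $2^{2\lambda}-2^{\lambda}-1+2^{-\lambda\tau}=0$. Hence $P_X$ is a valid distribution precisely for $\lambda=\lambda_\tau$.

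\textbf{Lower bound.} Since $X$ is supported on odd integers and $Y\in\{X,X+1\}$, we have $H(X|Y)=0$. Therefore $I(X;Y)/\E[X]=H(X)/\E[X]=\lambda_\tau$, because $\log(1/P_X(x))=\lambda_\tau x$ on the support. This gives $C(p,\tau)\ge\lambda_\tau$ for every $p$.

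\textbf{Equality via \cref{lem:CPUC-UB-framework}.} I apply the lemma to the finite channel with input alphabet $[\tau]$ and output alphabet $[\tau+1]$. The output law is $P_Y(y)=(1-p)P_X(y)$ for odd $y$ and $P_Y(y)=p\,P_X(y-1)$ for even $y$.
\begin{itemize}
\item For odd $x\le\tau$, the computation from \cref{thm:opt-rlc} gives $\KL(P_{Y_x}\|P_Y)=\lambda_\tau x$.
\item For even $x\le\tau-1$, both neighbours $x-1$ and $x+1$ are odd and at most $\tau$. This is exactly where oddness of $\tau$ is used: no boundary term appears. The same computation gives $\KL(P_{Y_x}\|P_Y)=\lambda_\tau x+\Delta$.
\end{itemize}
Simplifying the entropy terms as in \cref{thm:opt-rlc},
\begin{equation*}
\Delta=(2p-1)\lambda_\tau-(1-p)\log p-p\log(1-p)-h(p)=(2p-1)\left(\lambda_\tau+\log\frac{p}{1-p}\right).
\end{equation*}
For $p\le 1/2$, $\Delta\le0$ holds iff $2^{\lambda_\tau}p/(1-p)\ge1$, i.e.\ iff $p\ge 1/(2^{\lambda_\tau}+1)$. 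For $p$ in that range the lemma yields $C(p,\tau)=\lambda_\tau$.

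\textbf{Only if.} Suppose $p<1/(2^{\lambda_\tau}+1)$ and, for contradiction, $C(p,\tau)=\lambda_\tau$. Then $P_X$ achieves the capacity per unit cost. The alphabet is finite, so all entropies are finite and \cref{lem:AG_necessary} applies. It forces $\KL(P_{Y_x}\|P_Y)\le\lambda_\tau x$ for every even $x$, contradicting $\Delta>0$. Hence $C(p,\tau)>\lambda_\tau$ outside the interval.

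\textbf{Quantitative part.} Let $f(\lambda)=2^{2\lambda}-2^\lambda-1$, which is increasing for $\lambda>0$ and vanishes at $\log\varphi$. Since $f(\lambda_\tau)=-2^{-\lambda_\tau\tau}<0$, we get $\lambda_\tau<\log\varphi$. By the mean value theorem,
\begin{equation*}
\log\varphi-\lambda_\tau=\frac{2^{-\lambda_\tau\tau}}{f'(\xi)}\quad\text{for some }\xi\in(\lambda_\tau,\log\varphi).
\end{equation*}
I then need a crude a priori lower bound on $\lambda_\tau$, for instance monotonicity in $\tau$ together with $\lambda_\tau\ge\lambda_3$. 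This should give $2^{-\lambda_\tau\tau}\le c\,1.3^{-\tau}$ with $f'(\xi)$ bounded below, so the gap is strictly below $1.3^{-\tau}$.

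For the interval, use $\varphi^2=\varphi+1$:
\begin{equation*}
\frac{1}{2^{\lambda_\tau}+1}=\frac{1}{\varphi^2}\cdot\frac{\varphi+1}{2^{\lambda_\tau}+1}.
\end{equation*}
Bounding $\varphi-2^{\lambda_\tau}\le \varphi\ln2\,(\log\varphi-\lambda_\tau)$ then shows the ratio is at most $1+1.3^{-\tau}$. So every $p\in[\varphi^{-2}(1+1.3^{-\tau}),1/2]$ lies in the equality region, where $C(p,\tau)=\lambda_\tau>\log\varphi-1.3^{-\tau}$.

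The main obstacle I expect is this explicit constant bookkeeping, in particular checking that the base $1.3$ works uniformly for all odd $\tau\ge3$. The small cases may need to be verified directly.
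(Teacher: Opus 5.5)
Your proposal is correct and is essentially the paper's proof, with the steps the paper calls ``routine'' filled in. It uses the same input $2^{-\lambda_\tau x}$ on odd $x\le\tau$ and the same verification through \cref{lem:CPUC-UB-framework} with $\Delta=(2p-1)\bigl(\lambda_\tau+\log\frac{p}{1-p}\bigr)$, which also rules out $p>1/2$. The ``only if'' direction goes via \cref{lem:AG_necessary}, and the quantitative part is a root-perturbation bound: you apply the mean value theorem in $\lambda$, whereas the paper perturbs the root $2^{\lambda_\tau}$ in \cref{lemma:convergence}.

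The constants you worried about work uniformly in $\tau$. The function $x^2-x-1+x^{-\tau}$ is convex, vanishes at $x=1$, and is negative at $x=1.3$ for $\tau\ge2$, so $2^{\lambda_\tau}>1.3$. This gives $2^{-\lambda_\tau\tau}<1.3^{-\tau}$ and $f'(\xi)>\ln 2\cdot 1.3\cdot 1.6>1$, so no small cases need separate checking. Your conversion $\varphi-2^{\lambda_\tau}\le\varphi\ln 2\,(\log\varphi-\lambda_\tau)$ is also more careful than the paper's step $\log(1+u)\le u$, which fails for base-$2$ logarithms when $0<u<1$.
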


\begin{proof}
The proof is analogous to that of \Cref{thm:opt-rlc}. Here, we consider the distribution
\begin{equation*}
    P_X(x) = \begin{cases}
        2^{-\lambda_{\tau} x} & \text{if $x \leq \tau$ is odd,}\\
        0 & \text{otherwise.}
    \end{cases}
\end{equation*}
It is easy to verify that $P_X$ is indeed a valid probability distribution.
Writing $x := 2^{\lambda_{\tau}}$, determining $\lambda_{\tau}$ amounts to finding the roots of
\begin{align*}
    f_{\tau}(x) := x^2 - x - 1 + x^{-\tau}.
\end{align*}
To determine the conditions for which $\lambda_{\tau}$ is the exact capacity, we proceed as in the proof of \Cref{thm:opt-rlc} (i.e., by checking the conditions of \Cref{lem:CPUC-UB-framework}). 
By routine algebraic manipulation, we conclude that this is the case if and only if
\begin{equation} \label{eq:lambda_p_relation_naiveRLC}
    \frac{1}{2^{\lambda_{\tau}}+1} \leq p \leq \frac{1}{2}.
\end{equation}
The convergence statement for $C(p,\tau) = \lambda_{\tau}$ follows from \Cref{lemma:convergence} below. Indeed, since $0 < \varphi - 2^{\lambda_{\tau}} < 1.3^{-\tau}$, it holds that
\begin{equation*}
    \log \varphi - \lambda_{\tau} = \log \left( \frac{\varphi}{2^{\lambda_{\tau}}} \right) < \frac{\varphi}{2^{\lambda_{\tau}}} - 1 < 1.3^{-\tau},
\end{equation*}
where we used that $\log(x+1) \leq x$ for $x > 0$. For the limit point $p = \frac{1}{2^{\lambda_{\tau}}+1}$, we have that
\begin{equation*}
    p \leq \frac{1}{\varphi - 1.3^{-\tau}+1} = \frac{1}{\varphi + 1} + \frac{1.3^{-\tau}}{(\varphi + 1)(\varphi + 1-1.3^{-\tau})} \leq \frac{1}{\varphi^2} + \frac{1.3^{-\tau}}{\varphi^2},
\end{equation*}
where we used the fact that $\varphi + 1 = \varphi^2$.
\end{proof}

\begin{lemma} \label{lemma:convergence}
    Fix an integer $\tau \geq 2$ and define
    \begin{equation*}
        f_{\tau}(x) = x^2 - x - 1 + x^{-\tau}
    \end{equation*}
    and
    \begin{equation*}
        f(x) = x^2 - x - 1.
    \end{equation*}
    Then, in the interval $I = \left( 1, 2 \right)$, $f$ has a unique root $x^{\star} = \varphi$ and $f_{\tau}$ has a unique root $x_{\tau}^{\star}$. Moreover, $|x_{\tau}^{\star} - x^{\star}| < 1.3^{-\tau}$.
\end{lemma}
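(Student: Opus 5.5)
We handle $f$ and $f_\tau$ separately and then compare the two roots with a mean-value argument.

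\textbf{Step 1: the root of $f$.} On $(1,2)$, $f(x)=x^2-x-1$ has roots $(1\pm\sqrt5)/2$, and only $\varphi$ lies in this interval. So $x^\star=\varphi$ is the unique root of $f$ in $I$.

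\textbf{Step 2: existence and uniqueness for $f_\tau$.} The endpoint values are
\begin{equation*}
f_\tau(1)=0, \qquad f_\tau(2)=1+2^{-\tau}>0,
\end{equation*}
so the endpoints alone do not give a root. Instead, evaluate at $\varphi$:
\begin{equation*}
f_\tau(\varphi)=\varphi^{-\tau}>0.
\end{equation*}
Next, evaluate at $x=3/2$:
\begin{equation*}
f_\tau(3/2)=-\tfrac14+(2/3)^{\tau}.
\end{equation*}
This is negative when $(2/3)^\tau<1/4$, i.e.\ for $\tau\ge 4$. Since $f_\tau(x)<0$ exactly when $x^{-\tau}<1+x-x^2$, small $\tau$ require a point closer to $\varphi$. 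For $\tau=2,3$ we can take $x=1.6$, where $f_\tau(1.6)=-0.04+1.6^{-\tau}$. This is $-0.04+0.39>0$ for $\tau=2$, so it fails there. We then check $\tau=2$ directly: $x^2-x-1+x^{-2}=0$ gives $x^4-x^3-x^2+1=(x-1)(x^3-x-1)$, whose root in $I$ is the plastic number $\approx1.3247$. More generally, $x^\tau f_\tau(x)=x^{\tau+2}-x^{\tau+1}-x^\tau+1$ has the factor $(x-1)$. The cofactor $g_\tau(x)=x^{\tau+1}-x^{\tau-1}-\dots-x-1$ has a single sign change in its coefficients. By Descartes' rule, $g_\tau$ therefore has exactly one positive root. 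Moreover, $g_\tau(1)=1-\tau<0$ and $g_\tau(2)>0$. This gives existence and uniqueness of $x_\tau^\star$ in $I$ uniformly in $\tau\ge2$; it is the cleaner route, and I would use it in place of the case checks above.

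\textbf{Step 3: location of $x_\tau^\star$.} Since $x^{-\tau}>0$, we have $f(x_\tau^\star)=-(x_\tau^\star)^{-\tau}<0$. Hence $x_\tau^\star\in(1,\varphi)$.

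\textbf{Step 4: the distance bound.} By the mean value theorem,
\begin{equation*}
\varphi-x_\tau^\star=\frac{f(\varphi)-f(x_\tau^\star)}{f'(\xi)}=\frac{(x_\tau^\star)^{-\tau}}{2\xi-1}
\end{equation*}
for some $\xi\in(x_\tau^\star,\varphi)$. This needs a lower bound $x_\tau^\star\ge c>1.3$, which gives both $(x_\tau^\star)^{-\tau}\le c^{-\tau}$ and $2\xi-1>1$, hence the bound $c^{-\tau}<1.3^{-\tau}$.

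\textbf{Main obstacle: the lower bound on $x_\tau^\star$.} For $\tau=2$, $x_2^\star\approx1.3247>1.3$. The key observation is that $x_\tau^\star$ is increasing in $\tau$. For $x>1$ and $\tau'>\tau$, $f_{\tau'}(x)<f_\tau(x)$, so $f_{\tau'}(x_\tau^\star)<0$, and the unique root $x_{\tau'}^\star$ must lie above $x_\tau^\star$. Thus $x_\tau^\star\ge x_2^\star>1.3$ for all $\tau\ge2$. Bounding $2\xi-1>2(1.3)-1=1.6$ then gives
\begin{equation*}
|x_\tau^\star-\varphi|\le \frac{(x_2^\star)^{-\tau}}{1.6}<1.3^{-\tau}.
\end{equation*}
The remaining care is confirming $x_2^\star>1.3$, which follows from $1.3^3-1.3-1=-0.103<0$.
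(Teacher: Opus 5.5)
Your proof is correct. Its second half follows the paper's proof. Both show $x_\tau^\star$ is increasing in $\tau$ because $f_{\tau'}<f_\tau$ on $(1,\infty)$ for $\tau'>\tau$. Both then bound $\varphi-x_\tau^\star=(x_\tau^\star)^{-\tau}/f'(\xi)$ using a lower bound on $f'$. The paper uses the cruder $f'(x)>1$ on $(1,2)$ where you use $2\xi-1>1.6$; either suffices.

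The genuine difference is in existence and uniqueness of $x_\tau^\star$:
\begin{itemize}
\item The paper argues by convexity. It notes $f_\tau''>0$, $f_\tau(1)=0$, $f_\tau'(1)=1-\tau<0$ and $f_\tau(\varphi)=\varphi^{-\tau}>0$. This gives at once a single root in $(1,\varphi)$, and $f_\tau>f\geq 0$ on $[\varphi,2)$ rules out others.
\item You factor $x^\tau f_\tau(x)=(x-1)g_\tau(x)$ and apply Descartes' rule of signs to $g_\tau$.
\end{itemize}
The convexity argument is shorter and places the root below $\varphi$ in the same step; you need your separate Step~3 for that. Your algebraic route gives the stronger fact that $1$ and $x_\tau^\star$ are the only positive roots of $f_\tau$. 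It also identifies $x_2^\star$ as the real root of $x^3-x-1$, so you can actually check $x_2^\star>1.3$ via $1.3^3-1.3-1<0$. The paper only asserts that numerical fact.

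In a write-up, delete the exploratory evaluations at $x=3/2$ and $x=1.6$. The factorization makes them unnecessary. The $x=1.6$ check also fails for $\tau=3$ (there $f_3(1.6)\approx 0.20>0$), not only for $\tau=2$ as you suggest.
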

\begin{proof}
    The claim about the root of $f$ is easy to check. Moreover, for all $\tau$ we have
    \begin{align*}
        f_{\tau}'(x) &= 2x - 1 - \tau x^{-(\tau+1)}
    \end{align*}
    and
    \begin{align*}
        f_{\tau}''(x) = 2 + \tau (\tau + 1) x^{-(\tau+2)}.
    \end{align*}
    Thus, $f_{\tau}'(1)<0$ and $f'_{\tau}$ is strictly increasing in $\left[1,2 \right]$. Since $f_{\tau}(1) = 0$ and $f_{\tau}(\varphi) > f(\varphi) = 0$, it must be the case that $f_{\tau}$ has a unique root $x_{\tau}^{\star} \in \left( 1,\varphi \right)$. Since $f_{\tau}(x) > f_{\tau+1}(x)$ for all $x \in (1,\varphi)$, the sequence $x_{\tau}^{\star}$ is increasing, and thus converges to some value in the interval $[x_2^{\star},\varphi]$.
    Note that $f'(x) > 1$ for all $x \in (1,2)$. Thus, writing $\delta_{\tau} := \varphi - x_{\tau}^{\star}$, we have
    \begin{equation*}
        0 = f_{\tau}(x_{\tau}^{\star}) = f(\varphi - \delta_{\tau}) + \left(x_{\tau}^{\star}\right)^{-\tau} < -\delta_{\tau} + \left(x_{\tau}^{\star}\right)^{-\tau},
    \end{equation*}
    from which we conclude that $\delta_{\tau} < \left(x_{\tau}^{\star}\right)^{-\tau} \leq \left(x_{2}^{\star}\right)^{-\tau}$, where $x_2^{\star} > 1.3$.
\end{proof}

\begin{remark}[On the optimality of naive RLC coding for even $\tau$]
    \em
    If $\tau$ is even, then $Y_{\tau}$ is not absolutely continuous with respect to $Y$, where $Y$ is the output distribution corresponding to the input $X$ defined in the proof of \cref{thm:finite_alphabet_bernoulli_right}, since ${\tau}+1 \in \supp (Y_{\tau})\setminus \supp(Y)$. Thus, $\KL(P_{Y_{\tau}}\|P_Y) = + \infty$, from which we conclude that $X$ is not capacity-achieving by \Cref{lem:AG_necessary}.
\end{remark}

\subsection{Completing the region $p \in \left(0,\frac{1}{2}\right]$}

We now determine the capacity $C(p,\tau)$ of the bounded runlength channel for the region of $p \leq \frac{1}{2}$ not covered by the previous result. Contrarily to before, we are able to determine the exact capacity for both even and odd values of $\tau$.

\begin{theorem}\label{thm:finite_alphabet_bernoulli_left} 
Fix an integer $\tau > 1$ and $p\in(0,1/2]$, and let $\lambda_{\tau,p} \in (0,1)$ be the unique solution to
    \begin{equation} \label{eq:left_interval_condition}
        2^{\lambda p - h(p)}  \cdot \left( 1 - 2^{-\lambda(\tau+1)} \right) - 2^{\lambda} + 1 = 0.
    \end{equation}
    Then, $C(p,\tau) = \lambda_{\tau,p}$ if $p \in \left( 0, \frac{1}{2^{\lambda_{\tau,p}} + 1} \right]$. In particular,
    \begin{equation*}
        C(p,\tau) > \lambda_{p}^{\star} - 1.8 \cdot 1.3^{-\tau}
    \end{equation*}
    for $p \in \left( 0, \frac{1}{2^{\lambda_{p}^{\star}}+1} \left( 1 + 1.8 \cdot 1.3^{-\tau} \right) \right]$, where $\lambda_{p}^{\star}$ is the solution to
    \begin{equation*}
        2^{\lambda p - h(p)} - 2^{\lambda} + 1 = 0.
    \end{equation*}
\end{theorem}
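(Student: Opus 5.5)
Our approach mirrors the proof of \cref{thm:cap-below-half}. We guess that the capacity-achieving output distribution of $\ANMC_{\Ber(p),\tau}$ is a truncated geometric on the output alphabet $[\tau+1]$:
\begin{equation*}
Q(y)=2^{-\lambda(y-p)-h(p)},\qquad y\in[\tau+1].
\end{equation*}
Summing this geometric series gives
\begin{equation*}
\sum_{y=1}^{\tau+1}Q(y)=\frac{2^{\lambda p-h(p)}\left(1-2^{-\lambda(\tau+1)}\right)}{2^\lambda-1}.
\end{equation*}
Setting this equal to $1$ is exactly \cref{eq:left_interval_condition}, so $\lambda=\lambda_{\tau,p}$. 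Existence and uniqueness of this root in $(0,1)$ should follow from monotonicity of the left-hand side in $\lambda$ together with its limits at $0$ and $1$.

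For every input $x\in[\tau]$, the same computation as in \cref{sec:genbound} (with $g\equiv 0$) gives
\begin{equation*}
\KL(P_{Y_x}\|Q)=(1-p)\log\frac{1}{Q(x)}+p\log\frac{1}{Q(x+1)}-h(p)=\lambda x.
\end{equation*}
By the second part of \cref{lem:CPUC-UB-framework}, it therefore suffices to show that $Q$ is realizable as the output of some input distribution supported on $[\tau]$.

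For realizability we work with generating functions. We need $G_X(w)=G_Q(w)/(1-p+pw)$ to be a polynomial with nonnegative coefficients, supported on degrees $1,\dots,\tau$. Here
\begin{equation*}
G_Q(w)=c\,\frac{w\left(1-(w/2^\lambda)^{\tau+1}\right)}{2^\lambda-w},\qquad c=2^{\lambda p-h(p)}.
\end{equation*}
Divisibility by $(1-p+pw)$ is equivalent to the deconvolution recursion
\begin{equation*}
P_X(x)=\frac{Q(x)-p\,P_X(x-1)}{1-p}
\end{equation*}
terminating consistently, i.e.\ $p\,P_X(\tau)=Q(\tau+1)$. This gives $P_X(n)\propto\alpha_n$ as in the proof of \cref{thm:cap-below-half} for $n\le\tau$, together with a boundary equation. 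This step is the main obstacle: unlike the infinite case, the truncation produces this extra consistency condition, and it is not obvious that it holds automatically for the $\lambda$ solving \cref{eq:left_interval_condition}.

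If the consistency condition fails for the pure truncated geometric, I would modify the ansatz. I would either allow the correction term $g(y)=\beta(-(1-p)/p)^{y-1}$ from \cref{thm:genbound}, restricted to finite support, or adjust $Q(\tau+1)$ and use only the inequality $\KL(P_{Y_x}\|Q)\le\lambda x$ with equality on the support. I would then re-derive the normalization, expecting it to still reduce to \cref{eq:left_interval_condition}.

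Assuming realizability, nonnegativity is handled exactly as in \cref{thm:cap-below-half}. Odd coefficients are automatically positive, and even coefficients are nonnegative iff $2^\lambda\le(1-p)/p$, i.e.\ $p\le 1/(2^{\lambda_{\tau,p}}+1)$, which is the stated range.

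For the quantitative statement, I would compare \cref{eq:left_interval_condition} with its $\tau=\infty$ version, exactly as in \cref{lemma:convergence}. Setting $u=2^\lambda$, the two equations differ by $c\,u^{-(\tau+1)}$. Since the derivative of $u\mapsto c-u+1$ with $c=u^{p}2^{-h(p)}$ is bounded away from zero on the relevant interval, the roots differ by $O(u^{-\tau})$ with $u\ge 2^{\lambda_{\tau,p}}>1.3$. Tracking constants should give $\lambda^\star_p-\lambda_{\tau,p}<1.8\cdot1.3^{-\tau}$. The perturbed interval endpoint then follows as at the end of \cref{thm:finite_alphabet_bernoulli_right}: monotonicity in $\lambda$ shows that $p\le\frac{1}{2^{\lambda_p^\star}+1}(1+1.8\cdot1.3^{-\tau})$ implies the condition $p\le 1/(2^{\lambda_{\tau,p}}+1)$.
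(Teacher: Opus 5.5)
The realizability step you flagged is a genuine gap, and it cannot be closed, because the boundary equation is false. With $s=-p2^{\lambda}/(1-p)$ and $c=2^{\lambda p-h(p)}$, your deconvolution gives $P_X(x)=c\,2^{-\lambda x}(1-s^{x})/(1-p+p2^{\lambda})$ for $x\in[\tau]$. The leftover equation $p\,P_X(\tau)=Q(\tau+1)$ then simplifies to $s^{\tau+1}=1$, equivalently $G_Q(-(1-p)/p)=0$.

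Since $s<0$, this never holds for even $\tau$. For odd $\tau$ it holds only when $2^{\lambda_{\tau,p}}=(1-p)/p$, i.e., only at the right endpoint $p=1/(2^{\lambda_{\tau,p}}+1)$. There the deconvolved input is exactly the naive RLC input $P_X(x)=2^{-\lambda_{\tau,p}x}$ on odd $x\le\tau$ from \cref{thm:finite_alphabet_bernoulli_right}.

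Everywhere else only the first part of \cref{lem:CPUC-UB-framework} applies, giving $C(p,\tau)\le\lambda_{\tau,p}$, and this inequality is strict. Because $[\tau]$ is finite, an optimal input $P_X^*$ with output $P_Y^*$ exists. Combining $\sum_x P_X^*(x)\KL(P_{Y_x}\|Q)=I(X^*;Y^*)+\KL(P_Y^*\|Q)$ with $\KL(P_{Y_x}\|Q)=\lambda_{\tau,p}x$ gives
\begin{equation*}
C(p,\tau)=\lambda_{\tau,p}-\frac{\KL(P_Y^*\|Q)}{\E[X^*]}<\lambda_{\tau,p}.
\end{equation*}
So neither of your fallbacks (adding the $\beta$-term, or altering $Q(\tau+1)$) can recover the stated value. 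Any realizable output distribution gives something strictly smaller, so your expectation that the normalization would still reduce to \cref{eq:left_interval_condition} is wrong.

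Concretely, take $\tau=2$ and $p=0.1$. Then $\lambda_{2,0.1}\approx 0.636$ and $1/(2^{\lambda_{2,0.1}}+1)\approx 0.39$, so $p$ lies in the claimed range. Yet with $P_X=(a,1-a)$ the rate $\bigl(H(0.9a,\,0.9-0.8a,\,0.1-0.1a)-h(0.1)\bigr)/(2-a)$ peaks near $a\approx 0.57$ at about $0.518$.

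The paper's proof takes the same route as yours and skips exactly this point. It asserts that $\sum_{x\in[\tau]}P_X(x)=1$ is guaranteed by construction once $\sum_{y\in[\tau+1]}Q(y)=1$. But summing the first $\tau$ deconvolution equations gives $\sum_{x}P_X(x)=1-Q(\tau+1)+p\,P_X(\tau)$, so normalization is precisely the consistency condition you isolated. Your suspicion was therefore well founded.

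What this line of argument actually proves is the upper bound $C(p,\tau)\le\lambda_{\tau,p}$ for $p\in(0,1/2]$, together with equality at the odd-$\tau$ endpoint. The quantitative ``in particular'' bound cannot be obtained through the false equality.

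Separately, your final endpoint comparison runs in the wrong direction. An upper bound on $2^{\lambda_p^\star}-2^{\lambda_{\tau,p}}$ only yields an upper bound on $1/(2^{\lambda_{\tau,p}}+1)$. Deducing $p\le 1/(2^{\lambda_{\tau,p}}+1)$ from $p\le\frac{1}{2^{\lambda_p^\star}+1}(1+1.8\cdot 1.3^{-\tau})$ would instead need a lower bound on that gap. This is the reverse of the situation in \cref{thm:finite_alphabet_bernoulli_right}, where the inclusion goes the other way.
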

\begin{proof}
    The proof is similar to that of \Cref{thm:cap-below-half}. 
    Defining
    \begin{equation*}
        Q(y) = 2^{-\lambda(y-p) - h(p)}
    \end{equation*}
    for all $y\in[\tau+1]$, one can easily verify that $\sum_{y=1}^{\tau+1} Q(y) = 1$ if and only if $\lambda \in (0,1)$ satisfies \Cref{eq:left_interval_condition}. Note that the map $f(\lambda)= \sum_{y=1}^{\tau+1} Q(y)$ is decreasing in $\lambda$. Moreover, $f(1) = 2^{p-h(p)} (1 - 2^{-(\tau+1)}) < 1$ and $f(0) = (\tau+1)2^{-h(p)}> 1$ whenever $p \leq \frac{1}{2}$ and $\tau > 1$. We then have that $f(\lambda)$ must attain the value 1 for some unique $\lambda_{\tau,p} \in (0,1)$. Equivalently, \Cref{eq:left_interval_condition} has a unique solution $\lambda_{\tau,p} \in (0,1)$. Hence, $Q$ with this choice of $\lambda$ gives rise to a valid probability distribution.
    Analogously to the proof of \Cref{thm:cap-below-half}, we can conclude that
    \begin{equation*}
        \KL(P_{Y_x}\|Q) = \lambda x,
    \end{equation*}
    for every $x \in [\tau]$.
    We now must check that $Q$ is a valid output distribution corresponding to some input distribution $P_X$. 
    This amounts to finding some $P_X$ supported on $[\tau]$ such that, for all $y \in [\tau+1]$,
    \begin{equation*} 
        Q(y) = (1-p)P_X(y) + p P_X(y-1).
    \end{equation*}
    From this equation we can deduce that
    \begin{equation} \label{eq:X_from_Y}
        P_X(x) = \frac{1}{1-p} \sum_{k=0}^{x-1} Q(x-k) \left( -\frac{p}{1-p} \right)^{k}
    \end{equation}
    for all $x\in[\tau]$.
    It suffices to show that $P_X(x) \geq 0$ for all $x \in [\tau]$, as $\sum_{x\in[\tau]} P_X(x)=1$ is 
    guaranteed by construction if $\sum_{y\in[\tau+1]}Q(y)=1$. 
    Substituting $Q(y-k) = Q(1) \cdot 2^{-\lambda_{\tau,p} (y-k-1)}$ and simplifying the geometric sum in \Cref{eq:X_from_Y}, we have that $P_X(x) \geq 0$ if and only if
    \begin{align*}
    \left( -\frac{p 2^{\lambda_{\tau,p}}}{1-p} \right)^{x} \leq 1.
    \end{align*}
    This holds for all $x \in [\tau]$ if and only if $\lambda_{\tau,p} \leq \log \left( \frac{1-p}{p} \right)$.
    We then conclude that $P_Y$ is indeed the output distribution corresponding to a valid input distribution, and that $\lambda_{\tau,p}$ is the exact capacity, as long as
    \begin{equation*}
        0 < p \leq \frac{1}{2^{\lambda_{\tau,p}}+1}.
    \end{equation*}
    We omit the proof of the convergence statement, as the argument is similar to that of \Cref{thm:finite_alphabet_bernoulli_right}. In particular, by a similar argument to that of \Cref{lemma:convergence}, one can show that $2^{\lambda_p^{\star}} - 2^{\lambda_{\tau,p}} < 1.8 \cdot 1.3^{-\tau}$ for all values of $p$ considered.
\end{proof}

We reiterate that \Cref{thm:finite_alphabet_bernoulli_right} and \Cref{thm:finite_alphabet_bernoulli_left} provide the exact capacity $C(p,\tau)$ for all odd $\tau$ and $p \in \left(0,\frac{1}{2}\right]$.
\begin{corollary}
    Let $\tau > 1$ be odd and $p \in \left(0, \frac{1}{2} \right]$. Let $\lambda_{\tau}$ and $\lambda_{\tau,p}$ be as in \Cref{thm:finite_alphabet_bernoulli_right,thm:finite_alphabet_bernoulli_left}, respectively. Then,
    \begin{equation*}
        C(p,\tau) = 
        \begin{cases}
            \lambda_{\tau} & \text{if }p \geq \frac{1}{2^{\lambda_{\tau}}+1} \\
            \lambda_{\tau,p} & \text{otherwise}
        \end{cases}\, .
    \end{equation*}
\end{corollary}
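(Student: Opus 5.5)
The plan is to show that the two theorems together cover every $p\in(0,\frac12]$. \Cref{thm:finite_alphabet_bernoulli_right} already gives $C(p,\tau)=\lambda_\tau$ exactly when $p\ge p^\ast:=\frac{1}{2^{\lambda_\tau}+1}$. So it remains to check that every $p\in(0,p^\ast)$ satisfies the hypothesis $p\le \frac{1}{2^{\lambda_{\tau,p}}+1}$ of \Cref{thm:finite_alphabet_bernoulli_left}. That hypothesis is equivalent to $\lambda_{\tau,p}\le \log\frac{1-p}{p}$.

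Following the proof of \Cref{thm:cap-below-half}, I would use monotonicity of the normalization map. Set
\begin{equation*}
  g_p(\lambda)=\sum_{y=1}^{\tau+1}2^{-\lambda(y-p)-h(p)}.
\end{equation*}
This map is strictly decreasing in $\lambda$, and by definition $g_p(\lambda_{\tau,p})=1$. Hence $\lambda_{\tau,p}\le\log\frac{1-p}{p}$ holds if and only if $F(p):=g_p\bigl(\log\frac{1-p}{p}\bigr)\le 1$.

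Next I would compute $F$ in closed form. Write $r=\frac{p}{1-p}$. Plugging in $2^{\lambda}=1/r$ gives $2^{\lambda p-h(p)}=(1-p)$, which is the same identity used in the proof of \Cref{thm:cap-below-half}. Therefore
\begin{equation*}
  F(p)=(1-p)\sum_{y=1}^{\tau+1}r^{y}=p\sum_{k=0}^{\tau}r^{k}.
\end{equation*}
Both $p$ and $r$ are increasing in $p$ on $(0,\frac12)$, so $F$ is strictly increasing. It therefore suffices to show $F(p^\ast)=1$.

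At $p=p^\ast$, let $x=2^{\lambda_\tau}$, so that $p^\ast=\frac{1}{x+1}$ and $r=1/x$. Summing the geometric series gives
\begin{equation*}
  F(p^\ast)=\frac{1}{x+1}\cdot\frac{1-x^{-(\tau+1)}}{1-x^{-1}}=\frac{x-x^{-\tau}}{x^2-1}.
\end{equation*}
This equals $1$ precisely when $x^2-x-1+x^{-\tau}=0$, which is the defining equation of $\lambda_\tau$ in \Cref{thm:finite_alphabet_bernoulli_right}.

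Hence for all $p\in(0,p^\ast]$ we get $F(p)\le 1$, so $\lambda_{\tau,p}\le\log\frac{1-p}{p}$ and \Cref{thm:finite_alphabet_bernoulli_left} yields $C(p,\tau)=\lambda_{\tau,p}$. At the boundary point $p=p^\ast$ both theorems apply. There $g_{p^\ast}(\lambda_\tau)=F(p^\ast)=1$, so uniqueness gives $\lambda_{\tau,p^\ast}=\lambda_\tau$, and the two cases agree.

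The only step needing care is recognizing that $F$ simplifies to $p\sum_{k=0}^{\tau} r^k$. In the form $\frac{p(1-p)}{1-2p}\bigl(1-r^{\tau+1}\bigr)$ the two factors move in opposite directions, which hides the monotonicity. The remaining work is routine algebra.
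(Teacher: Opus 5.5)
Your proof is correct and follows the same route as the paper, which simply combines \Cref{thm:finite_alphabet_bernoulli_right,thm:finite_alphabet_bernoulli_left} and notes that $\lambda_{\tau,p}=\lambda_\tau$ at $p=\frac{1}{2^{\lambda_\tau}+1}$. Your closed form $F(p)=p\sum_{k=0}^{\tau}r^k$ with $r=\frac{p}{1-p}$, together with its monotonicity, also verifies that the hypothesis $p\le\frac{1}{2^{\lambda_{\tau,p}}+1}$ of \Cref{thm:finite_alphabet_bernoulli_left} holds on all of $(0,\frac{1}{2^{\lambda_\tau}+1})$, a step the paper's one-line proof leaves implicit.
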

\begin{proof}
    The result follows from \Cref{thm:finite_alphabet_bernoulli_right,thm:finite_alphabet_bernoulli_left} by noting that $\lambda_{\tau} = \lambda_{\tau,p}$ for $p = \frac{1}{2^{\lambda_{\tau}}+1}$.
\end{proof}

\subsection{Properties of the optimal distribution for even-sized alphabets}

Given that the naive RLC coding strategy is never optimal for even-sized input alphabet $[\tau]$, it makes sense to ask whether the capacity-achieving input distribution has full support in the region of $p \leq \frac{1}{2}$ for which we have not determined the exact capacity. We answer this question in the negative, provided that $\tau$ is large enough.

\begin{lemma} \label{lem:support-size-even-tau}
    Let $p \in \left( \frac{1}{\varphi^2}, \frac{1}{2} \right)$. For all $\tau$ large enough, the capacity achieving distribution does not have full support. 
\end{lemma}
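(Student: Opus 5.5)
We argue by contradiction. Suppose that for infinitely many even $\tau$ some capacity-achieving input $P_X$ for $\ANMC_{\Ber(p),\tau}$ has full support on $[\tau]$, with output $P_Y$ supported on $[\tau+1]$.

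\textbf{Step 1: pin down $P_Y$.} By \cref{lem:AG_necessary}, which applies since $H(\Ber(p))<\infty$ and the alphabet is finite, full support forces
\[
\KL(P_{Y_x}\|P_Y)=C(p,\tau)\,x \quad\text{for every } x\in[\tau].
\]
Repeating the derivation of \cref{thm:genbound} with $\lambda=C(p,\tau)$, this system of $\tau$ equations forces
\[
P_Y(y)=2^{-\lambda(y-p)-h(p)-g(y)}, \qquad y\in[\tau+1],
\]
where $g(y)=\left(-\tfrac{1-p}{p}\right)^{y-1}\beta$ for a single free parameter $\beta$. The normalization $\sum_{y\le\tau+1}P_Y(y)=1$ ties $\beta$ to $\lambda$.

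\textbf{Step 2: force $\beta\to 0$ exponentially.} Since $p<1/2$, the ratio $r=\tfrac{1-p}{p}$ exceeds $1$. Bounding the two endpoint terms $y=\tau$ and $y=\tau+1$ gives
\[
|g(\tau+1)|\le \lambda(\tau+1)+O(1) \quad\text{if } g(\tau+1)\ge 0, \qquad
|g(\tau)|\le \lambda\tau+O(1) \quad\text{if } g(\tau+1)<0 .
\]
The second case uses that $P_Y(\tau)\le 1$ and that $g(\tau)=-g(\tau+1)/r$ has the opposite sign. Since $\lambda\le 1$, it follows that $|\beta|\le O(\tau)\,r^{-(\tau-1)}$. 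Hence $|g(y)|$ is tiny for all $y\le \tau/2$, while near $y=\tau$ it can be $\Theta(\tau)$.

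\textbf{Step 3: compare with the naive RLC value.} Because $\tau-1$ is odd, we have the monotonicity
\[
C(p,\tau)\ge C(p,\tau-1)=\lambda_{\tau-1},
\]
using \cref{thm:finite_alphabet_bernoulli_right} and $p>1/\varphi^2$ with $\tau$ large. Combined with \cref{lemma:convergence}, this gives $\lambda\ge \log\varphi-1.3^{-\tau}$. Now recover $P_X$ via \cref{eq:X_from_Y},
\[
P_X(x)=\frac{1}{1-p}\sum_{k=0}^{x-1}P_Y(x-k)\left(-\frac{p}{1-p}\right)^k .
\]
For moderate $x$ (say $x\approx\tau/2$, where $g\approx 0$) this is, up to an exponentially small error, the same geometric alternating sum as in the proof of \cref{thm:cap-below-half}. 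That sum is nonnegative for even $x$ only if $2^{\lambda}\le \tfrac{1-p}{p}$. But $p>1/\varphi^2$ means $\tfrac{1-p}{p}<\varphi$, and $\lambda$ is within $1.3^{-\tau}$ of $\log\varphi$, so for $\tau$ large we get $2^\lambda>\tfrac{1-p}{p}$ by a fixed margin.

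\textbf{Step 4: derive the contradiction.} Let $s=\tfrac{p\,2^\lambda}{1-p}>1$. The alternating sum at an even $x$ behaves like $\bigl(1-s^{x}\bigr)/(1+s)$ times a positive factor, which is exponentially negative in $x$. So $P_X(x)<0$ at some even $x\approx \tau/2$, contradicting that $P_X$ is a distribution.

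The main obstacle is Step 2: I must control $\beta$ tightly enough that the perturbation $g$ cannot cancel the growing negative term $s^x$ at the chosen $x$. The first thing to try is to choose $x$ as a suitable fraction of $\tau$ so that $r^{x-\tau}\,\mathrm{poly}(\tau)$ is dominated by $s^x$. This should work because $|\beta|$ decays like $r^{-\tau}$ while the negative contribution grows like $s^x$.
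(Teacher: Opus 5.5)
Your proof is correct and follows essentially the same route as the paper. In both, \cref{lem:AG_necessary} forces $g(y)=\left(-\frac{1-p}{p}\right)^{y-1}\beta$, and the odd-alphabet naive-RLC value gives $2^{\lambda}>\frac{1-p}{p}$ by a fixed margin. Positivity of $P_X$ at an even point is then played against $P_Y\le 1$ near $y=\tau$.

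Two remarks.

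\emph{Step 2 has its case labels swapped.} $P_Y(y)\le 1$ constrains $g(y)$ only when $g(y)<0$. So if $g(\tau+1)<0$ you must use $P_Y(\tau+1)\le 1$. If $g(\tau+1)>0$, then $g(\tau)<0$ and you use $P_Y(\tau)\le 1$. Either way you get $|\beta|\le(\lambda(\tau+1)+h(p))\left(\frac{p}{1-p}\right)^{\tau-1}$, which is the bound you want.

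\emph{Your ``main obstacle'' disappears if you evaluate at $x=2$, as the paper does.} There $P_X(2)>0$ is exactly $2^{\beta/p-\lambda}>\frac{p}{1-p}$, i.e.\ $\beta>p\left(\lambda-\log\frac{1-p}{p}\right)$. By your Step 3 this is a positive constant independent of $\tau$, so it contradicts the exponential bound on $|\beta|$ with no perturbative estimate. The paper also uses the resulting sign $\beta>0$ together with the parity of $\tau$, so only $P_Y(\tau)\le 1$ is needed.

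Your choice $x\approx\tau/2$ also works. For $s>1$ the alternating sum has no cancellation: its absolute sum is within a factor $\frac{1+s}{s-1}$ of its value. Meanwhile, with $r=\frac{1-p}{p}$, the relative perturbation of each $P_Y(y)$ with $y\le x$ is $O(\tau r^{x-\tau})\to 0$.
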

\begin{proof}
    Fix $p \in \left( \frac{1}{\varphi^2}, \frac{1}{2} \right)$. For each $\tau$, we define $\eps_{\tau}$ as the quantity such that the interval of valid values of $p$ given by \Cref{thm:finite_alphabet_bernoulli_left} is $\left(0,\frac{1}{\varphi^2} + \eps_{\tau}\right]$, and we denote $\lambda_{\tau} = C(p,\tau)$.
    
    Let $\tau'$ be the smallest odd natural number such that $p > \frac{1}{\varphi^2} + \eps_{\tau'}$. Suppose that $X$ has full support and is capacity-achieving for even-sized input alphabet $[\tau]$ (we show that this leads to a contradiction if $\tau$ is sufficiently large). 
    By \Cref{lem:AG_necessary}, the corresponding output distribution $P_Y$ must satisfy
    \begin{equation*}
        \KL(P_{Y_x}\|P_Y) = \lambda_{\tau} x
    \end{equation*}
    for every $x \in [\tau]$.
    By the same reasoning as that of \Cref{sec:genbound}, writing
    \begin{equation*}
        P_Y(y) = 2^{-\lambda_{\tau} (y-p) - h(p) - g(y)}
    \end{equation*}
    for $y \in [\tau+1]$, it must be the case that
    \begin{equation*}
        g(y) = \left( -\frac{1-p}{p} \right)^{y-1} \beta
    \end{equation*}
    for all $y \in [\tau + 1]$, where $\beta \in \R$ is a constant.

    Since $p > p_{\tau'}$, we have that $p$ belongs to the interval defined in \Cref{thm:finite_alphabet_bernoulli_right} when the input alphabet is $[\tau']$ (i.e., naive RLC coding is optimal). By \Cref{eq:lambda_p_relation_naiveRLC}, for $\tau > \tau'$ it holds that
    \begin{equation} \label{eq:lambda_LWB}
    \lambda_{\tau} > \lambda_{\tau'} > \log \left( \frac{1-p}{p} \right).
    \end{equation}
    We have seen in \Cref{eq:X_from_Y} that, if $P_Y$ is the output distribution induced by $P_X$, then
    \begin{equation*}
        P_X(x) = \frac{1}{1-p} \sum_{k=0}^{x-1} P_Y(n-x) \left( -\frac{p}{1-p} \right)^k.
    \end{equation*}
    Since $P_X$ is a probability distribution with full support, $P_X(2) > 0$. By the above characterization of $P_X$, this is equivalent to the condition
    \begin{align*}
    2^{\frac{\beta}{p} - \lambda_{\tau}} > \frac{p}{1-p}.
    \end{align*}
    In particular, $\beta$ must satisfy
    \begin{equation} \label{eq:beta_condition}
        \beta > p \cdot \left( \lambda_{\tau} - \log \left( \frac{1-p}{p} \right) \right) > 0.
    \end{equation}
    Defining $\delta_p := p \cdot \left( \lambda_{\tau'} - \log \left( \frac{1-p}{p} \right) \right)$, by \Cref{eq:lambda_LWB,eq:beta_condition} we also have that $\beta > \delta_p > 0$, where $\delta_p$ is a constant not depending on $\tau$. On the other hand, for even $\tau$, a necessary condition to ensure that $P_Y(\tau) \leq 1$ is that
    \begin{equation*}
        \beta \leq (\lambda_{\tau}(\tau-p) + h(p)) \cdot \left( \frac{p}{1-p} \right)^{\tau-1}.
    \end{equation*}
    Clearly, if $p < \frac{1}{2}$ then the right-hand side of the equation above is at most $\delta_p$ for $\tau$ large enough. This implies $\beta \leq \delta_p$, a contradiction. 
\end{proof}

\section{General additive-noise sticky channels}

In this section we explore additive-noise sticky channels beyond Bernoulli noise.

\subsection{A universal lower bound on the capacity of any additive-noise sticky channel with prescribed noise mean}\label{sec:universal-LB}

In this section we prove \cref{thm:lower_bound_capacity_noise_infinite_support}, which we restate here to aid the reader.

\lbmean*

\begin{proof}
We begin by showing that $\Capa(\Geom_\mu)=\alpha$, with $\alpha$ the smallest positive solution to \cref{equation_tau_star}.
By \Cref{lem:ANSC-to-ANMC}, $\Capa(\Geom_\mu)$ equals the capacity per unit cost of the memoryless additive-noise channel with noise distribution $\Geom_\mu$.
We will use \cref{lem:CPUC-UB-framework} to determine this capacity per unit cost.

If $Z$ is distributed according to $\Geom_\mu$, then $P_Z(z)=\left(\frac{\mu}{1+\mu}\right)^z\cdot\frac{1}{1+\mu}$ for all $z\in\N$.
With hindsight, consider the probability distribution $Q^\star$ given by
\begin{equation}
\label{capacity_per_unit_cost_achieving_output_geometric}
    Q^\star(y) = 2^{-\alpha\left(y-\mu\right)-(1+\mu)h\left(\frac{1}{1+\mu}\right)}
\end{equation}
for all $y\geq 1$, with $\alpha>0$ the smallest positive solution to the equation
\begin{equation}
\label{normalization_output_geometric_noise}
    \sum_{y=1}^\infty Q^\star(y) = \frac{2^{\alpha\mu-(1+\mu)h\left(\frac{1}{1+\mu}\right)}}{2^\alpha-1} = 1.
\end{equation}
In this case, on input $x\geq 1$ the output $Y_x$ is distributed according to $x+\Geom_\mu$.
Since $H(Y_x)=H(\Geom_\mu)=(1+\mu)h\left(\frac{1}{1+\mu}\right)$ for all $x$, we get that
\begin{equation*}
    D(P_{Y_x}\|Q^\star)= -\sum_{y=0}^\infty P_{Y_x}(y)\log Q^\star(y)  - H(Y_x)=  \alpha x,\quad \forall x\geq 1.
\end{equation*}
Therefore, if we show that $Q^\star$ is realizable as a channel output distribution, then the second part of \cref{lem:CPUC-UB-framework} implies that $\Capa(\Geom_\mu)=\alpha$, as desired.

Similarly to the proof of \cref{thm:cap-below-half}, if indeed $Q^\star$ is realizable as the channel output distribution corresponding to some input distribution $P_X$, then the probability generating function $G_X$ of $P_X$ would satisfy
\begin{equation*}
    G_X(w) = \frac{G_Y(w)}{G_Z(w)},
\end{equation*}
where $G_Y$ and $G_Z$ are the probability generating functions of the output $Y\sim Q^\star$ and noise $Z\sim \Geom_\mu$, respectively.
It can be easily checked that $G_Y(w) = 2^{\alpha\mu - (1+\mu)h\left(\frac{1}{1+\mu}\right)} \cdot \frac{w}{2^\alpha-w}$ and $G_Z(w) = \frac{1}{1+\mu-\mu w}$ in a disk around $w=0$, and so
\begin{equation*}
    G_X(w) =  2^{\alpha\mu - (1+\mu)h\left(\frac{1}{1+\mu}\right)} \cdot \frac{w(1+\mu-\mu w)}{2^\alpha-w}
\end{equation*}
in a disk around $w=0$.
To show that $P_X$ is a valid probability distribution it suffices to check that the coefficients of the Taylor expansion of $\frac{w(1+\mu-\mu w)}{2^\alpha-w}$ around $w=0$, which we denote by $\beta_n$, are non-negative (clearly $\beta_0=0$).

It is easy to check that $\frac{1}{2^\alpha-w} = \sum_{i=0}^\infty 2^{-\alpha(i+1)} w^i$ around $w=0$, and so
\begin{equation*}
    \beta_n=\begin{cases}
        0, &\textrm{if $n=0$,}\\
        (1+\mu) 2^{-\alpha}, &\textrm{if $n=1$,}\\
        (1+\mu)(2^{-\alpha n} -\frac{\mu}{1+\mu} \cdot 2^{-\alpha(n-1)}), &\textrm{if $n>1$.}
    \end{cases}
\end{equation*}
The order-$1$ coefficient is clearly non-negative.
For $n>1$, the order-$n$ coefficient is non-negative if and only if
\begin{equation}\label{eq:bound-alpha-geom}
    \alpha \leq \log\left(\frac{1+\mu}{\mu}\right).
\end{equation}
To verify \cref{eq:bound-alpha-geom}, we first note that, for fixed $\mu>0$, the map $x\mapsto \frac{2^{x\mu - (1+\mu)h\left(\frac{1}{1+\mu}\right)}}{2^x-1}$ is continuous for $x>0$, converges to $+\infty$ as $x\to 0$ from above, is decreasing for $x\in(0,\infty)$ when $\mu\in(0,1]$, and is decreasing for $x\in \left(0,\log\left(\frac{\mu}{\mu-1}\right)\right)$ when $\mu>1$.
Then, setting $x=\log\left(\frac{1+\mu}{\mu}\right)>0$, which is smaller than $\log\left(\frac{\mu}{\mu-1}\right)$ when $\mu>1$, yields
\begin{equation*}
    \frac{2^{x\mu-(1+\mu)h\left(\frac{1}{1+\mu}\right)}}{2^x-1} = \frac{\mu}{1+\mu}<1,
\end{equation*}
and so we conclude that $\alpha\leq \log\left(\frac{1}{1-\delta}\right)$, since $\alpha$ is the smallest positive solution to \cref{normalization_output_geometric_noise}.
As a result, we get that $\Capa(\Geom_\mu)=\alpha$, as desired.

It remains to show that
\begin{equation*}
    \Capa(\cN)\geq \Capa(\Geom_\mu)
\end{equation*}
for any noise distribution $\cN$ with mean $\mu$.
By \Cref{lem:ANSC-to-ANMC}, $\Capa(\cN)$ equals the capacity per unit cost of the memoryless additive-noise channel with noise distribution $\cN$, which we denote by $\CPUC(\cN)$.
Therefore, we can focus on comparing the capacity per unit cost of a memoryless additive-noise channel with noise distribution $\cN$ with that of a memoryless additive-noise channel with noise distribution $\Geom_\mu$.

By the discussion above, we know that the channel output distribution that achieves the capacity per unit cost $\CPUC(\Geom_\mu)$ is $Q^\star$ defined in \cref{capacity_per_unit_cost_achieving_output_geometric}.
Denote its corresponding channel input distribution (with respect to the additive-noise channel with noise distribution $\Geom_\mu$) by $P^\star$.
We will now lower bound $\CPUC(\cN)$ by the rate achieved by the input distribution $P^\star$ to get the desired result.

Below, we use $P_{Y|X}$ to denote the transition probability rule for the additive-noise channel with noise distribution $\cN$ and $Q_{Y|X}$ to denote the transition probability rule for the additive-noise channel with noise distribution $\Geom_\mu$.
More precisely, $P_{Y|X}(x,y)$ is the probability of observing channel output $y$ on input $x$, and likewise for $Q_{Y|X}$.
To simplify exposition, we will also use shorthand such as $\frac{P_{Y|X}}{P_Y}(x,y)$ to mean $\frac{P_{Y|X}(x,y)}{P_Y(y)}$.
The mapping of arguments to the various functions will be clear from context.
For $X\sim P^\star$ and $Y$ its corresponding channel output random variable, we have
\begin{align}
    \CPUC(\cN) &\geq \frac{I(X;Y)}{\E[X]} \label{eq:def-CPUC}\\
    &=\E_{P_{X,Y}}\left[\log\left(\frac{P_{Y|X}}{P_Y}(X,Y)\right)\right]\cdot \frac{1}{\E[X]} \label{eq:def-MI}\\
    &= \E_{P_{X,Y}}\left[\log\left(\frac{P_{Y|X}Q^\star}{Q_{Y|X}P_Y}(X,Y)\right)+ \log\left(\frac{Q_{Y|X}}{Q^\star}(X,Y)\right)\right] \cdot \frac{1}{\E[X]} \label{eq:alg1}\\
    &= \E_{P_{X,Y}}\left[\log\left(\frac{P_{Y|X}Q^\star P^\star}{Q_{Y|X} P_Y P^\star}(X,Y)\right)+ \log\left(\frac{Q_{Y|X}}{Q^\star}(X,Y)\right)\right]\cdot \frac{1}{\E[X]} \label{eq:alg2}\\
    &=\E_{P_{X,Y}}\left[\log\left(\frac{P_{X|Y}}{Q_{X|Y}}(X,Y)\right)+ \log\left(\frac{Q_{Y|X}}{Q^\star}(X,Y)\right)\right]\cdot \frac{1}{\E[X]}\label{eq:bayes}\\
    &=\frac{\E_{Y\sim P_Y}[\KL(P_{X|Y}(\cdot|Y)\|Q_{X|Y}(\cdot|Y))]}{\E[X]}+\E_{P_{X,Y}}\left[\log\left(\frac{Q_{Y|X}}{Q^\star}(X,Y)\right)\right]\cdot \frac{1}{\E[X]}\label{eq:def-KL-well-def}\\
    &\geq \E_{P_{X,Y}}\left[\log\left(\frac{Q_{Y|X}}{Q^\star}(X,Y)\right)\right]\cdot \frac{1}{\E[X]}\label{eq:KL-non-neg}\\
    &= \E_{P_{X,Y}}\left[\log\left(\frac{\left(\frac{\mu}{1+\mu}\right)^{Y-X} \frac{1}{1+\mu}}{2^{-\alpha\left(Y-\mu\right)-(1+\mu)h\left(\frac{1}{1+\mu}\right)}}\right)\right]\cdot \frac{1}{\E[X]}\label{eq:def-Q}\\
    &=\E_{P_{X,Y}}\left[\log\left(\frac{1}{1+\mu}\right)+(Y-X)\log\left(\frac{\mu}{1+\mu}\right)+\alpha \left(Y-\mu\right)+(1+\mu)h\left(\frac{1}{1+\mu}\right)\right]\cdot \frac{1}{\E[X]}\nonumber\\
    &= \frac{\log\left(\frac{1}{1+\mu}\right)+\mu\log\left(\frac{\mu}{1+\mu}\right)+\alpha\E[X]+(1+\mu)h\left(\frac{1}{1+\mu}\right)}{\E[X]}\label{eq:mean-assumption}\\
    &=\alpha\nonumber\\
    &=\CPUC(\Geom_\mu).\label{eq:capgeom}
\end{align}
\cref{eq:def-CPUC,eq:def-MI} follow by the definition of capacity per unit cost and mutual information. \cref{eq:alg1,eq:alg2} follow by algebraic manipulation. \cref{eq:bayes} follows by Bayes' theorem. \cref{eq:def-KL-well-def} follows by the definition of the Kullback-Leibler divergence (and this step is well defined because $Q_{X|Y}$ is fully supported on $\N$). \cref{eq:KL-non-neg} follows by the non-negativity of the Kullback-Leibler divergence. \cref{eq:def-Q} follows by the definition of $Q_{Y|X}$ and $Q^\star$. \cref{eq:mean-assumption} follows from the fact that for $Z\sim \cN$ we have $\E_{P_Y}[Y]=\E[X+Z]=\E[X]+\mu$ (and so, in particular, $\E_{P_{X,Y}}[Y-X]=\mu$). \cref{eq:capgeom} holds by the first part of this proof, which shows that $\CPUC(\Geom_\mu)=\alpha$.
\end{proof}

\subsection{Improved capacity lower bounds for bounded-support noise with prescribed mean}
\label{section_lower_bound_capacity_bounded_noise}

In this section we prove \cref{theorem_lower_bound_capacity_noise_finite_support}.
The proof is divided into two parts: $\mu\geq m/2$ (\cref{thm:RLC-never-optimal} in \cref{sec:RLC-opt-gen}) and $\mu<m/2$ (\cref{thm:bounded-supp-small-mu} in \cref{sec:cap-geom-bounded}).

\subsubsection{A necessary and sufficient criterion for the optimality of naive RLC coding, and applications}\label{sec:RLC-opt-gen}

The first item of \cref{theorem_lower_bound_capacity_noise_finite_support} is a consequence of the following necessary and sufficient criterion for the optimality of naive RLC coding for bounded-support noise distributions $\cN$.

\begin{lemma} \label{lem:naiveRLCoptimality}
    Consider an arbitrary noise distribution $\cN=P_Z$ such that $\supp(\cN) = \{0,1,\dots,m\}$ and define $p_i=\cN(i)$.
    For each integer $x\geq 1$, define $\hat{x}=(x-1)\Mod(m+1)$.
    For $\varphi$ the unique positive solution to $x^{m+1}-x^m-1=0$, consider the input distribution $P_X$ given by
    \begin{equation*}
        P_X(x) := \begin{cases}
            \varphi^{-x}, & \text{if $x \equiv 1\pmod{m+1}$,} \\
            0, & \text{otherwise.}
        \end{cases}
    \end{equation*}
    Then, $\CPUC(\cN)=\log \varphi$ and $P_X$ achieves this capacity per unit cost if and only if
    \begin{equation*}
        \KL \left( (p_0,\dots,p_{m}) \| (p_{\hat{x}}, \dots,p_{m}, p_{0}, \dots, p_{\hat{x}-1})  \right) \leq \log\varphi \cdot \left(\hat{x} - (m+1) \cdot \Pr[Z \geq m+1 - \hat{x}]\right)
    \end{equation*}
    for every $x \geq 1$ such that $x \not\equiv 1 \pmod{m+1}$.

    Furthermore, if $m = \max \supp(\cN)$ and $\supp(\cN)\neq \{0,1,\dots,m\}$, then $P_X$ does not achieve $\CPUC(\cN)$.
\end{lemma}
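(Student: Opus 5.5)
We follow the template of the proof of \cref{thm:opt-rlc}: compute the output distribution $P_Y$ induced by $P_X$, evaluate $\KL(P_{Y_x}\|P_Y)$ for every $x\geq 1$, and then apply \cref{lem:CPUC-UB-framework} for sufficiency and \cref{lem:AG_necessary} for necessity. The latter applies because $H(\cN)<\infty$ for finitely supported noise. First we check that $P_X$ is a probability distribution. Indeed,
\begin{equation*}
\sum_{k\geq 0}\varphi^{-(1+k(m+1))}=\frac{\varphi^{-1}}{1-\varphi^{-(m+1)}}=1,
\end{equation*}
using $\varphi^{m+1}-\varphi^m=1$. Since the support of $P_X$ consists of integers spaced $m+1$ apart, and the noise lies in $\{0,\dots,m\}$, each output $y$ has a unique preimage $x_y=y-\hat y$, where $\hat y=(y-1)\bmod(m+1)$. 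Hence $P_Y(y)=p_{\hat y}\,\varphi^{-(y-\hat y)}$, and $H(X|Y)=0$. It follows that the rate is $H(X)/\E[X]=\log\varphi$, which gives the lower bound $\CPUC(\cN)\geq\log\varphi$.

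Next we compute $\KL(P_{Y_x}\|P_Y)=\sum_{i=0}^m p_i\log\frac{p_i}{P_Y(x+i)}$. For $x\equiv1$ we have $P_Y(x+i)=p_i\varphi^{-x}$, so the divergence equals $x\log\varphi$ exactly. For general $x$ with $\hat x\neq 0$, we track which block each output $x+i$ lands in. The residue of $x+i-1$ is $\hat x+i$ when $i<m+1-\hat x$, and $\hat x+i-(m+1)$ otherwise. The corresponding preimage is $x-\hat x$ in the first case and $x-\hat x+m+1$ in the second. Therefore
\begin{equation*}
-\log P_Y(x+i)=-\log p_{(\hat x+i)\bmod(m+1)}+\log\varphi\cdot\bigl(x-\hat x+(m+1)\mathbf{1}[i\geq m+1-\hat x]\bigr).
\end{equation*}
Summing against $p_i$ gives
\begin{equation*}
\KL(P_{Y_x}\|P_Y)=\KL\bigl((p_0,\dots,p_m)\,\|\,(p_{\hat x},\dots,p_m,p_0,\dots,p_{\hat x-1})\bigr)+\log\varphi\cdot\bigl(x-\hat x+(m+1)\Pr[Z\geq m+1-\hat x]\bigr).
\end{equation*}
The condition $\KL(P_{Y_x}\|P_Y)\leq x\log\varphi$ is then exactly the displayed inequality in the statement. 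Sufficiency now follows from the second part of \cref{lem:CPUC-UB-framework}, because equality holds on the support of $P_X$. For necessity, if $P_X$ achieves $\CPUC(\cN)=\log\varphi$, then \cref{lem:AG_necessary} forces $\KL(P_{Y_x}\|P_Y)\leq x\log\varphi$ for all $x$, which is the same inequality.

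For the final claim, suppose $m=\max\supp(\cN)$ but some $j<m$ has $p_j=0$; note that $j\geq 1$ is possible, and $j=0$ is also allowed. We look for an $x$ such that $\KL(P_{Y_x}\|P_Y)=\infty$. This happens if there is an $i$ with $p_i>0$ and $p_{(\hat x+i)\bmod(m+1)}=0$. Choose $i=m$, which has $p_m>0$, and choose $\hat x\equiv j-m\pmod{m+1}$, that is, $\hat x=j+1\in\{1,\dots,m\}$. Then $\hat x\neq0$, so the input $x=\hat x+1$ lies outside the support of $P_X$. The output $x+m$ has $P_Y(x+m)=p_j\varphi^{-(\cdot)}=0$ while $P_{Y_x}(x+m)=p_m>0$, so the divergence is infinite. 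By \cref{lem:AG_necessary}, $P_X$ therefore cannot be capacity-achieving.

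The main obstacle is the index bookkeeping in the divergence computation: getting the wrap-around correct and showing that the cyclic-shift KL term plus the tail probability $\Pr[Z\geq m+1-\hat x]$ appear exactly as stated. Everything else is a direct application of the two framework lemmas.
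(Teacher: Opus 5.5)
Your proposal is correct and follows the paper's proof essentially step for step: the same output distribution $P_Y(y)=p_{\hat y}\varphi^{-(y-\hat y)}$, the same split of $\KL(P_{Y_x}\|P_Y)$ into the cyclic-shift divergence plus $\log\varphi\cdot(x-\hat x+(m+1)\Pr[Z\geq m+1-\hat x])$, and the same witness for the final claim (input $j+2$, output $j+m+2$, matching the paper's $k+2$ and $m+k+2$). Your explicit use of \cref{lem:CPUC-UB-framework} for sufficiency and \cref{lem:AG_necessary} for necessity is slightly cleaner than the paper, which attributes both directions of the equivalence to \cref{lem:AG_necessary}.
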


\begin{proof}
    First, note that the output distribution $P_Y$ corresponding to input $P_X$ is given by
    \begin{equation*}
    P_Y(y) = P_X\left(1+\left\lfloor\frac{y-1}{m+1}\right\rfloor\cdot(m+1)\right) \cdot p_{\hat{y}}.
    \end{equation*}
    Note that $H(P_{Y_x}) < \infty$ for all $x$, as $P_{Y_x}$ has finite support. Moreover, since $P_Y$ has full support, $\KL(P_{Y_x}\|P_Y) < \infty$ for all $x$. Thus, by \Cref{lem:AG_necessary}, $P_X$ is capacity-achieving if and only if there exists $\lambda \geq 0$ such that
    \begin{equation*}
        \KL(P_{Y_x}\|P_Y) \leq \lambda x
    \end{equation*}
    for every $x \in \N_{>0}$, with equality if $x\in\supp(P_X)$ (i.e., if $x \equiv 1 \pmod{m+1}$).
    When $x\in\supp(P_X)$, we have
    \begin{align*}
        \KL(P_{Y_x}\|P_Y) &= \sum_{i=0}^{m} p_i \log \left( \frac{p_i}{P_Y(x+i)} \right) \\
        &= \sum_{i=0}^{m} p_i \log \left( \frac{p_i}{p_i \cdot P_X(x)} \right) \\
        &= x \log \varphi.
    \end{align*}
    Let $\tilde{x} = m+1 - \hat{x}$ and $f(x) = x - \hat{x}$. If $x \not\in\supp(P_X)$ (i.e., $x\not\equiv 1 \pmod{m+1}$), we have
    \begin{align*}
        \KL(P_{Y_x}\|P_Y) &= \sum_{i=0}^{\tilde{x}-1} p_i \log \left( \frac{p_i}{p_{i+\hat{x}} \cdot P_X(f(x))} \right) + \sum_{i=\tilde{x}}^{m} p_i \log \left( \frac{p_i}{p_{i-\tilde{x}} \cdot P_X(f(x)+m+1)} \right) \\
        &= \sum_{i=0}^{\tilde{x}-1} p_i \log \left( \frac{p_i}{p_{i+\hat{x}}} \right) + \sum_{i=\tilde{x}}^{m} p_i \log \left( \frac{p_i}{p_{i-\tilde{x}}} \right) + \sum_{i=0}^{\tilde{x}-1} p_i f(x) \log \varphi + \sum_{i=\tilde{x}}^{m} p_i \log \varphi (f(x)+m+1) \\
        &= \KL \left( (p_0,\dots,p_{m}) \| (p_{\hat{x}}, \dots,p_{m}, p_{0}, \dots, p_{\hat{x}-1})  \right) + \log \varphi \cdot \left( f(x)  + (m+1) \cdot \Pr[Z \geq \tilde{x}] \right).
    \end{align*}
    Replacing $f(x) = x - \hat{x}$ in the above, the inequality $\KL(P_{Y_x}\|P_Y) \leq \log \varphi \cdot x$ holds if and only if
    \begin{equation*}
        \KL \left( (p_0,\dots,p_{m}) \| (p_{\hat{x}}, \dots,p_{m}, p_{0}, \dots, p_{\hat{x}-1})  \right) \leq \log \varphi \cdot \left( \hat{x} - (m+1) \cdot \Pr[Z \geq \tilde{x}] \right),
    \end{equation*}
    as desired.

    We now show the last part of the lemma statement, i.e., that if $\supp(\cN)\neq \{0,..,m\}$, then $P_X$ does not achieve $\CPUC(\cN)$. 
    Indeed, fix some $k \in \{0,...,m-1\} \setminus \supp(\cN)$ (this set is non-empty because we assume that $m=\max\supp(\cN)$). Then, $m+k+2 \in \supp(P_{Y_{k+2}}) \setminus \supp(P_Y)$, with $P_Y$ the output distribution corresponding to input distribution $P_X$. Thus, we get that $\KL(P_{Y_{k+2}}\|P_Y) = \infty$, and by \Cref{lem:AG_necessary} we conclude that $P_X$ is not optimal.
\end{proof}

We are now ready to prove the first part of \cref{theorem_lower_bound_capacity_noise_finite_support}.
\begin{theorem}\label{thm:RLC-never-optimal}
    Fix an integer $m>0$ and a real number $\mu\geq m/2$. 
    Consider an arbitrary noise distribution $\cN$ of mean $\mu$ and such that $\max\supp(\cN)=m$.
    Then, $\Capa(\cN)>\log\varphi$ unless $\cN$ is uniform over $\{0,1,\dots,m\}$. That is, in this regime naive RLC coding is optimal if and only if $\cN$ is uniform.
\end{theorem}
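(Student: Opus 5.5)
The plan is to reduce the statement to the criterion of \cref{lem:naiveRLCoptimality}, and then to sum the $m$ inequalities of that criterion so that the noise mean $\mu$ appears.

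First, the reduction. The input distribution $P_X$ of \cref{lem:naiveRLCoptimality} (mass $\varphi^{-x}$ on $x\equiv 1 \pmod{m+1}$) is zero-error, since distinct inputs in its support produce disjoint output ranges. It therefore achieves rate $H(X)/\E[X]=\log\varphi$, which gives $\Capa(\cN)\geq\log\varphi$ always. Consequently $\Capa(\cN)=\log\varphi$ holds if and only if this particular $P_X$ is capacity-achieving. If $\supp(\cN)\neq\{0,\dots,m\}$, the last part of \cref{lem:naiveRLCoptimality} already says $P_X$ is not optimal, so $\Capa(\cN)>\log\varphi$. We may therefore assume full support $\{0,\dots,m\}$ and write $p_i=\cN(i)$.

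In the full-support case, \cref{lem:naiveRLCoptimality} says $P_X$ is optimal if and only if, for every residue $\hat{x}\in\{1,\dots,m\}$,
\begin{equation*}
D_{\hat{x}}:=\KL\left((p_0,\dots,p_m)\,\|\,(p_{\hat{x}},\dots,p_m,p_0,\dots,p_{\hat{x}-1})\right)\leq \log\varphi\cdot\left(\hat{x}-(m+1)\Pr[Z\geq m+1-\hat{x}]\right).
\end{equation*}
Every $x\not\equiv 1\pmod{m+1}$ gives one of these residues. The key step is to sum these inequalities over $\hat{x}=1,\dots,m$. On the right-hand side, substituting $j=m+1-\hat{x}$ and using $\sum_{j=1}^{m}\Pr[Z\geq j]=\E[Z]=\mu$ (valid because $Z\leq m$), we get
\begin{equation*}
\log\varphi\left(\frac{m(m+1)}{2}-(m+1)\mu\right)=(m+1)\log\varphi\left(\frac{m}{2}-\mu\right),
\end{equation*}
which is $\leq 0$ when $\mu\geq m/2$. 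On the left-hand side, $\sum_{\hat{x}} D_{\hat{x}}\geq 0$, with equality only if every cyclic shift of $(p_0,\dots,p_m)$ equals the vector itself. In particular $p_i=p_{i+1}$ for all $i$, i.e., $\cN$ is uniform.

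Hence, if $\mu\geq m/2$ and $\cN$ is not uniform, the sum of the left-hand sides is strictly positive while the sum of the right-hand sides is non-positive. So at least one inequality fails, $P_X$ is not optimal, and $\Capa(\cN)>\log\varphi$.

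Conversely, if $\cN$ is uniform then $\mu=m/2$ and every $D_{\hat{x}}=0$. Also $\Pr[Z\geq m+1-\hat{x}]=\hat{x}/(m+1)$, so each right-hand side equals $0$ and every condition holds with equality. Thus naive RLC coding is optimal and $\Capa(\cN)=\log\varphi$. I do not expect a serious obstacle. The only care needed is the index bookkeeping in the telescoping identity $\sum_{\hat{x}}\Pr[Z\geq m+1-\hat{x}]=\mu$, and checking that the equality case of the summed KL forces exact uniformity; the latter holds because the shift by $\hat{x}=1$ alone already forces all $p_i$ to be equal.
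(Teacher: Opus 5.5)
Your proposal is correct and follows essentially the same route as the paper. You reduce to \cref{lem:naiveRLCoptimality}, dispose of the non-full-support case with its last part, and check the uniform case directly. You then sum the $m$ inequalities over $\hat{x}$, so that $\sum_{j=1}^{m}\Pr[Z\geq j]=\mu$ appears and contradicts $\mu\geq m/2$.

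Your handling of strictness is slightly cleaner than the paper's. The paper asserts that every shifted divergence is positive for non-uniform $\cN$, which fails for periodic $\cN$ when $\gcd(\hat{x},m+1)>1$. You only need $D_1>0$ to make the summed left-hand side strictly positive.
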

\begin{proof}
    We apply \cref{lem:naiveRLCoptimality}.
    By that lemma, we may assume that $\supp(\cN)=\{0,1,\dots,m\}$, as otherwise naive RLC coding is not optimal.

    If $\cN$ is uniform over $\{0,1,\dots,m\}$, then we have
    \begin{equation*}
        \KL \left( (p_0,\dots,p_{m}) \| (p_{\hat{x}}, \dots,p_{m}, p_{0}, \dots, p_{\hat{x}-1})  \right) = 0,
    \end{equation*}
    for all $x \geq 1$, where $\hat{x}=(x-1)\pmod{m+1}$ is defined as in \cref{lem:naiveRLCoptimality}.
    Thus, the optimality of naive RLC coding follows from that lemma by noting that $\hat{x} - (m+1) \cdot \Pr[Z \geq m + 1 - \hat{x}] = 0$ for all $x \geq 1$, where $Z\sim\cN$.

    Now suppose that $\cN$ is not uniform but $\Capa(\cN) = \log \varphi$. Then, we know that
    \begin{equation*}
        \KL \left( (p_0,\dots,p_{m}) \| (p_{\hat{x}}, \dots,p_{m}, p_{0}, \dots, p_{\hat{x}-1})  \right)>0
    \end{equation*}
    for all $x \not\equiv 1 \pmod{m+1}$.
    Once more invoking \Cref{lem:naiveRLCoptimality}, it holds that
    \begin{equation*}
        \Pr[Z \geq m+1-\hat{x}]< \frac{\hat{x}}{m+1}
    \end{equation*}
    for all $\hat{x}\in\{1,\ldots,m\}$.
    In turn, this implies that
    \begin{equation*}
        \E[Z]=\sum_{x=0}^{m-1} \Pr\left[Z>x\right] =\sum_{x=1}^m \Pr\left[Z\geq m+1-x\right]< \sum_{x=1}^m \frac{\hat{x}}{m+1} =\frac{m}{2}. \qedhere
    \end{equation*}
\end{proof}

\subsubsection{Universal capacity lower bounds for bounded-support noise with mean $\mu<m/2$}\label{sec:cap-geom-bounded}

To complement the previous section, we now focus on noise distributions $\cN$ with $\max\supp(\cN)=m$ and mean $\mu<m/2$, and prove the second and third parts of \cref{theorem_lower_bound_capacity_noise_finite_support}.
In this setting, we establish a capacity lower bound for all such distributions that cannot be improved in general (it is achieved by a ``truncated geometric'' noise distribution).

More precisely, for given $m$ and $\mu<m/2$ we define the \emph{truncated geometric distribution} $\Geom_{\mu,m}$ defined as follows\footnote{Intuitively, the $\Geom_{\mu,m}$ distribution is obtained by truncating (and renormalizing) a geometric distribution to $\{0,1,\dots,m\}$, and setting the $\delta$ parameter appropriately so that the resulting truncated distribution has mean $\mu$.}: for $Z\sim \Geom_{\mu,m}$ we have $P_Z(z)=\frac{1-\delta}{1-\delta^{m+1}}\cdot \delta^z$ for $z\in\{0,1,\dots,m\}$ and $P_Z(z)=0$ otherwise, where $\delta$ is the unique solution in $(0,1)$ to
\begin{equation*}
    \mu=\frac{x}{1-x}-\frac{(m+1)x^{m+1}}{1-x^{m+1}}.
\end{equation*}
We begin by determining $\Capa(\Geom_{\mu,m})$.
We divide the analysis into two cases depending on the value of $\delta$ as a function of $\mu$ and $m$.

\begin{theorem}\label{thm:bounded-supp-RLC}
    Fix an integer $m>0$ and a real number $\mu\in(0,m/2)$.
    Recall that $\varphi$ denote the unique positive solution to $x^{m+1}-x^m-1=0$.
    If $\delta\geq 1/\varphi$, then
    \begin{equation*}
        \Capa(\Geom_{\mu,m})=\log\varphi.
    \end{equation*}
\end{theorem}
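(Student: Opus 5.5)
The plan is to apply the criterion of \cref{lem:naiveRLCoptimality} to $\cN=\Geom_{\mu,m}$. Its support is all of $\{0,1,\dots,m\}$, and $p_i=c\,\delta^i$ with $c=\frac{1-\delta}{1-\delta^{m+1}}$. The lemma says that $\Capa(\Geom_{\mu,m})=\log\varphi$, achieved by naive RLC coding, if and only if for every shift $k:=\hat x\in\{1,\dots,m\}$ we have
\begin{equation*}
\KL\left((p_0,\dots,p_m)\,\|\,(p_k,\dots,p_m,p_0,\dots,p_{k-1})\right)\le \log\varphi\cdot\left(k-(m+1)\Pr[Z\ge m+1-k]\right).
\end{equation*}
So the proof reduces to checking these $m$ inequalities whenever $\delta\ge 1/\varphi$.

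First compute the left-hand side in closed form. For $i\le m-k$ the ratio $p_i/p_{i+k}$ equals $\delta^{-k}$. For $i\ge m+1-k$ it equals $p_i/p_{i-(m+1-k)}=\delta^{m+1-k}$. Hence the divergence is
\begin{equation*}
-k\log\delta\cdot\Pr[Z\le m-k]+(m+1-k)\log\delta\cdot\Pr[Z\ge m+1-k]=\log(1/\delta)\cdot\left(k-(m+1)\Pr[Z\ge m+1-k]\right).
\end{equation*}
Both sides of the criterion therefore carry the common factor $A_k:=k-(m+1)\Pr[Z\ge m+1-k]$, and the inequality reads $\log(1/\delta)\,A_k\le\log\varphi\,A_k$.

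Next show $A_k\ge 0$ for all $k\in\{1,\dots,m\}$ when $\delta<1$, i.e.\ $\Pr[Z\ge m+1-k]\le k/(m+1)$. Since $\Pr[Z\ge m+1-k]$ sums the $k$ smallest of the $m+1$ probabilities of a decreasing pmf, its average is at most the overall average $1/(m+1)$, which gives the bound. (If $A_k=0$, both sides vanish and the inequality is trivial.) With $A_k\ge 0$, the criterion holds for every $k$ as soon as $\log(1/\delta)\le\log\varphi$, which is exactly the hypothesis $\delta\ge 1/\varphi$. \cref{lem:naiveRLCoptimality} then gives $\Capa(\Geom_{\mu,m})=\CPUC(\Geom_{\mu,m})=\log\varphi$; the first equality is \cref{lem:ANSC-to-ANMC}.

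The main obstacle is the bookkeeping in the cyclic-shift divergence: one has to confirm that the two index ranges give exactly the coefficients $\Pr[Z\le m-k]$ and $\Pr[Z\ge m+1-k]$, so that the divergence collapses to a multiple of $A_k$. Once that factorization is in place, the remaining step is the elementary monotonicity argument showing $A_k\ge 0$.
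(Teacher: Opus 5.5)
Your proposal is correct and follows essentially the paper's proof. The paper recomputes $\KL(P_{Y_x}\|P_Y)$ directly for the naive RLC input and gets $x\log\varphi-(\log\varphi+\log\delta)A_{\hat x}$; this is your factorization, which you reach through \cref{lem:naiveRLCoptimality}. It then uses the same ``average of the smallest $k$ probabilities'' argument to show $A_{\hat x}\ge 0$, and concludes with \cref{lem:CPUC-UB-framework}.
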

\begin{proof}
    As before, we can focus on $\CPUC(\Geom_{\mu,m})$ by \cref{lem:ANSC-to-ANMC}.
    We will show that naive RLC coding is optimal in the regime of the theorem statement via \cref{lem:CPUC-UB-framework}.
    Consider
    \begin{align}
    \label{rlc_coding_geometric_noise_finite_support}
        P_X(x)&=\begin{cases}
                \varphi^{-x}, &\textrm{if $x\equiv 1 \pmod{m+1}$},\\
                0, &\textrm{otherwise}.
                \end{cases}
    \end{align}
    As in the proof of \cref{lem:naiveRLCoptimality}, we will use the notation $\hat{x}=(x-1)\Mod(m+1)$.
    Notice that $P_X$ chosen as in \cref{rlc_coding_geometric_noise_finite_support} induces the output distribution
    \begin{align*}
        P_Y(y)&= P_Z(\hat{y})\cdot \varphi^{-y + \hat{y}}, \qquad \forall y\geq 1.
    \end{align*}
    Therefore, for an arbitrary $x\geq 1$ and $Z\sim\Geom_{\mu,m}$ we have
    \begin{align}
        \KL(P_{Y_x}\|P_Y)
        &= -\sum_{i=0}^m P_{Z}(i) \log(P_Y(i+x)) - H(Z)\nonumber\\
        &= -\sum_{i=0}^m P_{Z}(i) \Bigg[\log\left(P_Z\left(\widehat{i+x}\right)\right) + \left(-(i+x) + \widehat{i+x}\right)\log(\varphi)\Bigg] - H(Z)\nonumber\\
        &= x\log\varphi +\log\varphi\cdot \sum_{i=0}^m  P_{Z}(i)\left(i - \widehat{i+x}\right)  +\sum_{i=0}^m P_{Z}(i) \log\left(\frac{P_{Z}(i)}{P_Z\left(\widehat{i+x}\right)}\right) \nonumber\\
        &= x\log \varphi +\log\varphi\cdot \sum_{i=0}^m P_{Z}(i) \left(i -\widehat{i+x}\right)  +\sum_{i=0}^m P_{Z}(i) \log\left(\delta^{i-\widehat{i+x}}\right) \nonumber\\
        &= x\log\varphi  + \left(\log\varphi  + \log\delta\right) \sum_{i=0}^m P_{Z}(i)\cdot \left(i - \widehat{i+x}\right).\label{eq:GeomMuM-simplified-KL}
    \end{align}
    In particular, when $x\in\supp(P_X)$ (i.e., $x \equiv 1 \pmod{m+1}$,
    we get $\widehat{i+x}= (i+x-1) \Mod{(m+1)} = i$, and so \cref{eq:GeomMuM-simplified-KL} becomes
    \begin{align}
    \label{last_step_KKT_support}
        \KL(P_{Y_x}\|P_Y)&=x\log \varphi.
    \end{align}
    When $x\not\in\supp(P_X)$, from \cref{eq:GeomMuM-simplified-KL} we get
    \begin{align}
        \KL(&P_{Y_x}\|P_Y)= x\log\varphi + \left(\log\varphi  + \log\delta\right) \sum_{i=0}^{m-\hat{x}} P_{Z}(i) \left(-\hat{x}\right) + \left(\log\varphi + \log\delta\right) \sum_{i=m+1-\hat{x}}^{m} P_{Z}(i) \left(m+1 - \hat{x}\right)\nonumber\\
        &= x\log\varphi + \left(\log\varphi  + \log\delta\right) \left(- \hat{x} + (m+1)\sum_{i=m+1-\hat{x}}^{m} P_{Z}(i) \right)\nonumber\\
        &=x\log\varphi + \left(\log\varphi  + \log\delta\right) \left(- \hat{x}+ (m+1)\frac{\sum_{i=m+1- \hat{x}}^{m} \delta^i}{1-\delta^{m+1}} \right)\nonumber\\
        \label{last_step_KKT_no_support}
        &\leq x\log\varphi,
    \end{align}
    where \cref{last_step_KKT_no_support} holds because, on the one hand, our assumption that $\delta\geq 1/\varphi$ implies that $\log\varphi+\log\delta\geq 0$ and, on the other hand, 
    \begin{align*}
        -\hat{x}+ (m+1)\frac{\sum_{i=m+1- \hat{x}}^{m} \delta^i}{1-\delta^{m+1}} = \left(\sum_{i=m+1- \hat{x}}^{m} \delta^i\right) \left(-\frac{\hat{x}}{\sum_{i=m+1- \hat{x}}^{m} \delta^i}+ \frac{m+1}{\sum_{i=0}^{m} \delta^i}\right)\leq 0
    \end{align*}
    which, in turn, holds because $\delta<1$ implies that the average of the negative terms is smaller than the average of the positive terms.
    Combining \cref{last_step_KKT_support,last_step_KKT_no_support}
    yields the desired result by \Cref{lem:CPUC-UB-framework}.
\end{proof}

The next theorem determines $\Capa(\Geom_{\mu,m})$ in the remaining regime where $\delta<1/\varphi$.
\begin{theorem}\label{thm:bounded-supp-other}
    Fix an integer $m>0$ and a real number $\mu\in(0,m/2)$.
    If $\delta< 1/\varphi$, then
    \begin{equation*}
        \Capa(\Geom_{\mu,m})=\alpha
    \end{equation*}
    with $\alpha$ the smallest positive solution to
    \begin{equation}
    \label{equation_capacity_finite_support}
        \frac{2^{\alpha\mu+\log\left(\frac{1-\delta}{1-\delta^{m+1}}\right)+\mu\log\delta}}{2^{\alpha}-1}=1.
    \end{equation}
    Furthermore, $\alpha>\log\varphi$.
\end{theorem}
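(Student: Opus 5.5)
Following the proof of \cref{thm:lower_bound_capacity_noise_infinite_support}, I will produce an output distribution of geometric shape that makes every divergence exactly linear in $x$, and then use \cref{lem:CPUC-UB-framework}. Write $c=\frac{1-\delta}{1-\delta^{m+1}}$, so that $P_Z(z)=c\,\delta^z$ on $\{0,\dots,m\}$ and $H(Z)=-\log c-\mu\log\delta$. For $\alpha$ the smallest positive solution of \cref{equation_capacity_finite_support}, set
\begin{equation*}
Q^\star(y)=2^{-\alpha(y-\mu)-H(Z)}=2^{-\alpha(y-\mu)+\log c+\mu\log\delta},\qquad y\geq 1 .
\end{equation*}
Summing the geometric series shows that $\sum_{y\geq1}Q^\star(y)=1$ is precisely \cref{equation_capacity_finite_support}. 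Since $\E[Y_x]=x+\mu$, we get $\KL(P_{Y_x}\|Q^\star)=\alpha(x+\mu-\mu)+H(Z)-H(Z)=\alpha x$ for every $x\geq1$. By the second part of \cref{lem:CPUC-UB-framework}, it then suffices to show that $Q^\star$ is realizable as an output distribution.

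For realizability I will use generating functions. $G_Y(w)=K\frac{w}{2^\alpha-w}$ with $K=2^{\alpha\mu}c\,\delta^\mu$, and $G_Z(w)=c\frac{1-(\delta w)^{m+1}}{1-\delta w}$. Hence
\begin{equation*}
G_X(w)=\frac{K}{c}\cdot\frac{w(1-\delta w)}{(2^\alpha-w)\,(1-(\delta w)^{m+1})}.
\end{equation*}
The factor $1/(1-(\delta w)^{m+1})=\sum_k(\delta w)^{k(m+1)}$ has nonnegative coefficients. So it suffices that $\frac{w(1-\delta w)}{2^\alpha-w}$ has nonnegative coefficients, as in the geometric case. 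Its coefficient of $w^n$ for $n\geq2$ is $2^{-\alpha n}(1-\delta 2^{\alpha})$, so the condition is $2^\alpha\leq 1/\delta$, i.e.\ $\alpha\leq\log(1/\delta)$. Also $G_X(1)=G_Y(1)/G_Z(1)=1$ and the constant term vanishes, so $P_X$ is a genuine distribution on $\N_{>0}$.

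The main obstacle is verifying $\alpha\leq\log(1/\delta)$, which I would do by a monotonicity argument. Let $F(t)=\frac{2^{t\mu}c\,\delta^\mu}{2^t-1}$. Then $F(t)\to\infty$ as $t\to0^+$, and at $t=\log(1/\delta)$ we get $F=\frac{c\,\delta}{1-\delta}=\frac{\delta}{1-\delta^{m+1}}$. I need this value to be at most $1$, i.e.\ $\delta+\delta^{m+1}\leq1$. The hypothesis $\delta<1/\varphi$ gives exactly this: $x\mapsto x+x^{m+1}$ is increasing, and $1/\varphi$ satisfies $x+x^{m+1}=1$ (divide $\varphi^{m+1}-\varphi^m-1=0$ by $\varphi^{m+1}$). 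So $F(\log(1/\delta))<1$, and by continuity the smallest root satisfies $\alpha<\log(1/\delta)$. I will need to double-check that this choice of $\alpha$ is the relevant one even if $F$ is not monotone. It is: \cref{lem:CPUC-UB-framework} only requires some valid $\lambda$ with realizable $Q^\star$, and the equality case pins down the capacity.

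Finally, I will show $\alpha>\log\varphi$ by evaluating $F$ at $t=\log\varphi$ and checking $F(\log\varphi)>1$. Since $F\to\infty$ at $0$, this forces the smallest root to exceed $\log\varphi$ only if $F>1$ on $(0,\log\varphi]$, which needs a convexity or log-convexity argument for $F$. An easier route avoids this entirely. Naive RLC coding achieves rate $\log\varphi$, so $\alpha=\Capa\geq\log\varphi$. For strictness, if equality held, the naive RLC input would be capacity-achieving. \cref{lem:AG_necessary} would then force its output to equal the unique capacity-achieving output $Q^\star$, which has full geometric form. But the RLC output $P_Y(y)=P_Z(\hat y)\varphi^{-y+\hat y}$ has ratio $P_Y(y+1)/P_Y(y)=\delta\neq 2^{-\alpha}=1/\varphi$ within a block, because $\delta<1/\varphi$. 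This is a contradiction, so $\alpha>\log\varphi$.
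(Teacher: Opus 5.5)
Your proof is correct, and its first half matches the paper's: the same $Q^\star$, the same identity $\KL(P_{Y_x}\|Q^\star)=\alpha x$, the same factorization of $G_X$ with the sufficient condition $\alpha\le\log(1/\delta)$, and the same evaluation $F(\log(1/\delta))=\delta/(1-\delta^{m+1})<1$. The paper additionally shows $F'(t)=F(t)\ln 2\,(\mu-1/(2^t-1))<0$ on $(0,\log(1/\delta)]$. You rightly note this is not needed: the intermediate value theorem already puts the smallest root below $\log(1/\delta)$, and any root there gives a realizable $Q^\star$ and hence equals the capacity.

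The genuine difference is the proof of $\alpha>\log\varphi$. The paper fixes $\alpha$ and differentiates the normalization function in $\delta$, obtaining $\ln 2\cdot g(\delta)\,(\alpha+\log\delta)\,d\mu/d\delta$. This is negative for $\delta<1/\varphi$ because $d\mu/d\delta>0$ by AM--GM, and the paper anchors it at $\delta=1/\varphi$, where $\alpha=\log\varphi$ solves the equation. You instead compare with the naive RLC input. Your route is shorter and calculus-free, and it explains \emph{why} the inequality is strict: the RLC output is not geometric within a block when $\delta\neq 1/\varphi$. The paper's calculus gives a little more, namely that $\alpha$ is decreasing in $\delta$, hence in $\mu$.

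One step needs a correct justification. \cref{lem:AG_necessary} only gives the KKT conditions of an optimal input relative to its own output. It does not assert that the capacity-achieving output is unique, so ``its output must equal $Q^\star$'' does not follow from it as cited.

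The fix is one line. Since $\KL(P_{Y_x}\|Q^\star)=\alpha x$ for all $x$, every input of finite mean satisfies
$I(X;Y)=\sum_x P_X(x)\KL(P_{Y_x}\|Q^\star)-\KL(P_Y\|Q^\star)=\alpha\E[X]-\KL(P_Y\|Q^\star)$.
All terms are finite because $-\log Q^\star(y)$ is affine in $y$. For the RLC input this reads $\KL(P_Y\|Q^\star)=(\alpha-\log\varphi)\E[X]$. That gives $\alpha\ge\log\varphi$ immediately, and your within-block ratio observation shows $P_Y\neq Q^\star$, giving strict inequality directly with no contradiction argument.

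Alternatively, apply \cref{lem:AG_necessary} to the RLC input using the divergence formula from the proof of \cref{thm:bounded-supp-RLC}. For $\delta<1/\varphi$ both factors multiplying the correction term there are strictly negative, so $\KL(P_{Y_x}\|P_Y)>x\log\varphi$ whenever $x\not\equiv 1\pmod{m+1}$.
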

\begin{proof}
First, we will show that the capacity of the additive-noise sticky channel with $\Geom_{\mu,m}$ noise is $\alpha$ where $\alpha$ satisfies \cref{equation_capacity_finite_support}.
By \Cref{lem:ANSC-to-ANMC}, $\Capa(\Geom_{\mu,m})$ equals the capacity per unit cost of the memoryless additive-noise channel with noise distribution $\Geom_{\mu,m}$.
We will determine this capacity per unit cost via \cref{lem:CPUC-UB-framework}.
Recall that $Z$ is distributed according to $\Geom_{\mu,m}$, then $P_Z(z)=\frac{1-\delta}{1-\delta^{m+1}}\delta^z$ for $z\in\{0,1,\dots,m\}$. and consider the probability distribution $Q^\star$ given by
\begin{equation}
\label{optimal_output_distribution_finite_support}
    Q^\star(y) = 2^{-\alpha\left(y-\mu\right)-H(Z)} = 2^{-\alpha\left(y-\mu\right)+\log\left(\frac{1-\delta}{1-\delta^{m+1}}\right)+\mu\log\delta}, \quad y\in\N{>0},
\end{equation}
with $\alpha>0$ the smallest positive solution to the equation
\begin{equation*}
    \sum_{y=1}^\infty Q^\star(y) = \frac{2^{\alpha\mu+\log\left(\frac{1-\delta}{1-\delta^{m+1}}\right)+\log(\delta)\mu}}{2^\alpha-1} = 1,
\end{equation*}
which yields \cref{equation_capacity_finite_support}.
Then, for all $x\in\N_{>0}$,
\begin{align*}
    \KL(P_{Y_x}\|Q^\star)&=-\sum_{y=x}^{x+m}P_{Y_x}(y)\log\left(Q^\star(y)\right)-H(Z)\nonumber\\
    &=-\sum_{y=0}^{m}P_{Z}(y)\log\left(Q^\star (x+y)\right)-H(Z)\nonumber\\
     &=\alpha x.
\end{align*}
Thus, if $Q^\star$ is induced as the output of the channel by some input distribution, \cref{lem:CPUC-UB-framework} implies that $\alpha$ is the capacity per unit cost of the additive-noise memoryless channel, which equals the capacity of the additive-noise sticky channel.
Next, we check that such an input $P^\star$ exists by studying its probability generating function. Note that
\begin{align*}
    G_Z(w)
    &=\frac{1-\delta}{1-\delta^{m+1}} \frac{1-\delta^{m+1}w^{m+1}}{1-\delta w},
\end{align*}
then
\begin{align}
\label{probability_generating_function_geometric_finite}
    G_X(w)&=\frac{G_Y(w)}{G_Z(w)}
    =\frac{2^{\alpha\mu-H(Z)-\alpha}(1-\delta^{m+1})}{1-\delta} g_A(w) g_B(w),
\end{align}
where
\begin{align*}
    g_A(w)=\frac{w(1-\delta w)}{1-2^{-\alpha}w}
\end{align*}
and
\begin{align*}
    g_B(w)=\frac{1}{1-\delta^{m+1}w^{m+1}}.
\end{align*}
We look for conditions to ensure that all coefficients in the power series expansion of $G_X$ are non-negative for $P^\star$ to be correctly defined.
Notice that
\begin{align*}
    g_A(w)&=w (1-\delta w) \sum_{k\in\N} 2^{-\alpha k} w^k\nonumber\\
    &= w\left(1+\sum_{k\in\N_{>0}} \left(2^{-\alpha k}-\delta 2^{-\alpha (k-1)}\right) w^{k}\right)
\end{align*}
and
\begin{align*}
    g_B(w)
    &=\sum_{k\in\N}\delta^{k(m+1)}w^{k(m+1)}.
\end{align*}
We have $g_A(0)=0$, $g_A^{(1)}(0)=1$, and $g_A^{(n)}(0)=n! 2^{-\alpha (n-1)}\left(1-\delta 2^{\alpha}\right)$ for $n\geq 2$.
Furthermore,
\begin{align*}
    g_B^{(n)}(0)&=\begin{cases}
        n! \delta^{n},  &\textrm{if $n \equiv 0 \pmod{m+1}$},\\
        0, &\textrm{otherwise}.
    \end{cases}
\end{align*}
Since $G_X^{(n)}(0)$ is a linear combination of $g_A^{(n)}(0)$ and $g_B^{(n)}(0)$, $P^\star$ is a valid distribution if and only if $\alpha\leq -\log(\delta)$. We check this, by recalling that $\alpha$ is the smallest solution to \cref{equation_capacity_finite_support} and letting
\begin{equation}
\label{function_solution_alpha_finite_support}
f(x)=\frac{2^{x\mu+\log\left(\frac{1-\delta}{1-\delta^{m+1}}\right)+\log(\delta)\mu}}{2^x-1}    
\end{equation}
be the content in \cref{equation_capacity_finite_support}. We verify the existence and uniqueness of $\alpha$ by noticing that
\begin{equation*}
    \lim\limits_{x\rightarrow 0}f(x)=+\infty \textnormal{ and } f(-\ln(\delta))=\frac{\delta}{1-\delta^{m+1}}<\frac{\frac{1}{\varphi}}{1-\frac{1}{\varphi^{m+1}}}=1,
\end{equation*}
while for all $x\leq-\ln(\delta)$,
\begin{align*}
f'(x)&=f(x)\left(\mu-\frac{1}{2^x-1}\right)\ln 2\nonumber\\
&\leq f(x)\left(\mu-\frac{1}{\frac{1}{\delta}-1}\right)\ln 2 \nonumber\\
&=-\frac{(m+1)\delta^{m+1}}{1-\delta^{m+1}}\nonumber\\
&<0.\nonumber
\end{align*}

Second, we will show that the capacity $\alpha>\log(\varphi)$.
Notice that for $\delta=1/\varphi$, the choice $\alpha=-\log\delta=\log\varphi$ makes $P^\star$ with probability generating function \cref{probability_generating_function_geometric_finite} to be the naive RLC coding \cref{rlc_coding_geometric_noise_finite_support}. Notice also, that we have shown that \cref{function_solution_alpha_finite_support} is decreasing with $\alpha$, thus, we can show that the smallest solution to \cref{equation_capacity_finite_support} is strictly greater than $\log(\varphi)$ for $0<\delta<1/\varphi$ by showing that 
\begin{equation*}
\label{function_solution_delta_finite_support}
g(\delta)=\frac{2^{\alpha\mu+\log\left(\frac{1-\delta}{1-\delta^{m+1}}\right)+\mu\log\delta}}{2^\alpha-1}    
\end{equation*}
strictly decreases with $\delta$ on $0\leq \delta< 1/\varphi$, for any $\alpha$.
To this end, we study the behavior of $g$ via the derivative
\begin{align}
\label{derivative_function_solution_delta_finite_support}
    g'(\delta)
    &= \ln(2)g(\delta)\left(\alpha+\ln(\delta)\right)\frac{\dd \mu}{\dd \delta}.
\end{align}
Recall that $\alpha\leq-\ln\delta$. To know the sign of $g'$, we are left to study
\begin{align*}
\frac{\dd \mu}{\dd \delta}&=\frac{1}{(1-\delta)^2}-\frac{(m+1)^2\delta^m}{(1-\delta^{m+1})^2}.
\end{align*}
Notice that $\dd \mu/\dd \delta>0$ is equivalent to $1/(1-\delta)-(m+1)\delta^{m/2}/(1-\delta^{m+1})>0$, which in turn is equivalent to
\begin{equation}\label{eq:AMGM}
    \frac{1}{m+1}\frac{1-\delta^{m+1}}{1-\delta} -\delta^{m/2}
    = \frac{\sum_{i=0}^m \delta^i}{m+1} - \left(\prod_{i=0}^m \delta^i\right)^{\frac{1}{m+1}}> 0.
\end{equation}
The last step of \cref{eq:AMGM} follows by the AM-GM inequality.
Therefore for any $0\leq \delta< 1/\varphi$, \cref{derivative_function_solution_delta_finite_support} is strictly negative.
This concludes the proof.
\end{proof}

Finally, to obtain the main result we show that truncated geometric noise is the worst-case in the setting of this section, and then invoke \cref{thm:bounded-supp-RLC,thm:bounded-supp-other}.
The argument is very similar to that in the proof of \cref{thm:lower_bound_capacity_noise_infinite_support}.
\begin{theorem}\label{thm:bounded-supp-small-mu}
    Fix an integer $m>0$ and a real number $\mu\in(0,m/2)$.
    Consider an arbitrary noise distribution $\cN$ of mean $\mu$ and such that $\max\supp(\cN)=m$.
    Recall that $\delta$ is the unique solution in $(0,1)$ to
    \begin{equation*}
        \mu=\frac{x}{1-x}-\frac{(m+1)x^{m+1}}{1-x^{m+1}}
    \end{equation*}
    and that $\varphi$ is the unique positive solution to $x^{m+1}-x^m-1=0$.
    Then,
    \begin{equation*}
        \Capa(\cN)\geq \Capa(\Geom_{\mu,m}) = \begin{cases}
            \log\varphi, &\textrm{if $\delta\geq 1/\varphi$,}\\
            \alpha, &\textrm{if $\delta<1/\varphi$,}
        \end{cases}
    \end{equation*}
    where $\alpha$ is the smallest positive solution to
    \begin{equation*}
        \frac{2^{\alpha\mu+\log\left(\frac{1-\delta}{1-\delta^{m+1}}\right)+\mu\log\delta}}{2^{\alpha}-1}=1.
    \end{equation*}
    Furthermore, $\alpha>\log\varphi$ when $\delta<1/\varphi$, and so naive RLC coding is never optimal in this regime. 
\end{theorem}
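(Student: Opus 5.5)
The equality $\Capa(\Geom_{\mu,m})=\log\varphi$ or $\alpha$, and the claim $\alpha>\log\varphi$ when $\delta<1/\varphi$, are exactly \cref{thm:bounded-supp-RLC,thm:bounded-supp-other}. So the only new content is the comparison $\Capa(\cN)\geq\Capa(\Geom_{\mu,m})$. I will prove it with the same change-of-measure argument as in the proof of \cref{thm:lower_bound_capacity_noise_infinite_support}, with $\Geom_{\mu,m}$ replacing $\Geom_\mu$.

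I would first treat the case $\delta<1/\varphi$. By \cref{thm:bounded-supp-other}, the output distribution $Q^\star(y)=2^{-\alpha(y-\mu)-H(Z)}$ with $Z\sim\Geom_{\mu,m}$ is realized by some input $P^\star$ on $\N_{>0}$. Feed $X\sim P^\star$ into the channel with noise $\cN$, and let $P_{Y|X}$, $Q_{Y|X}$ denote the transition laws for $\cN$ and $\Geom_{\mu,m}$. The chain of equalities \cref{eq:def-MI}--\cref{eq:def-KL-well-def} goes through verbatim, with one check: $\supp(\cN)\subseteq\{0,\dots,m\}$ and $\Geom_{\mu,m}$ has full support on $\{0,\dots,m\}$. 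Hence $Q_{Y|X}(x,y)>0$ whenever $P_{Y|X}(x,y)>0$, and $Q_{X|Y}(\cdot|y)$ dominates $P_{X|Y}(\cdot|y)$, so the KL term is well defined and nonnegative. Dropping it gives
\begin{equation*}
\Capa(\cN)\geq\frac{1}{\E[X]}\E_{P_{X,Y}}\left[\log\frac{Q_{Y|X}(X,Y)}{Q^\star(Y)}\right].
\end{equation*}
Now $\log Q_{Y|X}(x,y)=\log\frac{1-\delta}{1-\delta^{m+1}}+(y-x)\log\delta$ is affine in $y-x$ on the support. Likewise $-\log Q^\star(y)=\alpha(y-\mu)+H(Z)$, with $H(Z)=-\log\frac{1-\delta}{1-\delta^{m+1}}-\mu\log\delta$. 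Since $\E[Y-X]=\mu$ under $\cN$, the expectation equals $\alpha\E[X]$, and the bound becomes exactly $\alpha$.

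For $\delta\geq1/\varphi$ I would run the same argument with $P^\star$ the naive RLC input from \cref{rlc_coding_geometric_noise_finite_support}. Its output under $\Geom_{\mu,m}$ is $Q(y)=P_Z(\hat y)\varphi^{-y+\hat y}$, which has full support. It is simpler, though, to note that naive RLC coding is zero-error for any $\cN$ supported in $\{0,\dots,m\}$: since inputs are spaced $m+1$ apart, the input is determined by the output. This gives $\Capa(\cN)\geq\log\varphi$ directly, as in the introduction. I would use this direct argument and avoid the divergence computation.

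The main obstacle is the support and absolute-continuity bookkeeping in the first case. The argument needs $Q_{X|Y}$ to dominate $P_{X|Y}$, which holds precisely because the worst-case noise is fully supported on $\{0,\dots,m\}$ and $\cN$ cannot exceed $m$. The expectation identity then uses only the mean of $\cN$, so no further property of $\cN$ is needed. Finally, the last sentence of the theorem is immediate from \cref{thm:bounded-supp-other}.
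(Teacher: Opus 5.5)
Your proposal is correct and follows the paper's proof essentially step for step. The paper also handles the case $\delta\geq 1/\varphi$ through the zero-error naive RLC bound $\Capa(\cN)\geq\log\varphi$. For $\delta<1/\varphi$ it likewise feeds the $\Geom_{\mu,m}$-optimal input $P^\star$ through the $\cN$-channel, drops the nonnegative conditional divergence $\KL(P_{X|Y}\|Q_{X|Y}|P_Y)$, and uses $\E[Y-X]=\mu$ to obtain exactly $\alpha$. Your absolute-continuity check is the same justification the paper gives, only spelled out more explicitly: $\supp(\cN)\subseteq\{0,\dots,m\}$, while $\Geom_{\mu,m}$ and $Q^\star$ are fully supported.
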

\begin{proof}
    By \cref{lem:ANSC-to-ANMC} it suffices to focus on the capacity per unit cost.
    Since we always have $\Capa(\cN)\geq \log\varphi$ with $\varphi$ the unique positive solution to $x^{m+1}-x^m-1=0$ and $\Capa(\Geom_{\mu,m})=\log\varphi$ when $\delta\geq 1/\varphi$, we can focus only on choices of $m$ and $\mu$ such that $\delta<1/\varphi$. It remains to show that for $\delta<1/\varphi$ we have
    \begin{equation*}
        \Capa(\cN)\geq \Capa(\Geom_{\mu,m})
    \end{equation*}
for any noise distribution $\cN$ with mean $\mu$.
Recall that \Cref{lem:ANSC-to-ANMC}, $\Capa(\cN)$ equals the capacity per unit cost of the memoryless additive-noise channel with noise distribution $\cN$, which we denote by $\CPUC(\cN)$, which we lower bound.

In the proof of \cref{thm:bounded-supp-other}, we show that the channel output distribution that achieves the capacity per unit cost $\CPUC(\Geom_{\mu,m})$ is $Q^\star$ defined in \cref{optimal_output_distribution_finite_support}.
Denote its corresponding channel input distribution (with respect to the additive-noise channel with noise distribution $\Geom_{\mu,m}$) by $P^\star$.

We denote $P_{Y|X}$ to denote the transition probability rule for the additive-noise channel with noise distribution $\cN$ and $Q_{Y|X}$ to denote the transition probability rule for the additive-noise channel with noise distribution $\Geom_{\mu,m}$.
To simplify simplify the writing, we shorten $\frac{P_{Y|X}(x,y)}{P_Y(y)}$ as $\frac{P_{Y|X}}{P_Y}(x,y)$.
We will now lower bound $\CPUC(\cN)$ by the rate achieved by the input distribution $P^\star$:
    \begin{align}
        \CPUC(\cN)&\geq\frac{I(X;Y)}{\E[X]}\nonumber\\
        \label{use_size_finite_support}
        &= \frac{1}{\E[X]}\left(\KL(P_{X|Y}\|Q_{X|Y}|P_Y)+\E_{P_{X,Y}}\left[\log\left(\frac{Q_{Y|X}}{Q_{Y}}(X,Y)\right)\right]\right)\\
        &\geq \frac{1}{\E[X]} \E_{P_{X,Y}}\left[\log\left(\frac{Q_{Y|X}}{Q_{Y}}(X,Y)\right)\right]\nonumber\\
        &=\frac{1}{\E[X]}\left(\E_{P_{X,Y}}\left[\log\left(\frac{1-\delta}{1-\delta^{m+1}} \delta^{Y-X}\right)\right]- \E_{P_{X,Y}}\left[\log\left(2^{-\alpha(Y-\mu)-H(Z)}\right)\right]\right)\nonumber\\
        &=\frac{\log\left(\frac{1-\delta}{1-\delta^{m+1}}\right)+\mu \log\left(\delta\right)+\E_{P_{X,Y}}\left[\alpha(Y-\mu)+H(Z)\right]}{\E[X]}\nonumber\\
        &=\alpha,\nonumber
    \end{align}
    where \cref{use_size_finite_support} is well defined since $\Geom_{\mu,m}$ has support $\{0,1,\dots,m\}$.
\end{proof}

\section{Acknowledgements}

JR thanks Daniella Bar-Lev, Serge Kas Hanna, Olgica Milenkovic, and Ligong Wang for insightful discussions.

R.\ Con’s work on this paper was carried out while he was a postdoctoral researcher in the Faculty of Computer Science at the Technion -- Israel Institute of Technology. During that period, he was partially supported by the European Union (DiDAX, 101115134).
The work of C.\ Bouette, S.\ Pearson, and J.\ Ribeiro was funded by the European Union (LESYNCH, 101218842).
Views and opinions expressed are however those of the authors only and do not necessarily reflect those of the European Union or the European Research Council Executive Agency. Neither the European Union nor the granting authority can be held responsible for them.
The work of C.\ Bouette, S.\ Pearson, and J.\ Ribeiro was also funded by national funds through FCT – Fundação para a Ciência e a Tecnologia, I.P., and, when eligible, co-funded by EU funds under project/support UID/50008/2025 – Instituto de Telecomunicações, with DOI \href{https://doi.org/10.54499/UID/50008/2025}{10.54499/UID/50008/2025}.

\bibliographystyle{alpha}
\bibliography{refs}

@ARTICLE{AG93,
author={Khaled A. S. {Abdel-Ghaffar}},
journal={Electronics Letters},
title={Capacity per unit cost of a discrete memoryless channel},
year={1993},
volume={29},
number={2},
pages={142-144},
doi={10.1049/el:19930096},
ISSN={0013-5194},
}

@ARTICLE{Ver90,
  author={Verdú, Sergio},
  journal={IEEE Transactions on Information Theory}, 
  title={On channel capacity per unit cost}, 
  year={1990},
  volume={36},
  number={5},
  pages={1019-1030},
  doi={10.1109/18.57201}}

@ARTICLE{BA98,
  author={Bedekar, A.S. and Azizoglu, M.},
  journal={IEEE Transactions on Information Theory}, 
  title={The information-theoretic capacity of discrete-time queues}, 
  year={1998},
  volume={44},
  number={2},
  pages={446-461},
  doi={10.1109/18.661496}}

@ARTICLE{PG03,
  author={Prabhakar, B. and Gallager, R.},
  journal={IEEE Transactions on Information Theory}, 
  title={Entropy and the timing capacity of discrete queues}, 
  year={2003},
  volume={49},
  number={2},
  pages={357-370},
  doi={10.1109/TIT.2002.807287}}

@article{Arimoto72,
  title={An algorithm for computing the capacity of arbitrary discrete memoryless channels},
  author={Arimoto, Suguru},
  journal={IEEE Transactions on Information Theory},
  volume={18},
  number={1},
  pages={14--20},
  year={1972},
  publisher={IEEE}
}

@article{blahut1972,
  title={Computation of channel capacity and rate-distortion functions},
  author={Blahut, Richard E},
  journal={IEEE Transactions on Information Theory},
  volume={18},
  number={4},
  pages={460--473},
  year={1972}
}

@article{jimbo1979iteration,
  title={An iteration method for calculating the relative capacity},
  author={Jimbo, Masakazu and Kunisawa, Kiyonori},
  journal={Information and Control},
  volume={43},
  number={2},
  pages={216--223},
  year={1979},
  publisher={Elsevier}
}

@ARTICLE{CR21,
  author={Cheraghchi, Mahdi and Ribeiro, João},
  journal={IEEE Transactions on Information Theory}, 
  title={An Overview of Capacity Results for Synchronization Channels}, 
  year={2021},
  volume={67},
  number={6},
  pages={3207-3232},
  doi={10.1109/TIT.2020.2997329}}

@inproceedings{WGGH23,
  title={Embracing errors is more effective than avoiding them through constrained coding for {DNA} data storage},
  author={Weindel, Franziska and Gimpel, Andreas L. and Grass, Robert N. and Heckel, Reinhard},
  booktitle={2023 59th Annual Allerton Conference on Communication, Control, and Computing (Allerton)},
  pages={1--8},
  year={2023},
  organization={IEEE}
}

@article{YGM17,
  title={Portable and error-free {DNA}-based data storage},
  author={Yazdi, S. M. Hossein Tabatabaei and Gabrys, Ryan and Milenkovic, Olgica},
  journal={Scientific reports},
  volume={7},
  number={1},
  pages={5011},
  year={2017},
  publisher={Nature Publishing Group UK London}
}

@misc{CMRS26,
      author = {Yuan-Pon Chen and Olgica Milenkovic and João Ribeiro and Jin Sima},
      title = {Correcting Contextual Deletions in {DNA} Nanopore Readouts},
      note = {Preprint. \url{https://arxiv.org/abs/2602.05072}},
      year={2026}
}

@article{Mit09,
	title={A survey of results for deletion channels and related synchronization channels},
	author={Mitzenmacher, Michael},
	journal={Probability Surveys},
	volume={6},
	pages={1--33},
	year={2009},
	publisher={The Institute of Mathematical Statistics and the Bernoulli Society}
}

@article{Mit08,
	title={Capacity bounds for sticky channels},
	author={Mitzenmacher, Michael},
	journal={IEEE Transactions on Information Theory},
	volume={54},
	number={1},
	pages={72--77},
	year={2008},
	publisher={IEEE},
    doi={10.1109/TIT.2007.911291}
}

@ARTICLE{MTL12,
author={Hugues Mercier and Vahid Tarokh and Fabrice Labeau},
journal={IEEE Transactions on Information Theory},
title={Bounds on the capacity of discrete memoryless channels corrupted by synchronization and substitution errors},
year={2012},
volume={58},
number={7},
pages={4306-4330},
doi={10.1109/TIT.2012.2191682},
}

@article{ISW16,
	title={On the capacity of channels with timing synchronization errors},
	author={Iyengar, Aravind R. and Siegel, Paul H. and Wolf, Jack Keil},
	journal={IEEE Transactions on Information Theory},
	volume={62},
	number={2},
	pages={793--810},
	year={2016},
	publisher={IEEE},
    doi={10.1109/TIT.2015.2504358}
}

@ARTICLE{CR19b, 
author={Mahdi {Cheraghchi} and Jo{\~a}o {Ribeiro}}, 
journal={IEEE Transactions on Information Theory}, 
title={Sharp Analytical Capacity Upper Bounds for Sticky and Related Channels}, 
year={2019}, 
volume={65}, 
number={11}, 
pages={6950-6974}, 
doi={10.1109/TIT.2019.2920375}, 
ISSN={1557-9654}, 
month={Nov}}

@ARTICLE{CR26,
  author={Con, Roni and Ribeiro, João},
  journal={IEEE Transactions on Information Theory}, 
  title={Channels With Input-Correlated Synchronization Errors}, 
  year={2026},
  volume={72},
  number={8},
  pages={5442-5472},
  doi={10.1109/TIT.2026.3694251}}

@misc{conrad2004probability,
  title={Probability distributions and maximum entropy},
  author={Conrad, Keith},
  year={2004},
  note={Available at \url{https://kconrad.math.uconn.edu/blurbs/analysis/entropypost.pdf}}
}

@article{RKR18,
  title={From squiggle to basepair: computational approaches for improving nanopore sequencing read accuracy},
  author={Rang, Franka J. and Kloosterman, Wigard P. and {De Ridder}, Jeroen},
  journal={Genome biology},
  volume={19},
  number={1},
  pages={90},
  year={2018},
  publisher={Springer}
}

@ARTICLE{LM03,
  author={Lapidoth, Amos and Moser, Stefan M.},
  journal={IEEE Transactions on Information Theory}, 
  title={Capacity bounds via duality with applications to multiple-antenna systems on flat-fading channels}, 
  year={2003},
  volume={49},
  number={10},
  pages={2426-2467},
  doi={10.1109/TIT.2003.817449}}

@article{BSBHT13,
  title={Shining a light on dark sequencing: characterising errors in {Ion Torrent PGM} data},
  author={Bragg, Lauren M. and Stone, Glenn and Butler, Margaret K. and Hugenholtz, Philip and Tyson, Gene W.},
  journal={PLoS computational biology},
  volume={9},
  number={4},
  pages={e1003031},
  year={2013},
  publisher={Public Library of Science San Francisco, USA}
}

@article{CheRib19,
author = {Cheraghchi, Mahdi and Ribeiro, Jo\~{a}o},
title = {Improved Upper Bounds and Structural Results on the Capacity of the Discrete-Time Poisson Channel},
year = {2019},
issue_date = {July 2019},
publisher = {IEEE Press},
volume = {65},
number = {7},
issn = {0018-9448},
url = {https://doi.org/10.1109/TIT.2019.2896931},
doi = {10.1109/TIT.2019.2896931},
journal = {IEEE Transactions on Information Theory},
month = jul,
pages = {4052–4068},
numpages = {17}
}

@misc{Kov26,
      title={The Capacity of a Family of Sticky Channels}, 
      author={Mladen Kovačević},
      year={2026},
      note={\url{https://arxiv.org/abs/2607.28281}}, 
}

\appendix

\section{Capacity of additive-noise sticky channels and capacity per unit cost of additive-noise memoryless channels} \label{sec:ANSC-to-ANMC}

We now formalize the intuition that the capacity of additive-noise sticky channels equals the capacity of the corresponding additive-noise memoryless channel over the positive integers. We begin by recalling the framework of Verdú \cite{Ver90}, particularized to the cost function $c(x):=x$.

\begin{definition}[$(n,M,\nu,\eps)$ code]
\label{def:cost-code}
    Let $\Ch$ be a channel with input alphabet $\cA \subseteq \N_{>0}$. We say that $\cC \subseteq \cA^{n}$ is an \emph{$(n,M,\nu,\eps)$ code} with respect to $\Ch$ if it has $|\cC|=M$ with each codeword $c=(c_1,\dots,c_n) \in \cC$ satisfying
    \begin{equation*}
        \sum_{i=1}^{n} c_i \leq \nu,
    \end{equation*}
    and where the average decoding error probability (taken over the uniformly random choice of $c\in\cC$) is at most $\eps$.
\end{definition}

\begin{definition}[Achievable rate per unit cost]
    A rate $R > 0$ is \emph{$\eps$-achievable per unit cost} with respect to $\Ch$ if for every $\gamma > 0$ there exists $\nu_0 > 0$ such that, for all $\nu \geq \nu_0$, there is an $(n,M,\nu,\eps)$ code with respect to $\Ch$ satisfying $\log M > \nu (R-\gamma)$. We say that $R$ is achievable per unit cost with respect to $\Ch$ if it is $\eps$-achievable per unit cost for all $\eps > 0$.
\end{definition}

\begin{definition}[Capacity per unit cost]
    The \emph{capacity per unit cost of $\Ch$}, denoted by $\CPUC(\Ch)$, is the supremum of the achievable rates per unit cost with respect to $\Ch$.
\end{definition}

 It turns out that we may assume that $\nu$ scales linearly with the blocklength $n$, a fact that will be useful below.
\begin{lemma}[{\cite[Corollary of Theorem 2]{Ver90}}] \label{lem:cost-linear-in-blocklength}
    A rate $R > 0$ is \emph{achievable per unit cost} with respect to $\Ch$ if and only if for every $\eps,\gamma > 0$ there exist $c,\nu_0 > 0$ such that, for all $\nu \geq \nu_0$, there is an $(n,M,\nu,\eps)$ code with respect to $\Ch$ satisfying $\log M > \nu (R-\gamma)$ and $\nu \leq cn$.
\end{lemma}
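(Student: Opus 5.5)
The statement to prove is \cref{lem:cost-linear-in-blocklength}. The direction ``if'' is immediate: a code with the additional property $\nu\le cn$ is in particular an $(n,M,\nu,\eps)$ code with $\log M>\nu(R-\gamma)$, so $R$ is $\eps$-achievable for every $\eps$. For ``only if'', I would start from an arbitrary family of codes witnessing achievability and show that the blocklength can always be increased without hurting cost, rate or error probability. Then I can force $n\ge \nu/c$ for a fixed constant $c$.

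The key observation is that the input alphabet $\cA\subseteq\N_{>0}$ has positive cost, $c(a)=a\ge 1$. Hence any codeword of an $(n,M,\nu,\eps)$ code satisfies $n\le \sum_i c_i\le \nu$, so $n\le \nu$ always. In the language of the statement, this is the opposite inequality to the one we want, $\nu\le cn$. So I must lengthen codes. I would fix some symbol $a_0\in\cA$, say the smallest element, and pad every codeword with $k$ copies of $a_0$. The decoder ignores the last $k$ outputs of the memoryless channel. This adds cost $k a_0$, which is the issue.

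To handle this, I would not pad with full cost but instead rescale. Given $\eps,\gamma$, take the original code at cost budget $\nu'=\nu/(1+\theta)$ for a small $\theta>0$. I would then append $n'$ padding symbols, where $n'$ is at most $\nu'$, adding cost at most $a_0\nu'$. Choosing $\theta$ as a function of $a_0$ does not obviously keep the rate, though. The simpler route is to bound the blocklength from below directly. If a code has $n<\nu/c$, I pad with $\lceil\nu/c\rceil-n$ copies of $a_0$. The new cost budget is $\nu(1+a_0/c)$, and the rate per unit cost becomes $\log M/(\nu(1+a_0/c))>(R-\gamma)/(1+a_0/c)$. Choosing $c$ large so that $(R-\gamma)/(1+a_0/c)>R-2\gamma$ gives codes with cost $\tilde\nu=\nu(1+a_0/c)$ and blocklength $\tilde n\ge \nu/c\ge \tilde\nu/(c+a_0)$. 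Taking the constant $c+a_0$ and replacing $\gamma$ by $\gamma/2$ finishes the argument, with $\nu_0$ rescaled accordingly.

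The main obstacle is bookkeeping rather than substance. Every $\tilde\nu\ge\tilde\nu_0$ must be realized, not just those of the form $\nu(1+a_0/c)$. This is fine because the original property holds for all $\nu\ge\nu_0$, so I set $\nu=\tilde\nu/(1+a_0/c)$. I also need the error probability to be unchanged by padding, which holds since the channel is memoryless and the padded outputs are simply discarded.
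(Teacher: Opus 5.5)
Your proof is correct, but it takes a different route from the paper. The paper gives no self-contained argument. It cites Verdú's corollary, which only provides codes with $n\le s\nu$, and remarks that inspecting Verdú's proof of Theorem~2 shows that $\nu\le cn$ can be imposed as well. The reason is that codes there come from the capacity--cost function at a fixed cost $\beta$ per symbol, so $\nu$ is proportional to $n$.

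Your padding reduction instead treats an arbitrary achieving family as a black box:
\begin{itemize}
\item You append $\lceil \nu/c\rceil-n<\nu/c$ copies of the cheapest symbol $a_0$ (this uses $n\ge 1$) and discard the corresponding outputs.
\item By memorylessness, $M$ and the average error probability are unchanged.
\item The cost rises to at most $\nu(1+a_0/c)$, and you get $\tilde\nu\le (c+a_0)\tilde n$.
\item The multiplicative rate loss $1/(1+a_0/c)$ is absorbed into $\gamma$ by taking $c$ large.
\end{itemize}
Codes that already satisfy $n\ge\nu/c$ need no padding and are simply relabelled with budget $\tilde\nu$, since $(c+a_0)n\ge\tilde\nu$ for them too. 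You leave this case implicit, but it is immediate.

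What your route buys is a complete, elementary proof that depends neither on Verdú's coding theorem nor on reading its proof. It uses only memorylessness and the existence of one finite-cost symbol. What the paper's route buys is codes with $n$ and $\nu$ exactly proportional, i.e., $n\le s\nu$ and $\nu\le cn$ simultaneously. In the present setting that extra is free anyway: every symbol costs at least $1$, so $n\le\nu$ holds automatically. Your construction therefore already gives the two-sided relation $n\le\nu\le c'n$ used in the proof of \cref{lem:ANSC-to-ANMC}.
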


\begin{remark}
    In fact, \cite[Corollary of Theorem 2]{Ver90} only states that one can restrict the blocklength $n$ to grow at most linearly with the cost $\nu$, and not the converse. However, inspection of the proof easily reveals that the converse also holds (i.e., one can restrict $\nu$ to grow at most linearly in $n$, as described in \Cref{lem:cost-linear-in-blocklength} and implicitly noted by Mitzenmacher \cite{Mit08}).
\end{remark}

\subsection{Proof of \Cref{lem:ANSC-to-ANMC}} \label{proof:ANSC-to-ANMC}

    As mentioned before, this argument is analogous to that of Mitzenmacher~\cite[proof of Theorem 2.1]{Mit08}.
    We reproduce it here for completeness.

    We begin by noting that without loss of generality we may assume that all inputs to the $\ANSC_{\cN}$ channel begin with a $0$, as this does not change the channel capacity.
    Under this condition, there is a straightforward one-to-one correspondence between inputs to each channel that follows by 
    mapping an input $x=0^{\ell_1} 1^{\ell_2} 0^{\ell_3}\dots$ to the $\ANSC_{\cN}$ to its ``runlength-encoding'' $(\ell_1,\ell_2,\ell_3,\dots)$, which will be sent through the $\ANMC_{\cN}$. 
    Then, the decoding error probability of a code over the $\ANSC_{\cN}$ equals the decoding error probability of its runlength-encoding over the $\ANMC_{\cN}$.

    We now show that any rate $R$ achievable per unit cost in $\ANMC_{\cN}$ is also achievable (in the usual sense) in $\ANSC_{\cN}$. Let $R$ be achievable per unit cost in $\ANMC_{\cN}$. 
    By \Cref{lem:cost-linear-in-blocklength}, for every $\eps,\gamma > 0$, there is a constant $c > 0$ such that, for all $\nu$ large enough, there exists an $(n,M,\nu,\eps)$ code $\cC_{\N}$ for $\ANMC_{\cN}$ satisfying $\log M > \nu (R - \gamma)$ and $n \leq \nu \leq c n$. By the one-to-one correspondence mentioned above, $\cC_{\N}$ is then mapped to a binary code $\cC$ with codeword lengths between $n$ and $\nu\leq c n$ and average error probability $\eps$ in $\ANSC_{\cN}$. We now define $\cC'$ to contain the codewords of $\cC$ with decoding error probability at most $2 \eps$. An averaging argument gives that $M' := |\cC'| \geq M/2$. We further restrict ourselves to the set $\cC''$ of codewords of $\cC'$ of a fixed length $\ell$ with $n \leq \ell \leq \nu$, where $\ell$ is chosen to maximize $\log |\cC''|$. We then have that
    \begin{equation*}
        \frac{\log |\cC''|}{\ell} \geq \frac{\log M'}{\ell} - \frac{\log (\nu - n + 1)}{\ell} \geq \frac{\log M}{\ell} - \frac{1 + \log (\nu - n + 1)}{\ell} > R-\gamma-o(1).
    \end{equation*}
    Thus, for $n$ large enough, there exists a code for $\ANSC_{\cN}$ with rate at least $R - 2\gamma$ and maximum decoding error probability $2 \eps$, as desired.

    By a similar argument, we can show that any rate $R$ achievable in $\ANSC_{\cN}$ is also achievable per unit cost in $\ANMC_{\cN}$. To this end, let $R$ be an achievable rate in $\ANSC_{\cN}$. Then, for every $\eps, \gamma > 0$, for $n$ large enough, there exists a code $\cC$ for $\ANSC_{\cN}$ of size $M$ with blocklength $n$, maximum decoding error probability $\eps$, and such that $\log M > n(R-\gamma)$. 
    Consider the code $\cC_{\N}$ obtained from $\cC$ via runlength-encoding as described above.
    Then, $\cC_{\N}$ has average decoding error at most $\eps$ in $\ANMC_{\cN}$, constant codeword cost $n$, and the length of each codeword is at most $n$. Thus, we take $\cC_{\N}'$ to be the subset of $\cC_{\N}$ obtained by choosing the codewords of fixed length $\ell \leq n$, where $\ell$ is chosen to maximize $\log |\cC_{\N}'|$. We now have that
    \begin{equation*}
        \frac{\log |\cC_{\N}'|}{n} \geq \frac{\log M -\log n}{n} > R-\gamma-o(1),
    \end{equation*}
    and the result follows. \qed

\section{Proof of \Cref{lem:AG_necessary}} \label{sec:AG-necessary}

In our proof, we shall invoke definitions and results from \cite[Appendix B]{CheRib19}. For completeness, we now state the necessary tools, specialized to our use case. We fix a cost function $c$, and let $\Omega$ denote the space of all probability distributions over the positive integers with finite expected cost (note that $\Omega$ is convex). For a functional $f: \Omega \to \R$, the \emph{weak derivative} of $f$ at $P_X \in \Omega$ in the direction of the distribution $Q$ is given by
    \begin{equation*}
        f_{P_X}'(Q) := \lim_{\theta \to 0^+} \frac{f((1-\theta)P_X + \theta Q) - f(P_X)}{\theta}.
    \end{equation*}
\begin{lemma}[{\cite[Lemma 18]{CheRib19}}]\label{lem:maximizer-derivatives}
    If $f: \Omega \to \R$ is concave, achieves its maximum at $P_X \in \Omega$, and $f_{P_X}'(Q)$ exists, then
    \begin{equation*}
        f_{P_X}'(Q) \leq 0.
    \end{equation*}
\end{lemma}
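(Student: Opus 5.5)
The plan is to argue directly from the definition of the weak derivative, using only that $\Omega$ is convex and that $P_X$ maximizes $f$ over $\Omega$. Concavity of $f$ should not be needed for this direction: it matters elsewhere, where one wants the converse, or where one needs existence of the limit defining $f_{P_X}'(Q)$.

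\emph{Step 1 (feasibility of the perturbation).} For every $\theta\in(0,1]$ consider $P_\theta:=(1-\theta)P_X+\theta Q$. Since $\Omega$, the set of distributions on the positive integers with finite expected cost, is convex, $P_\theta\in\Omega$. Concretely, $P_\theta$ is a probability distribution, and its expected cost is $(1-\theta)\E_{P_X}[c]+\theta\,\E_Q[c]<\infty$.

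\emph{Step 2 (sign of the difference quotients).} Because $P_X$ is a maximizer of $f$ on $\Omega$, we get $f(P_\theta)\le f(P_X)$ for each such $\theta$. Since $\theta>0$, it follows that
\begin{equation*}
    \frac{f(P_\theta)-f(P_X)}{\theta}\le 0 \qquad \text{for all } \theta\in(0,1].
\end{equation*}

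\emph{Step 3 (passing to the limit).} By hypothesis the limit $\theta\to0^+$ of these quotients exists and equals $f_{P_X}'(Q)$. A limit of non-positive reals is non-positive, so $f_{P_X}'(Q)\le 0$.

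There is no real obstacle. The only point to check is that the perturbed distributions stay in the domain $\Omega$ on which $f$ is maximized, which Step 1 does. One should also note that the statement assumes existence of the derivative; if it were not assumed, the same argument would still show $\limsup_{\theta\to0^+}$ of the quotients is $\le 0$.
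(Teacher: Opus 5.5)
Your proof is correct. The paper does not prove this lemma itself but imports it from Cheraghchi and Ribeiro, and your argument is the standard proof of that result: convexity of $\Omega$ keeps $(1-\theta)P_X+\theta Q$ feasible, maximality makes every difference quotient nonpositive, and a limit of nonpositive quotients is nonpositive. Your remarks that concavity is not needed for this direction, and that the same argument rules out a limit of $+\infty$ (which is how the lemma is used in the proof of \cref{lem:AG_necessary}), are also correct.
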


For a fixed channel $\Ch$, we define the mutual information functional $I : \Omega \to [0,\infty]$ as the map $P_X \mapsto \sum_{x \in \supp(P_X)} P_X(x) \cdot \KL(P_{Y_x}\|P_Y)$, where $P_Y$ is the output distribution corresponding to input $P_X$ given by the channel law for $\Ch$.

\begin{lemma}[{\cite[Lemma 20, extended]{CheRib19}}]\label{lem:MI-derivative}
    Let $\Ch$ be a memoryless channel with input and output alphabets $\cX,\cY \subseteq \N_{>0}$, respectively, and with mutual information functional $I(\cdot)$. Assume that $\E_{X \sim Q} [H(P_{Y_X})] < \infty$, where $Q \in \Omega$. Let $P_X \in \Omega$ with corresponding output distribution $P_Y$ be such that $I(P_X) < \infty$.
    Then, if
    \begin{equation*}
        \sum_{x \in \supp(Q)} Q(x) \KL(P_{Y_x}\|P_Y) < \infty,
    \end{equation*}
    $I_{P_X}'(Q)$ exists and
    \begin{equation*}
        I_{P_X}'(Q) = \sum_{x \in \supp(Q)} Q(x) \KL(P_{Y_x}\|P_Y) - I(P_X).
    \end{equation*}
    Moreover, if $Q = \delta_n$ is a point mass on $n \in \cX$, then $I_{P_X}'(\delta_n) = \infty$ whenever $\KL(P_{Y_n}||P_Y) = \infty$.
\end{lemma}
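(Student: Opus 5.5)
The plan is to write $P_\theta:=(1-\theta)P_X+\theta Q$ for $\theta\in(0,1)$. Its output distribution is $P_Y^\theta=(1-\theta)P_Y+\theta Q_Y$, where $Q_Y$ is the output distribution induced by $Q$. Everything rests on the ``golden formula''
\begin{equation*}
    \sum_x P(x)\,\KL(P_{Y_x}\|R)=I(P)+\KL(P_{Y'}\|R),
\end{equation*}
valid for any input $P$ with output $P_{Y'}$, any output distribution $R$, and $I(P)<\infty$. I will check this identity termwise, writing each divergence as cross-entropy minus $H(Y_x)$; the finite-entropy assumptions make this legitimate on $\supp(P_X)\cup\supp(Q)$. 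Applying the definition of $I(P_\theta)$ and splitting the sum, then using the golden formula on the $P_X$ part with $R=P_Y^\theta$, gives
\begin{equation*}
    \frac{I(P_\theta)-I(P_X)}{\theta}=-I(P_X)+\frac{1-\theta}{\theta}\KL(P_Y\|P_Y^\theta)+\sum_{x}Q(x)\KL(P_{Y_x}\|P_Y^\theta).
\end{equation*}
It then suffices to show that the middle term tends to $0$ and the last term tends to $\sum_x Q(x)\KL(P_{Y_x}\|P_Y)$.

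For the middle term I will use $\log t\le t-1$ (in natural units), which gives
\begin{equation*}
    0\le \frac{\KL(P_Y\|P_Y^\theta)}{\theta}\lesssim\sum_{y\in\supp P_Y}P_Y(y)\,\frac{P_Y(y)-Q_Y(y)}{P_Y^\theta(y)}.
\end{equation*}
The summands converge pointwise to $P_Y(y)-Q_Y(y)$. Their positive parts are dominated by $P_Y(y)/(1-\theta)$, so dominated convergence applies to them, and Fatou's lemma handles the negative parts. This yields a $\limsup$ of at most $1-Q_Y(\supp P_Y)$. The hypothesis $\sum_x Q(x)\KL(P_{Y_x}\|P_Y)<\infty$ forces $Q_Y\ll P_Y$, so this limit is $0$.

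For the last term I will write $\KL(P_{Y_x}\|P_Y^\theta)=\sum_y P_{Y_x}(y)\log\frac{1}{P_Y^\theta(y)}-H(Y_x)$. Since $P_Y^\theta\ge(1-\theta)P_Y$, the nonnegative cross-entropy integrands satisfy $\log\frac{1}{P_Y^\theta(y)}\le\log\frac{1}{P_Y(y)}-\log(1-\theta)$, and they converge pointwise. The bound is summable against $Q(x)P_{Y_x}(y)$, because $\E_{X\sim Q}[H(P_{Y_X})]<\infty$ and $\sum_x Q(x)\KL(P_{Y_x}\|P_Y)<\infty$ together make the $Q$-averaged cross-entropy against $P_Y$ finite. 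Dominated convergence then gives the limit, and subtracting the finite quantity $\E_Q[H(Y_X)]$ yields the formula for $I'_{P_X}(Q)$.

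For the final claim take $Q=\delta_n$ with $\KL(P_{Y_n}\|P_Y)=\infty$. Dropping the nonnegative term $\KL(P_Y\|P_Y^\theta)$ gives
\begin{equation*}
    \frac{I(P_\theta)-I(P_X)}{\theta}\ge -I(P_X)+\KL(P_{Y_n}\|P_Y^\theta).
\end{equation*}
Each $\KL(P_{Y_n}\|P_Y^\theta)$ is finite, since $P_Y^\theta\ge\theta P_{Y_n}$. As $P_Y^\theta\to P_Y$ pointwise, lower semicontinuity of divergence (Fatou on the cross-entropy) forces $\KL(P_{Y_n}\|P_Y^\theta)\to\infty$, so the derivative is $+\infty$.

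I expect the main obstacle to be the middle term: showing $\KL(P_Y\|P_Y^\theta)=o(\theta)$ rather than merely $O(\theta)$. This is exactly where the finiteness hypothesis must be used, to rule out $Q_Y$ placing mass outside $\supp P_Y$.
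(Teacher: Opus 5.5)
Your argument is correct, and it takes a more self-contained route than the paper.

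\textbf{What the paper does.} It quotes the general formula from \cite[Lemma~20]{CheRib19} and only sketches the final claim. It writes the difference quotient as $\KL(P_{Y_n}\|Q_\theta)-I(P_X)$ plus the bracket $\frac{1-\theta}{\theta}\sum_x P_X(x)\bigl(\KL(P_{Y_x}\|Q_\theta)-\KL(P_{Y_x}\|P_Y)\bigr)$. It then sends the first divergence to $\infty$ by Fatou, exactly as you do. Finally it asserts, by analogy with \cite[Eq.~(44)]{CheRib19}, that the bracket tends to $0$.

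\textbf{What your route buys.} Your golden-formula step identifies this bracket as $\frac{1-\theta}{\theta}\KL(P_Y\|P_Y^\theta)$, where $Q_\theta=P_Y^\theta$.
\begin{itemize}
\item For the general formula, this gives a short direct proof. The hypothesis $\sum_x Q(x)\KL(P_{Y_x}\|P_Y)<\infty$ enters exactly through $Q_Y\ll P_Y$, which makes the term $o(\theta)$.
\item For the final claim, you only need the term to be nonnegative, and that is the right property to use. Combining $-\ln s\ge 1-s$ with your upper bound gives $\theta^{-1}\KL(P_Y\|P_Y^\theta)\to\log e\cdot P_{Y_n}(\cY\setminus\supp P_Y)$. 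This limit is strictly positive whenever $P_{Y_n}\not\ll P_Y$. For example, for the identity channel on $\{1,2\}$ with $P_X=\delta_1$ and $n=2$, the bracket equals $\frac{1-\theta}{\theta}\log\frac{1}{1-\theta}\to\log e$.
\end{itemize}
This support-mismatch case is exactly the one the paper later feeds into \cref{lem:AG_necessary}, for instance in \cref{lem:naiveRLCoptimality}. So the paper's assertion that the bracket tends to $0$ is false in general, although its conclusion is unaffected. The paper's approach gains only brevity, by deferring to the reference.

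\textbf{One small repair.} The lemma assumes nothing about $H(P_{Y_x})$ for $x\in\supp(P_X)$; it only assumes $I(P_X)<\infty$. So you cannot verify the golden formula there by writing each divergence as cross-entropy minus entropy, since this could be $\infty-\infty$. Instead, split
$\log\frac{P_{Y_x}(y)}{R(y)}=\log\frac{P_{Y_x}(y)}{P_Y(y)}+\log\frac{P_Y(y)}{R(y)}$
inside the double sum weighted by $P_X(x)P_{Y_x}(y)$. Since $[\ln t]^+\le t$, the negative parts of each piece, and of their sum, have total weight at most $\log e$. Hence Fubini applies, and the two pieces sum to $I(P_X)$ and $\KL(P_Y\|R)$, respectively. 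Your other uses of entropies, including $H(P_{Y_n})<\infty$ in the final step, are covered by the hypothesis $\E_{X\sim Q}[H(P_{Y_X})]<\infty$.
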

\begin{proof}
    We only sketch the proof of the last statement, since the rest has been shown in \cite[Lemma 20]{CheRib19}.
    Assume $\KL(P_{Y_n}||P_Y)=\infty$, and let $Q_{\theta} = (1-\theta)P_Y + \theta P_{Y_n}$ be the output distribution corresponding to input $(1-\theta)P_X + \theta \delta_n$, for $\theta \in [0,1]$. We note that
    \begin{align} \label{eq:I_convex_decomp}
        \frac{I((1-\theta)P_X + \theta \delta_n) - I(X)}{\theta} &= \KL(P_{Y_n}||Q_{\theta}) - I(P_X) \notag \\
        &+ \frac{1-\theta}{\theta} \left[ \sum_{x \in \supp(P_X)} P_X(x) \left( \KL(P_{Y_x}||Q_{\theta}) - \KL(P_{Y_x}||P_Y) \right) \right].
    \end{align}
    Moreover, considering a sequence $\{\theta_k\}_{k \in \N_{>0}}$ such that $\theta_k \to 0^+$ as $k \to \infty$, and letting $f(y) = \log \left( \frac{1}{P_Y(y)} \right)$ and $f_k(y) = \log \left( \frac{1}{Q_{\theta_k}(y)} \right)$, we have $f_k \to f$ pointwise (in particular, we have $f(y) = \liminf_{k \to \infty} f_k(y) \in [0,\infty]$) for all $y \in \supp(P_{Y_n})$.
    By Fatou's lemma,
    \begin{equation} \label{eq:Fatou}
        \sum_{y \in \supp(P_{Y_n})} P_{Y_n}(y) \log (1/P_Y(y)) \leq \liminf_{k \to \infty} \sum_{y \in \supp(P_{Y_n})} P_{Y_n}(y) \log (1/Q_{\theta_k}(y)).
    \end{equation}
    Since the left-hand side of \Cref{eq:Fatou} is infinite (because $\KL(P_{Y_n}||P_Y) = \infty$ and $H(P_{Y_n})<\infty$), we conclude that
    \begin{equation} \label{eq:infiniteLHS}
        \KL(P_{Y_n}||Q_{\theta}) = -H(P_{Y_n}) + \sum_{y \in \supp(P_{Y_n})} P_{Y_n}(y) \log (1/Q_{\theta}(y)) \to \infty
    \end{equation}
    as $\theta \to 0^+$.
    Since $I(P_X)<\infty$, by \Cref{eq:I_convex_decomp} it is enough to show that
    \begin{equation*}
       \lim_{\theta \to 0^+} \frac{1-\theta}{\theta} \left[ \sum_{x \in \supp(P_X)} P_X(x) \left( \KL(P_{Y_x}||Q_{\theta}) - \KL(P_{Y_x}||P_Y) \right) \right] = 0.
    \end{equation*}
    The proof of this fact is analogous to the proof of~\cite[Equation (44)]{CheRib19}, and we do not repeat it here.
\end{proof}
\begin{remark}
    The original statement of \cite[Lemma 20]{CheRib19} does not include the condition that $\E_{X \sim Q} [H(P_{Y_X})] < \infty$. However, upon inspection of the proof, this condition is implicitly assumed in~\cite[Equation (41)]{CheRib19}, and appears to be necessary for the proof to go through.
\end{remark}

\begin{proof}[Proof of \Cref{lem:AG_necessary}]    
    Let $E : \Omega \rightarrow \R$ denote the functional given by $E(Q) = \sum_{x \in \N_{>0}} Q(x) c(x)$, and let $I$ be the mutual information functional for $\Ch$. Since $P_X$ is capacity-achieving, we have that $P_X$ maximizes the concave functional $f : \Omega \to \R$ defined as $f(Q) = I(Q) - \lambda E(Q)$, with $\lambda = \CPUC_c(\Ch)$, and that $f(P_X) = 0$ (the concavity of $f$ follows from the concavity of $I$ and the linearity of $E$).

    By \Cref{lem:MI-derivative}, if $\KL(P_{Y_x}||P_Y)=\infty$ for any $x \in \cX$, then $I_{P_X}'(\delta_x) = \infty$, which implies $f_{P_X}'(\delta_x)= \infty$, and so by \Cref{lem:maximizer-derivatives} $P_X$ is not capacity-achieving (a contradiction). Thus, we may assume that $\KL(P_{Y_x}||P_Y) < \infty$ for all $x \in \cX$.
    
    Now, take any $x \in \cX$ and let $\delta_x$ be a point mass on $x$.
    Since $\KL(P_{Y_x}\|P_Y) < \infty$ and $H(P_{Y_x}) < \infty$, \Cref{lem:MI-derivative} implies that
    \begin{equation*}
        I_X'(\delta_x) = \KL(P_{Y_x}\|P_Y) - I(P_X) = \KL(P_{Y_x}\|P_Y) - \lambda E(P_X),
    \end{equation*}
    where we used the fact that $I(P_X) = \lambda E(P_X)$ since $P_X$ is capacity-achieving.
    Moreover,
    \begin{align*}
        E_{P_X}'(\delta_x) &= \lim_{\theta \to 0^+} \frac{E(P_X(1-\theta) + \theta \delta_x) - E(P_X)}{\theta} \\
        &= \lim_{\theta \to 0^+} \frac{\theta c(x) - \theta E(P_X)}{\theta} \\
        &= c(x) - E(P_X).
    \end{align*}
    Thus,
    \begin{equation*}
        f_{P_X}'(\delta_x) = I_{P_X}'(\delta_x) - \lambda E_{P_X}'(\delta_x) = \KL(P_{Y_x}\|P_Y) - \lambda c(x).
    \end{equation*}
    Invoking \Cref{lem:maximizer-derivatives}, we conclude that
    \begin{equation} \label{eq:DKL_leq_lambda_c(x)}
        \KL(P_{Y_x}\|P_Y) \leq \lambda c(x)
    \end{equation}
    for all $x \in \cX$.

    Now, let $x \in \supp(P_X)$. If $P_X(x) = 1$, then 
    $\lambda = I(P_X)/E(P_X) = \KL(P_{Y_x}\|P_Y)/c(x) = 0$,
    so we may assume $P_X(x) < 1$. Let $P_{X_x}$ be the distribution satisfying $P_{X_x}(x) = 0$ and $P_{X_x}(n) = P_X(n)/(1-P_X(x))$ for $n \neq x$ (intuitively, $P_{X_x}$ is obtained from $P_X$ by removing its mass at $x$ and distributing it evenly across the remaining support points).
    Note that
    \begin{align*}
        \sum_{n \in \supp(P_{X_x})} P_{X_x}(n) \KL(P_{Y_n}\|P_Y) =& \frac{1}{1-P_X(x)} \sum_{n \in \supp(P_X)} P_{X}(n) \KL(P_{Y_n}\|P_Y) \\
        &- \frac{P_X(x)}{1-P_X(x)} \KL(P_{Y_x}\|P_Y) \\
        =& \frac{1}{1-P_X(x)} \cdot \left( I(P_X) - P_X(x) \KL(P_{Y_x}\|P_Y) \right) <\infty.
    \end{align*}
    Since $\sum_{x \in \cX} P_X(x) H(P_{Y_x}) < \infty$, it easily follows that $\sum_{x \in \cX} P_{X_x}(x) H(P_{Y_x}) < \infty$. Once more applying \Cref{lem:MI-derivative}, we have
    \begin{equation} \label{eq:MI-part}
        I_{P_X}'(P_{X_x}) = \frac{P_X(x)}{1-P_X(x)} \cdot \left( I(P_X) - \KL(P_{Y_x}\|P_Y) \right).
    \end{equation}
    A straightforward computation yields $E(P_{X_x}) = \frac{E(P_X) - P_X(x) c(x)}{1-P_X(x)}$. Thus,
    \begin{align}\label{eq:E-part}
        E_{P_X}'(P_{X_x}) &= \lim_{\theta \to 0^+} \frac{E(P_X(1-\theta) + \theta P_{X_x}) - E(P_X)}{\theta} \notag \\
        &= \lim_{\theta \to 0^+} \frac{\theta E(P_{X_x}) - \theta E(P_X)}{\theta} \notag \\
        &= \frac{P_X(x)}{1-P_X(x)} \cdot \left( E(P_X) - c(x) \right).
    \end{align}
    Joining \Cref{eq:MI-part,eq:E-part} and recalling that $I(P_X) = \lambda E(P_X)$, we obtain
    \begin{equation*}
        f_{P_X}'(P_{X_x}) = \frac{P_X(x)}{1-P_X(x)} \cdot \left( \lambda c(x) - \KL(P_{Y_x}\|P_Y) \right).
    \end{equation*}
    By \Cref{lem:maximizer-derivatives}, it must hold that
    \begin{equation} \label{eq:reverse-AG}
        \lambda c(x) \leq \KL(P_{Y_x}\|P_Y).
    \end{equation}
    Since we have seen in \Cref{eq:DKL_leq_lambda_c(x)} that the reverse inequality must also hold, \Cref{eq:reverse-AG} holds with equality for $x \in \supp(P_X)$, as desired.
\end{proof}

\end{document}